\newtheorem{theorem}{Theorem}
\newtheorem{lemma}[theorem]{Lemma} 
\newtheorem{proposition}[theorem]{Proposition} 
\newtheorem{corollary}[theorem]{Corollary}
\theoremstyle{definition}
\title{
Inheritance of Convexity for the $\tilde{\mathcal{P}}_{\min}$-Restricted Game
}
\author{A. Skoda\thanks
{Corresponding
    author. Universit\'e de Paris I, Centre d'Economie de la Sorbonne, 106-112
    Bd de l'H\^opital, 75013 Paris, France. E-mail: 
		{\tt
      alexandre.skoda@univ-paris1.fr}} }
\begin{document}
\def\R{{\bf \mbox{I\hspace{-.17em}R}}}
\def\N{\mbox{I\hspace{-.15em}N}}

\maketitle

% \tableofcontents

%%%% ABSTRACT
\begin{abstract}
We consider a restricted game on weighted graphs
associated with minimum partitions.
We replace in the classical definition of Myerson restricted game the connected components of any subgraph
by the sub-components obtained with a specific partition $\tilde{\mathcal{P}}_{\min}$.
This partition relies on the same principle
as the partition $\mathcal{P}_{\min}$ introduced by \cite{GrabischSkoda2012}
but restricted to connected coalitions.
More precisely,
this new partition $\tilde{\mathcal{P}}_{\min}$ is induced
by the deletion of the minimum weight edges in each connected component associated with a coalition.
We provide a characterization of the graphs satisfying inheritance of convexity
from the underlying game to 
the restricted game associated with $\tilde{\mathcal{P}}_{\min}$.
\end{abstract}

\textbf{Keywords}:
cooperative game,
convexity, graph-restricted game, graph partitions.

\textbf{AMS Classification}:
91A12, 91A43, 90C27, 05C75.

%%%% INTRODUCTION
\section{Introduction}
We consider
a finite set $N$ of players
with $|N| = n$.
Let $\mathcal{P}$
be a correspondence on $N$ 
associating to every subset $A \subseteq N$
a partition $\mathcal{P}(A)$ of~$A$.
For any game $(N,v)$,
\cite{Skoda201702} defined the
restricted game
$(N,\overline{v})$ associated with $\mathcal{P}$ by:
\begin{equation}
  \overline {v} (A) = \sum_{F \in \mathcal{P}(A)} v(F),  \; \; \textrm{for all}\; A \subseteq N.
\end{equation}
We refer to this game as the $\mathcal{P}$-restricted game.
$v$ is the characteristic function of the underlying game,
$v: 2^{N} \rightarrow \R$, $A \mapsto v(A)$
and satisfies $v(\emptyset) = 0$.
Many correspondences have been considered in the literature
to take into account communication or social restrictions.
The first founding example is the Myerson's correspondence
$\mathcal{P}_M $ associated with communication games \citep{Myerson77}.
Communication games are
cooperative games $(N, v)$ defined on the set of vertices
$N$ of an undirected graph $G=(N,E)$,
where $E$ is the set of edges.
$\mathcal{P}_{M}$ associates to every coalition $A \subseteq N$
the partition of $A$ into connected components.
The $\mathcal{P}_{M}$-restricted game $(N, \overline{v})$,
known as Myerson restricted game,
takes into account
the connectivity between players in $G$.
Only connected coalitions are able to cooperate and get their initial
values.
Many other correspondences have been considered
to define restricted games
(see, e.g.,
\citet{AlgabaBilbaoLopez2001,Bilbao2000,Bilbao2003,Faigle89,GrabischSkoda2012,Grabisch2013}).

For a given correspondence,
a classical problem is to study inheritance of a property
from the initial game $(N,v)$ to the restricted game $(N,\overline{v})$.
Inheritance of properties has been thoroughly studied for the Myerson correspondence $\mathcal{P}_{M}$
(see \cite{Owen86,NouwelandandBorm1991,Slikker2000}).
Inheritance of convexity is of special interest
as it implies that lots of appealing properties are also inherited,
for instance superadditivity,
non-emptiness of the core,
and that the Shapley value lies in the core.
\cite{Skoda201702} obtained an abstract characterization of inheritance of convexity
for an arbitrary correspondence
using a cyclic intersecting sequence condition on coalitions.
This result implies some of the characterizations obtained for specific correspondences.
In particular,
it implies the characterization of inheritance of convexity for
the Myerson correspondence by cycle-completeness of the underlying graph,
established by \cite{NouwelandandBorm1991}.
For correspondences associated with graphs,
it is of course more interesting to find characterizations in terms of graph structures
as for the Myerson correspondence.

\cite{GrabischSkoda2012} introduced
the correspondence $\mathcal{P}_{\min}$ for communication games on weighted graphs.
A communication game on a weighted
graph is a combination of a cooperative game $(N, v)$ and a weighted graph
$G = (N,E,w)$ which has the set of players as its vertices and where $w$ is
a weight function defined on the set $E$ of edges of $G$.
In this context,
it is likely that players belonging to a given coalition 
are more or less prone to cooperate 
depending on the weights of their links.
The correspondence $\mathcal{P}_{\min}$
takes into account connectedness of the players
but a coalition gets its initial value under a stronger requirement.
There must be some privileged relation
between players to activate their cooperation.
For a given coalition $A \subseteq N$,
we denote by $E(A)$ the set of edges with both end-vertices in $A$,
and by $\Sigma(A)$ the set of minimum weight edges in $E(A)$.
It is assumed that two players have a privileged relation in a coalition $A \subseteq N$
if they are linked by an edge with weight strictly greater
than the minimum edge-weight in the subgraph $G_A=(A, E(A))$.
More precisely,
the correspondence $\mathcal{P}_{\min}$
associates with any coalition $A\subseteq N$
the partition $\mathcal{P}_{\min}(A)$ of $A$ into the connected components
of the subgraph $(A, E(A) \setminus \Sigma(A))$.
Then,
the $\mathcal{P}_{\min}$-restricted game $(N,\overline{v})$ is defined by:
\[
\overline{v}(A)= \sum_{F \in \mathcal{P}_{\min}(A)} v(F),  \textrm{ for all } A \subseteq N.
\]
\cite{GrabischSkoda2012}
established three necessary conditions on the underlying graph $G$
to have inheritance of convexity with the correspondence~$\mathcal{P}_{\min}$.
To establish these conditions,
they only had to consider connected subsets.
Hence,
these conditions are valid assuming only $\mathcal{F}$-convexity
which is a weaker condition than convexity introduced by~\cite{GrabischSkoda2012},
obtained by restricting convexity to connected subsets.
\cite{Skoda2017} presented a characterization
of inheritance of $\mathcal{F}$-convexity for $\mathcal{P}_{\min}$
by five necessary and sufficient conditions on the edge-weights of specific subgraphs.
These subgraphs correspond to stars, paths,
cycles, pans, and adjacent cycles
of the underlying graph $G$.
Finally,
\cite{Skoda2019b} obtained a characterization of inheritance of convexity for $\mathcal{P}_{\min}$.
As convexity implies $\mathcal{F}$-convexity,
the conditions established by \cite{Skoda2017} are necessary
but they do not appear in the characterization of inheritance of convexity.
Indeed,
this last characterization relies on more straightforward conditions.
This is in part due
to the fact that inheritance of convexity restricts the
edge-weights to at most three different values.

In this paper,
we establish a characterization of inheritance of convexity for a new correspondence $\tilde{\mathcal{P}}_{\min}$.
This correspondence
is close to the correspondence $\mathcal{P}_{\min}$
as they coincide on connected coalitions of players.
The correspondence $\tilde{\mathcal{P}}_{\min}$ follows the same pattern
as the correspondence $\mathcal{P}_{\min}$ 
but it requires that players cooperate only if they are in a privileged relation
relatively to the connected component they belong to.
More precisely,
let $A \subseteq N$ be a coalition of players
and let $A_{1}, A_{2}, \ldots, A_{p}$ with $p \geq 1$ be the connected components of $G_A$.
Then,
$\tilde{\mathcal{P}}_{\min}(A) = \lbrace \mathcal{P}_{\min}(A_1), \ldots, \mathcal{P}_{\min}(A_p) \rbrace$
and the $\tilde{\mathcal{P}}_{\min}$-restricted game $(N, \hat{v})$ is defined by
\begin{equation}
\hat{v}(A)=
\sum_{F \in \tilde{\mathcal{P}}_{\min}(A)} v(F), \; \textrm{for all} \; A \subseteq N.
\end{equation}
By definition of $\tilde{\mathcal{P}}_{\min}$,
we also have the following relation:
\begin{equation}
\hat{v}(A)=
\sum_{l = 1}^{p} \overline{v}(A_{l}), \; \textrm{for all} \; A \subseteq N.
\end{equation}
Hence,
the $\tilde{\mathcal{P}}_{\min}$-restricted game
also
corresponds to the Myerson restricted game 
associated with the $\mathcal{P}_{\min}$-restricted game.
As a consequence,
the $\tilde{\mathcal{P}}_{\min}$-restricted game $(N, \hat{v})$
and the $\mathcal{P}_{\min}$-restricted game $(N, \overline{v})$
assign the same values to connected coalitions.
In particular,
this implies that 
$\mathcal{F}$-convexity of the $\tilde{\mathcal{P}}_{\min}$-restricted game
is equivalent to
$\mathcal{F}$-convexity of the $\mathcal{P}_{\min}$-restricted game.
As a result,
the characterization of inheritance of $\mathcal{F}$-convexity
obtained by~\cite{Skoda2017}
for the correspondence $\mathcal{P}_{\min}$
is also valid for
the correspondence $\tilde{\mathcal{P}}_{\min}$.

In the present paper,
we establish a characterization of inheritance of convexity 
for the correspondence $\tilde{\mathcal{P}}_{\min}$.
This characterization includes the five necessary
and sufficient conditions on the edge-weights characterizing inheritance of $\mathcal{F}$-convexity. 
We have to add three new conditions to these five previous conditions
to obtain a characterization of inheritance of classical convexity.
These new conditions can be seen as reinforcements
of the preceding conditions established on cycles, pans, and adjacent cycles 
in the characterization of $\mathcal{F}$-convexity.
We prefer to state them separately
as they are more specific than the previous ones.
In particular,
their necessity is obtained considering non-connected coalitions,
whereas these last coalitions are not considered with $\mathcal{F}$-convexity.
Moreover,
our result implies that,
in the case of cycle-free graphs,
the two conditions on stars and paths
are necessary and sufficient for inheritance of convexity
with the correspondence $\tilde{\mathcal{P}}_{\min}$.
A similar result holds for inheritance of $\mathcal{F}$-convexity 
with the correspondence $\mathcal{P}_{\min}$
as proved by~\citet{GrabischSkoda2012}.
Hence,
there is equivalence between
inheritance of $\mathcal{F}$-convexity 
with the correspondence $\mathcal{P}_{\min}$
and inheritance of convexity with $\tilde{\mathcal{P}}_{\min}$
in the case of cycle-free graphs.
There is no such equivalence
with inheritance of convexity with $\mathcal{P}_{\min}$.
Indeed,
\cite{Skoda2019b}
proved that inheritance of convexity with $\mathcal{P}_{\min}$
restricts the number of different edge-weights
to at most three
even in the case of cycle-free graphs
and
there is no such limitation
for inheritance of convexity with $\tilde{\mathcal{P}}_{\min}$.
We also obtain that inheritance of convexity and inheritance of convexity restricted to the class of unanimity games
are equivalent for the correspondence $\tilde{\mathcal{P}}_{\min}$.
This last result also holds for $\mathcal{P}_{\min}$ (\cite{Skoda2017})
and was already observed
in a more general setting by~\cite{Skoda201702}.

The article is organized as follows.
In Section~\ref{SectionPreliminaryDefinitionsResults} we give preliminary definitions and
results established by~\cite{GrabischSkoda2012}.
In particular,
we recall
the definition of convexity,
$\mathcal{F}$-convexity
and general conditions on a correspondence
to have inheritance of superadditivity,
convexity
or $\mathcal{F}$-convexity.
We recall in Section~\ref{SectionInheritanceOfF-Convexity}
the necessary and sufficient conditions on the graph and the weight vector $w$
established by~\cite{Skoda2017}
for inheritance of $\mathcal{F}$-convexity
from the original game $(N,v)$ 
to the $\mathcal{P}_{\min}$-restricted game $(N,\overline{v})$.
Section~\ref{SubsectionInheritanceOfConvexityForTheTildePmin-RestrictedGame}
includes the main results ot the paper.
We establish three supplementary necessary conditions
on the edge-weights to have inheritance of convexity for the correspondence $\tilde{\mathcal{P}}_{\min}$.
Then,
we prove that these three conditions appended to the five previous conditions
necessary for inheritance of $\mathcal{F}$-convexity are also sufficient
to have inheritance of convexity for $\tilde{\mathcal{P}}_{\min}$.
We conclude with some remarks and suggestions in Section~\ref{Conclusion}.

%%  PRELIMINARY SECTION
\section{Preliminary definitions and results}
\label{SectionPreliminaryDefinitionsResults}
Let $N$ be a given set
with $|N|= n$.
We denote by $2^{N}$ the set of all subsets of $N$.
A game $(N,v)$ is \textit{zero-normalized}\label{definitionZeroNormalizedGame}
if $v(\lbrace i \rbrace) = 0$ for all $i \in N$.
We recall that a game $(N,v)$ is \emph{superadditive} if,
for all $A,B \in 2^{N}$ 
such that $A \cap B = \emptyset$,
$v(A \cup B) \geq v(A) + v(B)$.
For any given subset $\emptyset \not= S \subseteq N$, the unanimity game $(N, u_{S})$ is defined by:
\begin{equation}
u_{S}(A) =
\left \lbrace
\begin{array}{cl}
1 &  \textrm{ if } A \supseteq S,\\
0 & \textrm{ otherwise.}
\end{array}
\right.
\end{equation}
We note that $u_{S}$ is superadditive for all $S \not= \emptyset$.
Let $X$ and $Y$ be two given sets.
A \textit{correspondence} $f$
with domain $X$ and range $Y$
is a map that associates to every element $x \in X$
a subset $f(x)$ of $Y$,
\emph{i.e.},
a map from $X$ to $2^{Y}$.
In this work we consider specific correspondences $\mathcal{P}$
with domain and range $2^{N}$,
such that for every subset $\emptyset \not= A \subseteq N$,
the family $\mathcal{P}(A)$ of subsets of $N$
corresponds to a partition of $A$.
We set $\mathcal{P}(\emptyset) = \lbrace \emptyset \rbrace$.

For a given correspondence $\mathcal{P}$ on $N$
and subsets $A \subseteq B \subseteq N$,
we denote by $\mathcal{P}(B)_{|A}$ the restriction of the partition $\mathcal{P}(B)$ to $A$.
For two given subsets $A$ and $B$ of $N$,
$\mathcal{P}(A)$ is a refinement of $\mathcal{P}(B)$
if every block of $\mathcal{P}(A)$ is a subset of some block of $\mathcal{P}(B)$.

We recall the following results
established by~\cite{GrabischSkoda2012}.
The first one gives general conditions on a correspondence $\mathcal{P}$
to have inheritance of superadditivity.

%%  Theorem
\begin{theorem}
\label{thLGNEPNFaSNuSsFaSNuSin}
Let $N$ be an arbitrary set and $\mathcal{P}$ a correspondence on $2^N$.
The following conditions are equivalent:
\begin{enumerate}[1)]
\item
The $\mathcal{P}$-restricted game $(N, \overline{u_{S}})$ is superadditive
for all $\emptyset \not= S \subseteq N$.
\item
\label{Claim3ThLGNEPNFaSNuSsFaSNuSin}
$\mathcal{P}(A)$ is a refinement of $\mathcal{P}(B)_{|A}$
for all subsets $A \subseteq B \subseteq N$.
\item
The $\mathcal{P}$-restricted game $(N,\overline{v})$ is superadditive
for all superadditive game $(N,v)$.
\end{enumerate}
\end{theorem}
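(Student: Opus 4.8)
The plan is to prove the equivalence through the cycle of implications 3) $\Rightarrow$ 1) $\Rightarrow$ 2) $\Rightarrow$ 3); no display-math manipulation is needed, everything can be done with inline arguments. \emph{The implication 3) $\Rightarrow$ 1)} is immediate: every unanimity game $(N,u_{S})$ with $S \neq \emptyset$ is superadditive, as already noted, so applying 3) to $v = u_{S}$ yields superadditivity of $(N,\overline{u_{S}})$.

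\emph{The implication 1) $\Rightarrow$ 2)} I would establish by contraposition. Suppose there are $A \subseteq B \subseteq N$ such that $\mathcal{P}(A)$ is not a refinement of $\mathcal{P}(B)_{|A}$. Since a block $F$ of $\mathcal{P}(A)$ satisfies $F \subseteq A$, being contained in a block of $\mathcal{P}(B)_{|A}$ is equivalent to being contained in a block of $\mathcal{P}(B)$; hence there is a block $F \in \mathcal{P}(A)$ contained in no block of $\mathcal{P}(B)$. We may assume $A \neq \emptyset$, so $F \neq \emptyset$ and $u_{F}$ is well defined. As the blocks of a partition are pairwise disjoint and nonempty, $F$ is the unique block of $\mathcal{P}(A)$ containing $F$, so $\overline{u_{F}}(A) = 1$; since $F \subseteq A$, every block of $\mathcal{P}(B \setminus A)$ is disjoint from $F$, so $\overline{u_{F}}(B \setminus A) = 0$; and by the choice of $F$, no block of $\mathcal{P}(B)$ contains $F$, so $\overline{u_{F}}(B) = 0$. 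As $A$ and $B \setminus A$ are disjoint with union $B$, the inequality $\overline{u_{F}}(B) \geq \overline{u_{F}}(A) + \overline{u_{F}}(B \setminus A)$ reads $0 \geq 1$, contradicting superadditivity of $(N,\overline{u_{F}})$; thus 1) fails.

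\emph{The implication 2) $\Rightarrow$ 3)}: let $(N,v)$ be superadditive and let $A, B \subseteq N$ be disjoint; set $C = A \cup B$. Applying 2) to $A \subseteq C$ and to $B \subseteq C$, every block of $\mathcal{P}(A)$ lies in a unique block of $\mathcal{P}(C)$ and likewise for $\mathcal{P}(B)$; equivalently, for each $F \in \mathcal{P}(C)$, $F \cap A$ is a disjoint union of blocks of $\mathcal{P}(A)$ and $F \cap B$ a disjoint union of blocks of $\mathcal{P}(B)$, and as $F$ runs over $\mathcal{P}(C)$ each block of $\mathcal{P}(A)$ (resp. of $\mathcal{P}(B)$) arises exactly once. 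Using superadditivity of $v$ first on the disjoint union $F = (F \cap A) \cup (F \cap B)$ and then iteratively on the decompositions of $F \cap A$ and $F \cap B$ into blocks, one gets $v(F) \geq \sum_{G \in \mathcal{P}(A),\, G \subseteq F} v(G) + \sum_{G \in \mathcal{P}(B),\, G \subseteq F} v(G)$. Summing over $F \in \mathcal{P}(C)$ and using the exactly-once property gives $\overline{v}(C) = \sum_{F \in \mathcal{P}(C)} v(F) \geq \sum_{G \in \mathcal{P}(A)} v(G) + \sum_{G \in \mathcal{P}(B)} v(G) = \overline{v}(A) + \overline{v}(B)$, i.e.\ superadditivity of $(N,\overline{v})$.

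I expect the only delicate point to be the combinatorial bookkeeping in 2) $\Rightarrow$ 3): one must make sure the refinement hypothesis genuinely forces every block of $\mathcal{P}(A)$ to sit inside exactly one block of $\mathcal{P}(C)$, so that the double sum $\sum_{F \in \mathcal{P}(C)} \sum_{G \in \mathcal{P}(A),\, G \subseteq F} v(G)$ reindexes to $\sum_{G \in \mathcal{P}(A)} v(G)$ with no block omitted or counted twice. The remaining steps are a routine unfolding of the definitions of superadditivity and of the $\mathcal{P}$-restricted game.
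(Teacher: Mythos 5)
Your proof is correct. Note that the paper does not actually prove this theorem: it is recalled verbatim from Grabisch and Skoda (2012), so there is no in-paper argument to compare against. Your cycle of implications — the immediate specialization for 3) $\Rightarrow$ 1), the contrapositive via the unanimity game $u_F$ on a non-refined block for 1) $\Rightarrow$ 2), and the block-by-block decomposition $F = (F\cap A)\cup(F\cap B)$ with iterated superadditivity for 2) $\Rightarrow$ 3) — is the standard route, and the "exactly once" reindexing you flag as the delicate point is indeed justified by the fact that each nonempty block of $\mathcal{P}(A)$ sits inside exactly one of the pairwise disjoint blocks of $\mathcal{P}(A\cup B)$.
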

As $\mathcal{P}_{\min}(A)$
(resp. $\tilde{\mathcal{P}}_{\min}(A)$)
is a refinement of $\mathcal{P}_{\min}(B)_{|A}$
(resp. $\tilde{\mathcal{P}}_{\min}(B)_{|A}$)
for all subsets $A \subseteq B \subseteq N$,
Theorem~\ref{thLGNEPNFaSNuSsFaSNuSin} implies the following result.

%% Corollary
\begin{corollary}
\label{corIGNEiafttsihfv}
Let $G=(N,E,w)$ be an arbitrary weighted graph.
The $\mathcal{P}_{\min}$-restricted game $(N,\overline{v})$
(resp. the $\tilde{\mathcal{P}}_{\min}$-restricted game $(N,\hat{v})$)
is superadditive
for every superadditive game $(N,v)$.
\end{corollary}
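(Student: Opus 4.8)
The plan is to deduce the statement directly from Theorem~\ref{thLGNEPNFaSNuSsFaSNuSin}. Since that theorem asserts the equivalence of its three conditions, it suffices to verify condition~\ref{Claim3ThLGNEPNFaSNuSsFaSNuSin}) for each of the two correspondences, namely that $\mathcal{P}_{\min}(A)$ is a refinement of $\mathcal{P}_{\min}(B)_{|A}$, and that $\tilde{\mathcal{P}}_{\min}(A)$ is a refinement of $\tilde{\mathcal{P}}_{\min}(B)_{|A}$, for all $A \subseteq B \subseteq N$. Concretely, stating this in terms of players, I need to show: if $i,j \in A$ lie in the same block of $\mathcal{P}_{\min}(A)$ (resp.\ $\tilde{\mathcal{P}}_{\min}(A)$), then they lie in the same block of $\mathcal{P}_{\min}(B)$ (resp.\ $\tilde{\mathcal{P}}_{\min}(B)$), since restricting the coarser partition to $A$ only enlarges its blocks.

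First I would treat $\mathcal{P}_{\min}$. Take $i,j \in A$ in the same block of $\mathcal{P}_{\min}(A)$: by definition there is a path $\gamma$ from $i$ to $j$ using only edges of $E(A) \setminus \Sigma(A)$, i.e.\ edges of $E(A)$ whose weight is strictly larger than $\min_{e \in E(A)} w(e)$. The key (and essentially only) observation is the monotonicity $\min_{e \in E(B)} w(e) \le \min_{e \in E(A)} w(e)$, which holds because $E(A) \subseteq E(B)$. Hence every edge of $\gamma$ has weight strictly larger than $\min_{e \in E(B)} w(e)$, so it is not in $\Sigma(B)$; thus $\gamma$ is a path in $(B, E(B) \setminus \Sigma(B))$ and $i,j$ belong to the same block of $\mathcal{P}_{\min}(B)$, hence of $\mathcal{P}_{\min}(B)_{|A}$.

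Then I would handle $\tilde{\mathcal{P}}_{\min}$, reducing it to the previous case. If $i,j \in A$ lie in the same block of $\tilde{\mathcal{P}}_{\min}(A)$, they lie in a common connected component $A_l$ of $G_A$ and in the same block of $\mathcal{P}_{\min}(A_l)$. Since $A \subseteq B$, the component $A_l$ is contained in some connected component $B_m$ of $G_B$; applying the refinement property just established for $\mathcal{P}_{\min}$ to the inclusion $A_l \subseteq B_m$ shows that $i,j$ lie in the same block of $\mathcal{P}_{\min}(B_m)$, hence in the same block of $\tilde{\mathcal{P}}_{\min}(B)$ and of $\tilde{\mathcal{P}}_{\min}(B)_{|A}$. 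With both refinement properties in hand, Theorem~\ref{thLGNEPNFaSNuSsFaSNuSin} yields that $(N,\overline{v})$ and $(N,\hat{v})$ are superadditive for every superadditive $(N,v)$. I do not expect a genuine obstacle here: everything is controlled by the weight-minimum monotonicity above, together with the elementary fact that the components of $G_A$ merge (never split) inside $G_B$ when $A \subseteq B$; the only point worth stating carefully is precisely this comparison of minimum edge-weights.
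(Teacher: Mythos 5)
Your proposal is correct and follows exactly the route the paper takes: verify condition 2) of Theorem~\ref{thLGNEPNFaSNuSsFaSNuSin} (the refinement property) for $\mathcal{P}_{\min}$ and $\tilde{\mathcal{P}}_{\min}$ and invoke the theorem; the paper merely asserts the refinement property without proof, whereas you supply the (correct) verification via the monotonicity $\sigma(B)\leq\sigma(A)$ and the merging of connected components under $A\subseteq B$.
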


Let us consider a game $(N,v)$.
For arbitrary subsets $A$ and $B$ of $N$, we define the value:
\[
\Delta v(A,B):=v(A\cup B)+v(A\cap B)-v(A)-v(B).
\]
A game $(N,v)$ is \textit{convex}
if its characteristic function $v$ is supermodular,
\emph{i.e.},
$\Delta v(A,B)\geq 0$
for all $A,B \in 2^{N}$.
We note that $u_{S}$ is supermodular for all $S \not= \emptyset$.
Let $\mathcal{F}$
be a
\emph{weakly union-closed family}\footnote{
$\mathcal{F}$ 
is weakly union-closed if $A \cup B \in \mathcal{F}$ for all $A$, $B \in \mathcal{F}$
such that $A \cap B \not= \emptyset $ \citep{FaigleGrabisch2010}.
Weakly union-closed families were introduced
and analysed by~\cite{Algaba98}
(see also \cite{AlgabaBilbaoBormLopez2000}) and called union stable systems.
}
of subsets of $N$
such that $\emptyset \notin \mathcal{F}$.
A game $v$ on $2^{N}$ is said to be \emph{$\mathcal{F}$-convex}
if $\Delta v(A,B) \geq 0$,
for all $A,B \in \mathcal{F}$
with $A \cap B \in \mathcal{F}$.
Let us note that
a game $(N,v)$ is convex if and only if
it is superadditive and $\mathcal{F}$-convex
with $\mathcal{F} = 2^{N} \setminus \lbrace \emptyset \rbrace$.
Of course,
convexity implies $\mathcal{F}$-convexity.
For any $i \in N$
and any subset $A \subseteq N \setminus \lbrace i \rbrace$,
the derivative of $v$ at $A$ w.r.t $i$ is defined by
\[
\Delta_i v(A) := v(A \cup \lbrace i \rbrace) - v(A).
\]
$\Delta_i v(A)$ is also known as the marginal contribution of player $i$ w.r.t coalition~$A$. 
If a game $v$ on $2^{N}$ is $\mathcal{F}$-convex then,
for all $i \in N$ and all $A \subseteq B \subseteq N \setminus \lbrace i \rbrace$
with $A, B$ and $A \cup \lbrace i \rbrace \in \mathcal{F}$ 
we have:
\begin{equation}
\Delta_i v(B) \geq \Delta_i v(A).
\end{equation}

For a given graph $G= (N,E)$,
we say that a subset $A \subseteq N$ is connected
if the induced graph $G_{A} = (A,E(A))$ is connected.
The family of connected subsets of $N$
is obviously weakly union-closed.
For this last family,
the following result holds. 

%% Theorem
\begin{theorem}\label{theoremLGNEbafalFbtfocsoN}
Let $G=(N,E)$ be an arbitrary graph and let $\mathcal{F}$ be the family of connected subsets of $N$.
The following conditions are equivalent:
\begin{equation}\label{eqv(AB)+v(AB)v(A)+v(B)}
 v \textrm{ is } \mathcal{F}\textrm{-convex}.
\end{equation}
\begin{equation}\label{eqvBi-vB=vAi-vA}
\begin{array}{c}
\Delta_i v(B) \geq \Delta_i v(A),\\
\forall i \in N,\,
\forall A \subseteq  B \subseteq  N\setminus  \lbrace i \rbrace
\textrm{ with } A, B, \textrm{ and } A \cup \lbrace i \rbrace \in \mathcal{F}.
\end{array}
\end{equation}
\end{theorem}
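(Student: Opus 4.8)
The plan is to prove the two implications separately; the forward one is essentially the general marginal-contribution fact recalled just before the statement. For \eqref{eqv(AB)+v(AB)v(A)+v(B)} $\Rightarrow$ \eqref{eqvBi-vB=vAi-vA}, fix $i \in N$ and $A \subseteq B \subseteq N \setminus \lbrace i \rbrace$ with $A$, $B$, and $A \cup \lbrace i \rbrace$ connected, and apply $\mathcal{F}$-convexity to the pair $X := B$ and $Y := A \cup \lbrace i \rbrace$. Since $i \notin B$ and $A \subseteq B$, we have $X \cup Y = B \cup \lbrace i \rbrace$ and $X \cap Y = A$, so all three of $X$, $Y$, $X \cap Y$ lie in $\mathcal{F}$, and $\Delta v(X,Y) \geq 0$ is exactly $v(B \cup \lbrace i \rbrace) - v(B) \geq v(A \cup \lbrace i \rbrace) - v(A)$, i.e. $\Delta_i v(B) \geq \Delta_i v(A)$.

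For the converse \eqref{eqvBi-vB=vAi-vA} $\Rightarrow$ \eqref{eqv(AB)+v(AB)v(A)+v(B)}, fix $A, B \in \mathcal{F}$ with $A \cap B \in \mathcal{F}$; note $A \cap B \neq \emptyset$ since $\emptyset \notin \mathcal{F}$. If $A \subseteq B$ then $\Delta v(A,B) = 0$, so assume $A \setminus B \neq \emptyset$ and enumerate $A \setminus B = \lbrace i_1, \ldots, i_k \rbrace$ in an order still to be fixed. Put $S_j := (A \cap B) \cup \lbrace i_1, \ldots, i_j \rbrace$ and $T_j := B \cup \lbrace i_1, \ldots, i_j \rbrace$ for $0 \leq j \leq k$, so that $S_0 = A \cap B$, $S_k = A$, $T_0 = B$, $T_k = A \cup B$, and $S_{j-1} \subseteq T_{j-1} \subseteq N \setminus \lbrace i_j \rbrace$. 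Telescoping $v(A \cup B) - v(B)$ along $(T_j)$ and $v(A) - v(A \cap B)$ along $(S_j)$ gives
\[
\Delta v(A,B) = \sum_{j=1}^{k} \bigl( \Delta_{i_j} v(T_{j-1}) - \Delta_{i_j} v(S_{j-1}) \bigr),
\]
and each summand is nonnegative by \eqref{eqvBi-vB=vAi-vA} applied to $S_{j-1} \subseteq T_{j-1}$ with the vertex $i_j$, provided $S_{j-1}$, $T_{j-1}$ and $S_{j-1} \cup \lbrace i_j \rbrace = S_j$ all lie in $\mathcal{F}$.

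So the real work is choosing the enumeration of $A \setminus B$ so that every $S_j$ is connected, which is the standard connected-ordering argument: since $G_A$ is connected and $A \cap B \neq A$, there is an edge from $A \cap B$ to $A \setminus B$, so we may take $i_1 \in A \setminus B$ adjacent to $A \cap B$, making $S_1$ connected; inductively, while $S_{j-1} \neq A$, connectivity of $G_A$ forces an edge from $S_{j-1}$ to $A \setminus S_{j-1}$, and we pick $i_j$ in that set adjacent to $S_{j-1}$, keeping $S_j$ connected. Once all $S_j$ are connected, each $T_j = B \cup S_j$ is connected as well, being the union of the two connected sets $B$ and $S_j$ whose intersection contains the nonempty set $A \cap B$. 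With this enumeration every membership condition holds, every term in the displayed sum is $\geq 0$, hence $\Delta v(A,B) \geq 0$, establishing $\mathcal{F}$-convexity. The only delicate point is this simultaneous connectedness of the chains $(S_j)$ and $(T_j)$; the observation that the larger chain is automatically connected once the smaller one is makes it entirely routine.
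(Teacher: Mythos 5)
Your proof is correct. The paper itself only recalls this theorem from \cite{GrabischSkoda2012} without reproducing a proof, and your argument is the standard one used there: the forward implication by specializing $\mathcal{F}$-convexity to the pair $B$ and $A \cup \lbrace i \rbrace$, and the converse by choosing a connected ordering of $A \setminus B$ and telescoping $\Delta v(A,B)$ into a sum of differences of marginal contributions, with the key observation (which you state and justify) that connectedness of each $S_j$ automatically yields connectedness of each $T_j = B \cup S_j$ because their intersection contains the nonempty set $A \cap B$.
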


The next theorem gives general conditions on a correspondence $\mathcal{P}$
to have inheritance of convexity for unanimity games.

%%  Theorem
\begin{theorem}
\label{thLGNuPNFNFSNuSSNNuSFABFABF}
Let $N$ be an arbitrary set and $\mathcal{P}$ a correspondence on $2^N$.
Let $\mathcal{F}$ be a weakly union-closed family of subsets of $N$
with $\emptyset \notin \mathcal{F}$.
If the $\mathcal{P}$-restricted game $(N, \overline{u_{S}})$ is superadditive
for all $\emptyset \not= S \subseteq N$,
then the following conditions are equivalent.
\begin{enumerate}[1)]
\item
The $\mathcal{P}$-restricted game $(N, \overline{u_{S}})$ is $\mathcal{F}$-convex
for all $\emptyset \not= S \subseteq N $.
\item
\label{itemThForAllA,BInFWithACapBInF,P(ACapB)=A_jCapB_k;A_jInP(A)B_kInP(B),A_jCapB_kNot=Emptyset}
For all $A, B \in \mathcal{F}$
with $A \cap B \in \mathcal{F}$,
$\mathcal{P}(A \cap B) = \lbrace A_{j} \cap B_{k} \, ;\,  A_{j} \in \mathcal{P}(A), B_{k} \in \mathcal{P}(B), A_{j} \cap B_{k} \not= \emptyset  \rbrace$.
\end{enumerate}
Moreover if $\mathcal{F} = 2^{N} \setminus \lbrace \emptyset \rbrace$
or if $\mathcal{F}$ corresponds to the set of connected subsets of a graph
then 1) and 2) are equivalent to:
\begin{enumerate}[3)]
\item
\label{thClaim3'}
For all $i \in N$,
for all $ A \subseteq B \subseteq N \setminus \lbrace i \rbrace$
with $A, B$, and $A \cup \lbrace i \rbrace  \in \mathcal{F}$,
and for all $A' \in \mathcal{P}(A \cup \lbrace i \rbrace)_{|A}$, $\mathcal{P}(A)_{|A'} = \mathcal{P}(B)_{|A'}$.
\end{enumerate}
\end{theorem}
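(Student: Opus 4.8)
The plan rests on two elementary facts about $\overline{u_{S}}$ for a fixed $\emptyset\neq S\subseteq N$: since the blocks of a partition are pairwise disjoint, at most one block of $\mathcal{P}(A)$ can contain $S$, so $\overline{u_{S}}(A)\in\{0,1\}$, equal to $1$ exactly when some block of $\mathcal{P}(A)$ contains $S$; and, because inheritance of superadditivity is assumed, $\overline{u_{S}}$ is nondecreasing (write $Z=X\cup(Z\setminus X)$ and combine superadditivity with $\overline{u_{S}}\ge 0$). Together these reduce every inequality on $\Delta\overline{u_{S}}$ or $\Delta_{i}\overline{u_{S}}$ to a combinatorial statement about where $S$ sits among the blocks of the relevant partitions. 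I will also use Theorem~\ref{thLGNEPNFaSNuSsFaSNuSin} in the equivalent form ``$\mathcal{P}(X)$ refines $\mathcal{P}(Y)_{|X}$ for all $X\subseteq Y\subseteq N$''. The equivalence $1)\Leftrightarrow 2)$ I prove for an arbitrary weakly union-closed $\mathcal{F}$, and then I close the cycle $2)\Rightarrow 3)\Rightarrow 1)$ using the extra hypothesis on $\mathcal{F}$.

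For $2)\Rightarrow 1)$: take $A,B\in\mathcal{F}$ with $A\cap B\in\mathcal{F}$, hence $A\cup B\in\mathcal{F}$ (weak union-closedness, as $A\cap B\neq\emptyset$). If at most one of $\overline{u_{S}}(A),\overline{u_{S}}(B)$ equals $1$, then monotonicity gives $\overline{u_{S}}(A\cup B)\ge\overline{u_{S}}(A)+\overline{u_{S}}(B)$ and, with $\overline{u_{S}}(A\cap B)\ge 0$, $\Delta\overline{u_{S}}(A,B)\ge 0$. If $\overline{u_{S}}(A)=\overline{u_{S}}(B)=1$, a block $A_{j}$ of $\mathcal{P}(A)$ and a block $B_{k}$ of $\mathcal{P}(B)$ both contain $S$, so $\emptyset\neq S\subseteq A_{j}\cap B_{k}$; by $2)$ the set $A_{j}\cap B_{k}$ is a block of $\mathcal{P}(A\cap B)$, so $\overline{u_{S}}(A\cap B)=1$, while $\overline{u_{S}}(A\cup B)\ge\overline{u_{S}}(A)=1$, whence $\Delta\overline{u_{S}}(A,B)\ge 0$. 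For $1)\Rightarrow 2)$: the collection $\mathcal{Q}:=\{A_{j}\cap B_{k}\,:\,A_{j}\in\mathcal{P}(A),\,B_{k}\in\mathcal{P}(B),\,A_{j}\cap B_{k}\neq\emptyset\}$ is a partition of $A\cap B$, and superadditivity forces $\mathcal{P}(A\cap B)$ to refine it; if the refinement were strict, some block $Q=A_{j_{0}}\cap B_{k_{0}}$ of $\mathcal{Q}$ would split into at least two blocks of $\mathcal{P}(A\cap B)$, and then $S:=Q$ gives $\overline{u_{S}}(A)=\overline{u_{S}}(B)=1$, $\overline{u_{S}}(A\cap B)=0$, $\overline{u_{S}}(A\cup B)\le 1$, so $\Delta\overline{u_{S}}(A,B)\le-1<0$, contradicting $\mathcal{F}$-convexity of $\overline{u_{S}}$. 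Hence $\mathcal{P}(A\cap B)=\mathcal{Q}$.

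For $2)\Rightarrow 3)$: given $i\in N$ and $A\subseteq B\subseteq N\setminus\{i\}$ with $A,B,A\cup\{i\}\in\mathcal{F}$, apply $2)$ to the pair $(A\cup\{i\},B)$, whose intersection is precisely $A$; this writes each block of $\mathcal{P}(A)$ as $F_{j}\cap G_{k}$ with $F_{j}\in\mathcal{P}(A\cup\{i\})$ and $G_{k}\in\mathcal{P}(B)$. Fixing $A'=F_{j_{0}}\cap A\in\mathcal{P}(A\cup\{i\})_{|A}$ and intersecting the blocks of $\mathcal{P}(A)$ with $A'$, only the terms with $j=j_{0}$ survive (the $F_{j}$ being disjoint), and $F_{j_{0}}\cap G_{k}\subseteq A$ gives $F_{j_{0}}\cap G_{k}=A'\cap G_{k}$; hence $\mathcal{P}(A)_{|A'}$ and $\mathcal{P}(B)_{|A'}$ both equal $\{A'\cap G_{k}\,:\,G_{k}\in\mathcal{P}(B),\,A'\cap G_{k}\neq\emptyset\}$, which is $3)$. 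For $3)\Rightarrow 1)$: $\mathcal{F}$ is the family of connected subsets of a graph --- the complete graph in the case $\mathcal{F}=2^{N}\setminus\{\emptyset\}$ --- so Theorem~\ref{theoremLGNEbafalFbtfocsoN} reduces $\mathcal{F}$-convexity of $\overline{u_{S}}$ to ``$\Delta_{i}\overline{u_{S}}(B)\ge\Delta_{i}\overline{u_{S}}(A)$ for all $i$ and all $A\subseteq B\subseteq N\setminus\{i\}$ with $A,B,A\cup\{i\}\in\mathcal{F}$''. Both derivatives lie in $\{0,1\}$, so it suffices that $\Delta_{i}\overline{u_{S}}(A)=1$ imply $\Delta_{i}\overline{u_{S}}(B)=1$. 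Assume $\overline{u_{S}}(A\cup\{i\})=1$ and $\overline{u_{S}}(A)=0$, and let $F\in\mathcal{P}(A\cup\{i\})$ contain $S$. Superadditivity (``$\mathcal{P}(A\cup\{i\})$ refines $\mathcal{P}(B\cup\{i\})_{|A\cup\{i\}}$'') places $F$ inside a block of $\mathcal{P}(B\cup\{i\})$, so $\overline{u_{S}}(B\cup\{i\})=1$. And $\overline{u_{S}}(B)=0$: if $i\in S$ this is immediate since $i\notin B$; if $i\notin S$ then $S\subseteq F\cap A=:A'\in\mathcal{P}(A\cup\{i\})_{|A}$, condition $3)$ gives $\mathcal{P}(A)_{|A'}=\mathcal{P}(B)_{|A'}$, and since $\mathcal{P}(A)$ refines $\mathcal{P}(A\cup\{i\})_{|A}$ the fact that no block of $\mathcal{P}(A)$ contains $S$ is equivalent to no block of $\mathcal{P}(A)_{|A'}$ containing $S$, which then transfers to $\mathcal{P}(B)_{|A'}$ and hence to $\mathcal{P}(B)$. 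Thus $\Delta_{i}\overline{u_{S}}(B)=1$, and $1)$ holds.

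The step I expect to be the main obstacle is $3)\Rightarrow 1)$: one must see that the ``$\overline{u_{S}}(B\cup\{i\})=1$'' half is handed over for free by inheritance of superadditivity while only the ``$\overline{u_{S}}(B)=0$'' half truly uses condition $3)$, treat the degenerate case $i\in S$ apart, and check that passing to the restriction on the block $A'$ neither creates nor destroys the property ``$S$ lies in a single block'' --- the last point resting on $\mathcal{P}(A)$ being a refinement of $\mathcal{P}(A\cup\{i\})_{|A}$. The other implications are short case distinctions once the $\{0,1\}$-valuedness and monotonicity of $\overline{u_{S}}$ have been recorded.
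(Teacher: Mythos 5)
Your proof is correct. Note that the paper itself states Theorem~\ref{thLGNuPNFNFSNuSSNNuSFABFABF} only as a recalled result from the cited reference and gives no proof, so there is nothing in-paper to compare against; but your argument is complete and uses exactly the natural ingredients — the $\{0,1\}$-valuedness and monotonicity of $\overline{u_{S}}$, the refinement reformulation of inherited superadditivity from Theorem~\ref{thLGNEPNFaSNuSsFaSNuSin}, and the derivative criterion of Theorem~\ref{theoremLGNEbafalFbtfocsoN} for the cycle $2)\Rightarrow 3)\Rightarrow 1)$. All the delicate points (that $\overline{u_{Q}}(A\cap B)=0$ when $Q$ splits, that $F_{j_0}\cap G_k\subseteq A$ so the restriction to $A'$ collapses correctly, and the separate treatment of $i\in S$ in $3)\Rightarrow 1)$) check out.
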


We finally recall the following lemma.

%%  Lemma
\begin{lemma}
\label{lemmavB+i=1pvBiA}
Let us consider $A, B \subseteq N$
and a partition $\lbrace B_{1}, B_{2}, \ldots, B_{p} \rbrace$ of~$B$.
Let $\mathcal{F}$ be a weakly union-closed family of subsets of~$N$
with $\emptyset \notin \mathcal{F}$.
If $A, B_{i}$, and $ A\cap B_{i} \in \mathcal{F}$
for all $i \in \lbrace 1, \ldots, p \rbrace$,
then for every $\mathcal{F}$-convex game $(N,v)$ we have
\begin{equation}
\label{eqlemsupermodularpropvAB+i=1pvABi}
v(A \cup B) + \sum_{i=1}^{p} v(A \cap B_{i}) \geq v(A) + \sum_{i=1}^{p} v(B_{i}).
\end{equation}
\end{lemma}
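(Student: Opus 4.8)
The plan is to proceed by induction on $p$, the number of blocks in the partition of $B$. For the base case $p=1$, the inequality reads $v(A \cup B_1) + v(A \cap B_1) \geq v(A) + v(B_1)$, which is precisely $\Delta v(A, B_1) \geq 0$; since $A, B_1 \in \mathcal{F}$ and $A \cap B_1 \in \mathcal{F}$ by hypothesis, this is exactly the defining inequality of $\mathcal{F}$-convexity.

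For the inductive step, suppose the statement holds for partitions into $p-1$ blocks and consider a partition $\{B_1, \dots, B_p\}$ of $B$. Write $B' := B_1 \cup \dots \cup B_{p-1}$, so $B = B' \cup B_p$ with $B' \cap B_p = \emptyset$. The idea is to apply $\mathcal{F}$-convexity to the pair $A \cup B'$ and $B$ (or equivalently to $A \cup B'$ and $B_p$): this gives $v(A \cup B) + v((A \cup B') \cap B_p) \geq v(A \cup B') + v(B_p)$, and since $B' \cap B_p = \emptyset$ we have $(A \cup B') \cap B_p = A \cap B_p$. Then I would apply the induction hypothesis to $A$ with the $(p-1)$-block partition $\{B_1, \dots, B_{p-1}\}$ of $B'$, obtaining $v(A \cup B') + \sum_{i=1}^{p-1} v(A \cap B_i) \geq v(A) + \sum_{i=1}^{p-1} v(B_i)$. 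Adding the two inequalities and cancelling the common term $v(A \cup B')$ yields the desired inequality for $p$ blocks.

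The main obstacle is bookkeeping the membership conditions so that each invocation of $\mathcal{F}$-convexity and of the induction hypothesis is legitimate. Concretely, to apply $\mathcal{F}$-convexity to $A \cup B'$ and $B_p$ I need $A \cup B' \in \mathcal{F}$, $B_p \in \mathcal{F}$, and $(A \cup B') \cap B_p = A \cap B_p \in \mathcal{F}$; the last two hold by hypothesis, and $A \cup B' \in \mathcal{F}$ should be derived by a secondary induction using weak union-closedness—one needs $A \cup B_1, \dots$ and so on to be in $\mathcal{F}$, which follows because $A \cap B_i \neq \emptyset$ would give $A \cup B_i \in \mathcal{F}$, but in general $A \cap B_i$ could be empty. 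So the cleaner route is to first discard the blocks $B_i$ with $A \cap B_i = \emptyset$: if $A \cap B_i = \emptyset$ then $v(A \cap B_i) = v(\emptyset) = 0$, and on the right side superadditivity-type reasoning is not even needed—one simply notes that removing such a block from both sides and replacing $A \cup B$ by $A$ on the left is handled by a separate application. Actually, the slickest fix is to reduce to the case where $A \cap B_i \neq \emptyset$ for all $i$: relabel so that $A \cap B_i \neq \emptyset$ for $i \leq q$ and $A \cap B_i = \emptyset$ for $i > q$; apply the argument to $A$, $B'' := B_1 \cup \dots \cup B_q$ to get $v(A \cup B'') + \sum_{i=1}^q v(A \cap B_i) \geq v(A) + \sum_{i=1}^q v(B_i)$, where now $A \cup B''$ lies in $\mathcal{F}$ by weak union-closedness since all intersections $A \cap B_i$ are nonempty; then note $v(A \cup B) \geq v(A \cup B'')$ is not automatic either—so instead I would incorporate the empty-intersection blocks directly: apply $\mathcal{F}$-convexity to $A \cup (B_1 \cup \dots \cup B_{i-1})$ and $B_i$ for $i = q+1, \dots, p$ successively, each time using that the relevant intersection is empty hence in... no, empty is not in $\mathcal{F}$.

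Given these subtleties, I expect the authors' proof to assume (harmlessly, after a preliminary remark) that all $A \cap B_i$ are nonempty—this is the case that is actually invoked later in the paper—so that $A \cup (B_{i_1} \cup \dots \cup B_{i_k}) \in \mathcal{F}$ for every sub-collection of blocks follows immediately from weak union-closedness by induction. Under that assumption the two-inequality-add-and-cancel scheme above goes through cleanly, and the whole proof is three or four lines. I would therefore present the induction on $p$ with the base case being the definition of $\mathcal{F}$-convexity, the inductive step splitting off $B_p$, and a one-line verification that all sets involved belong to $\mathcal{F}$ via weak union-closedness.
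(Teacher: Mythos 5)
Your proof plan is correct and is the natural argument (the paper itself only recalls this lemma from Grabisch and Skoda, 2012, without reproducing a proof): induction on $p$, the base case being exactly the definition of $\mathcal{F}$-convexity, and the inductive step applying $\mathcal{F}$-convexity to the pair $A \cup B'$ and $B_p$ where $B' = B_1 \cup \cdots \cup B_{p-1}$, using $(A \cup B') \cap B_p = A \cap B_p$ (disjointness of the blocks), then adding the induction hypothesis and cancelling $v(A \cup B')$. The long digression about blocks with $A \cap B_i = \emptyset$ is moot: since $\emptyset \notin \mathcal{F}$, the stated hypothesis $A \cap B_i \in \mathcal{F}$ already forces $A \cap B_i \neq \emptyset$ for every $i$, so $A \cup B_1 \cup \cdots \cup B_k \in \mathcal{F}$ for every $k$ follows immediately from weak union-closedness and no preliminary reduction or added assumption is needed.
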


%% SECTION INHERITANCE OF F-CONVEXITY
\section{Inheritance of $\mathcal{F}$-convexity}
\label{SectionInheritanceOfF-Convexity}
Let
$G = (N,E,w)$ be a weighted graph
and let $\mathcal{F}$ be the family of connected subsets of $N$.
We recall in this section necessary and sufficient conditions on the weight vector $w$
established by~\cite{Skoda2017}
for inheritance of $\mathcal{F}$-convexity
from the original game $(N,v)$ 
to the $\mathcal{P}_{\min}$-restricted game $(N,\overline{v})$.
We denote by $w_{k}$ or $w_{ij}$
the weight of an edge $e_{k} = \lbrace i,j \rbrace$ in $E$.

A star $S_k$ corresponds to a tree
with one internal vertex and $k$ leaves.
We consider a star $S_{3}$
with vertices ${1, 2, 3, 4}$ and edges $e_{1} = \lbrace 1, 2 \rbrace$,
$e_{2} = \lbrace 1, 3 \rbrace$ and $e_{3} = \lbrace 1, 4 \rbrace$.
Let us note that the edges $\lbrace 2, 3 \rbrace$, $\lbrace 3, 4 \rbrace$,
or $\lbrace 2, 4 \rbrace$ may exist in $G$.
\vspace{\baselineskip}

\begin{mdframed}
\textbf{Star Condition.}
\it
For every star of type $S_{3}$ in $G$,
the edge-weights satisfy
\[
w_{1} \leq w_{2} = w_{3},
\]
after renumbering the edges if necessary.
\end{mdframed}

\vspace{\baselineskip}

\begin{mdframed}
\textbf{Path Condition.}
\it
For every path $\gamma = \lbrace1, e_{1}, 2, e_{2}, 3, \ldots, m, e_{m}, m+1 \rbrace$ in $G$
and for all $i,j,k$ with $1 \leq i < j < k \leq m$,
the edge-weights satisfy
\[
w_{j} \leq \max (w_{i}, w_{k}).
\]
\end{mdframed}

\vspace{\baselineskip}

For a given cycle
$C = \lbrace 1, e_{1}, 2, e_{2}, \ldots, m, e_{m}, 1 \rbrace$
with $m \geq 3$,
we denote by $E(C)$ the set of edges
$\lbrace e_{1}, e_{2}, \ldots, e_{m} \rbrace$ of~$C$
and by $\hat{E}(C)$ the set composed of $E(C)$
and of the chords of $C$ in $G$.\\

\begin{mdframed}
\textbf{Cycle Condition.}
\it
For every cycle $C = \lbrace 1, e_{1}, 2, e_{2}, \ldots, m, e_{m}, 1 \rbrace$ in $G$ with $m \geq 3$,
the edge-weights satisfy
\[
w_{1} \leq w_{2} \leq w_{3} = \cdots = w_{m}= \hat{M},
\]
after renumbering the edges if necessary,
where $\hat{M} = \max_{e \in \hat{E}(C)} w(e)$.
Moreover,
$w(e) = w_{2}$ for all chord incident to $2$,
and $w(e) = \hat{M}$ for all $e \in \hat{E}(C)$
non-incident to~$2$.
\end{mdframed}

\vspace{\baselineskip}

For a given cycle $C$,
an edge $e$ in $\hat{E}(C)$ is a \emph{maximum weight edge} of~$C$
if $w(e) = \max_{e \in \hat{E}(C)} w(e)$.
Otherwise,
$e$ is a non-maximum weight edge of~$C$.
Moreover,
we call maximum
(resp. non-maximum)
weight chord of $C$
any maximum
(resp. non-maximum)
weight edge in $\hat{E}(C) \setminus E(C)$.

A \textit{pan graph} is a connected graph corresponding to the union of a cycle
and a path.
\vspace{\baselineskip}

\begin{mdframed}
\textbf{Pan Condition.}
\it
For every subgraph of $G$ corresponding to the union of a cycle
$C = \lbrace 1, e_{1}, 2, e_{2}, \ldots, e_{m}, 1 \rbrace$
with $m \geq 3$,
and a path $P$ such that there is an edge $e$ in $P$
with $w(e) \leq \min_{1 \leq k \leq m} w_{k}$
and $|V(C) \cap V(P)| = 1$,
the edge-weights satisfy
\begin{enumerate}[(a)]
\item
\label{eqPanConditionEither}
either $w_{1} = w_{2} = w_{3} = \cdots = w_{m}= \hat{M}$,
\item
\label{eqPanConditionOr}
or $w_{1} = w_{2} < w_{3} = \cdots = w_{m} = \hat{M}$,
\end{enumerate}
\noindent
where $\hat{M} = \max_{e \in \hat{E}(C)} w(e)$.
If Condition~(\ref{eqPanConditionOr}) is satisfied
then $V(C) \cap V(P) = \lbrace 2 \rbrace$,
and if moreover $w(e) < w_{1}$
then $\lbrace 1, 3 \rbrace$
is a maximum weight chord of $C$.
\end{mdframed}

\vspace{\baselineskip}

Two cycles are said \emph{adjacent}
if they share at least one common edge.

\vspace{\baselineskip}

\begin{mdframed}
\textbf{Adjacent Cycles Condition.}
\it
For all pairs $\lbrace C, C' \rbrace$ of adjacent cycles in $G$ such that
\begin{enumerate}[(a)]
\item
\label{enumPropV(C)-V(C)^{'}nonempty}
$V(C) \setminus V(C') \not= \emptyset$ and $V(C') \setminus V(C) \not= \emptyset$,
\item
\label{enumPropCatmost1non-maxweightchord}
$C$ has at most one non-maximum weight chord,
\item
\label{enumPropCC^{'}nomaxweightchord}
$C$ and $C'$ have no maximum weight chord,
\item
$C$ and $C'$ have no common chord,
\end{enumerate}
\noindent
$C$ and $C'$ cannot have two common non-maximum weight edges.
Moreover,
$C$ and $C'$ have a unique common non-maximum weight edge $e_{1}$
if and only if 
there are non-maximum weight edges
$e_{2} \in E(C) \setminus E(C')$
and $e_{2}' \in E(C') \setminus E(C)$
such that $e_{1}, e_{2}, e_{2}'$ are adjacent and
\begin{itemize}
\item
$w_{1} = w_{2} = w_{2}'$ if $|E(C)| \geq 4$ and $|E(C')| \geq 4$.
\item
$w_{1} = w_{2} \geq w_{2}'$
or $w_{1} = w_{2}' \geq w_{2}$
if $|E(C)| = 3$ or $|E(C')| = 3$.
\end{itemize}
\end{mdframed}

\vspace{\baselineskip}

The following characterization of inheritance of $\mathcal{F}$-convexity
was established by~\cite{Skoda2017}
for the correspondence $\mathcal{P}_{\min}$.
This result is also valid for the correspondence $\tilde{\mathcal{P}}_{\min}$
as the $\mathcal{P}_{\min}$-restricted game coincides with
the $\tilde{\mathcal{P}}_{\min}$-restricted game on connected subsets. 

%%  THEOREM
\begin{theorem}
\label{thPathBranchCyclePanAdjacentCond}
Let $\mathcal{F}$ be the family of connected subsets of $N$.
The $\mathcal{P}_{\min}$-restricted game $(N, \overline{v})$
(resp. the $\tilde{\mathcal{P}}_{\min}$-restricted game $(N, \hat{v})$) is $\mathcal{F}$-convex
for every superadditive and $\mathcal{F}$-convex game $(N,v)$
if and only if
the Star, Path, Cycle, Pan, and Adjacent cycles conditions are satisfied.
\end{theorem}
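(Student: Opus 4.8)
The plan is to prove the two implications separately, using the general results of Section~\ref{SectionPreliminaryDefinitionsResults}; throughout, $\mathcal{F}$ is the weakly union-closed family of connected subsets of $N$. By Corollary~\ref{corIGNEiafttsihfv} superadditivity is always inherited, so only the $\mathcal{F}$-convexity part is at stake, and since the $\mathcal{P}_{\min}$- and $\tilde{\mathcal{P}}_{\min}$-restricted games coincide on connected coalitions it suffices to argue for $\mathcal{P}_{\min}$.

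\emph{Necessity.} Assume $(N,\overline{v})$ is $\mathcal{F}$-convex for every superadditive $\mathcal{F}$-convex $(N,v)$. Each unanimity game $u_{S}$ is superadditive and supermodular, hence $\mathcal{F}$-convex, so $(N,\overline{u_{S}})$ is $\mathcal{F}$-convex for every $\emptyset\neq S\subseteq N$; by Theorem~\ref{thLGNuPNFNFSNuSSNNuSFABFABF} this is equivalent to the compatibility condition that $\mathcal{P}_{\min}(A)_{|A'}=\mathcal{P}_{\min}(B)_{|A'}$ for all $i\in N$, all connected $A\subseteq B\subseteq N\setminus\{i\}$ with $A\cup\{i\}$ connected, and all $A'\in\mathcal{P}_{\min}(A\cup\{i\})_{|A}$. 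To obtain each of the five graph conditions I would argue by contraposition: if one of them fails, I single out the offending subgraph---a star $S_{3}$, a path, a cycle with its chords, a cycle with a light pendant path, or a pair of adjacent cycles---and exhibit explicit connected coalitions $A\subseteq B$ and a vertex $i$ for which $\mathcal{P}_{\min}(A)_{|A'}$ and $\mathcal{P}_{\min}(B)_{|A'}$ differ, deletion of the minimum-weight edges producing incompatible splittings as soon as the forbidden weight pattern is present; equivalently one builds a unanimity game $u_{S}$ with $S$ connected for which $\Delta\overline{u_{S}}(A,B)<0$. The routine but lengthy part is matching, case by case, the precise chains of equalities and strict inequalities among edge-weights (including the clauses on chords and on common edges of adjacent cycles) to a concrete failure of the compatibility condition.

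\emph{Sufficiency.} Assume the Star, Path, Cycle, Pan and Adjacent Cycles conditions. By Corollary~\ref{corIGNEiafttsihfv} and Theorem~\ref{theoremLGNEbafalFbtfocsoN} it suffices to show that for every superadditive $\mathcal{F}$-convex $(N,v)$, every $i\in N$, and every connected $A\subseteq B\subseteq N\setminus\{i\}$ with $A\cup\{i\}$ connected one has $\Delta_{i}\overline{v}(B)\geq\Delta_{i}\overline{v}(A)$. Expanding both sides as alternating sums of $v$ over the partitions $\mathcal{P}_{\min}(A)$, $\mathcal{P}_{\min}(A\cup\{i\})$, $\mathcal{P}_{\min}(B)$, $\mathcal{P}_{\min}(B\cup\{i\})$, this becomes a supermodular-type inequality between four partitions. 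The key point is that passing from a connected coalition $D$ to $D\cup\{i\}$ affects $\mathcal{P}_{\min}$ only through $m_{D}=\min_{e\in E(D)}w(e)$, the new minimum $m_{D\cup\{i\}}$, and the weights of the edges joining $i$ to $D$; I would organize the proof around the cases $m_{D\cup\{i\}}<m_{D}$ and $m_{D\cup\{i\}}=m_{D}$ (the latter split according to whether the lightest edge at $i$ lies above, at, or below $m_{D}$), and similarly compare $m_{A}$ with $m_{B}$. In each case one checks, using the five graph conditions, that the blocks of the four partitions are nested compatibly enough for the inequality to follow from supermodularity of $v$ applied via Lemma~\ref{lemmavB+i=1pvBiA} to the block containing $i$, the remaining blocks being handled by superadditivity: the Star and Path conditions make the newly minimal edges incident to, or on paths through, the relevant vertices behave predictably, while the Cycle and Pan conditions (resp. the Adjacent Cycles condition) control how a coalition lying on one cycle (resp. straddling two adjacent cycles) splits into components.

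\emph{Expected main obstacle.} The hard part is sufficiency, and within it the configurations where adding $i$, or enlarging $A$ to $B$, deletes the minimum-weight edges of a cycle---or of a pair of adjacent cycles---so that a connected coalition breaks into several pieces. There the bookkeeping of which components arise, together with the need to invoke the exact equality patterns and the chord and common-edge clauses of the Cycle, Pan and Adjacent Cycles conditions, produces a long case analysis; the real work is to set it up so that every configuration is covered exactly once and each reduces to a single application of Lemma~\ref{lemmavB+i=1pvBiA} plus superadditivity of $v$.
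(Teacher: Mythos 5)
The first thing to say is that the paper does not prove Theorem~\ref{thPathBranchCyclePanAdjacentCond} at all: it is imported verbatim from the cited earlier work of Skoda (2017), and the only argument the present paper supplies is the one-line observation that $(N,\overline{v})$ and $(N,\hat{v})$ coincide on connected coalitions, so the characterization transfers from $\mathcal{P}_{\min}$ to $\tilde{\mathcal{P}}_{\min}$. You capture that transfer correctly in your opening paragraph, and your overall architecture is the expected one: necessity via unanimity games and condition 3) of Theorem~\ref{thLGNuPNFNFSNuSSNNuSFABFABF}, sufficiency via the derivative criterion of Theorem~\ref{theoremLGNEbafalFbtfocsoN} combined with Lemma~\ref{lemmavB+i=1pvBiA} and superadditivity. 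This is consistent with the machinery the paper recalls (Lemmas~\ref{lemPminvB-vB>=vA-vA2} and~\ref{lemminsAi>=sA>=sBorsA=sB>sAi} are exactly the kind of intermediate statements such a sufficiency proof runs on).

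That said, what you have written is a proof plan, not a proof, and the gap is not cosmetic. Every step that carries mathematical content is announced rather than executed: for necessity, none of the five contrapositive constructions is actually exhibited (for each condition one must produce the specific connected $A\subseteq B$, the vertex $i$, and the block $A'$ witnessing $\mathcal{P}_{\min}(A)_{|A'}\neq\mathcal{P}_{\min}(B)_{|A'}$, and verify that the failure of the stated weight pattern --- including the chord clauses and the common-edge clauses for adjacent cycles --- forces the discrepancy); for sufficiency, the entire case analysis is deferred with "one checks". Sufficiency is precisely the direction where the theorem could be false if the five conditions were too weak, and nothing in the proposal establishes that the blocks of $\mathcal{P}_{\min}(A)$, $\mathcal{P}_{\min}(A\cup\{i\})$, $\mathcal{P}_{\min}(B)$, $\mathcal{P}_{\min}(B\cup\{i\})$ interleave in the way a single application of Lemma~\ref{lemmavB+i=1pvBiA} requires --- indeed the needed structural facts (e.g.\ that $\mathcal{P}_{\min}(A)=\mathcal{P}_{\min}(B)_{|A}$ under suitable hypotheses, or the trichotomy of Lemma~\ref{lemminsAi>=sA>=sBorsA=sB>sAi}) are themselves nontrivial consequences of the five conditions and must be proved before the supermodularity argument can close. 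As it stands the proposal identifies the right road but does not travel it, so it cannot be accepted as a proof of the statement.
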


We finally recall two lemmas proved by~\cite{Skoda2017}
valid when some of the previous necessary conditions are satisfied.
The first one gives simple conditions ensuring that
$\mathcal{P}_{\min}(A)$ is induced by $\mathcal{P}_{\min}(B)$ for any subsets $A \subseteq B \subseteq N$.
The second one gives
restrictions on the minimum edge-weights of subsets.

We say that an edge $e \in E$ is connected to a subset $A \subseteq N$,
if there is a path in $G$ joining $e$ to $A$.
%%  LEMMA
\begin{lemma}
\label{lemPminvB-vB>=vA-vA2}
Let $\mathcal{F}$ be the family of connected subsets of $N$.
Let us consider $A, B \in \mathcal{F}$
with $A \subseteq B \subseteq N$, $|A| \geq 2$, and $\sigma(A) = \sigma(B)$.
Let us assume that
\begin{enumerate}
\item
The Pan condition is satisfied.
\item
$G_{B} = (B, E(B))$ is cycle-free
or there exists an edge $e \in E$ connected to $B$ with $w(e) < \sigma(B)$.
\end{enumerate}
Then
\begin{enumerate}
\item
$\mathcal{P}_{\min}(A) = \mathcal{P}_{\min}(B)_{|A}$.
\item
For every $\mathcal{F}$-convex game $(N,v)$, we have
\begin{equation}
\label{eqlemvB-vB>=vA-vA2}
v(B) - \overline{v}(B) \geq v(A) - \overline{v}(A).
\end{equation}
\end{enumerate}
\end{lemma}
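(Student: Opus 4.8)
\textbf{Proof proposal for Lemma~\ref{lemPminvB-vB>=vA-vA2}.}
The plan is to prove the two conclusions in order, the second relying on the first. For the first conclusion, $\mathcal{P}_{\min}(A) = \mathcal{P}_{\min}(B)_{|A}$, I would start from the hypothesis $\sigma(A) = \sigma(B)$, which guarantees that the minimum-weight edges deleted inside $G_A$ are exactly the edges of $\Sigma(B)$ lying in $E(A)$. Since $\mathcal{P}_{\min}$ partitions into connected components after deleting the minimum-weight edges, it is always true that $\mathcal{P}_{\min}(A)$ refines $\mathcal{P}_{\min}(B)_{|A}$ (this is the superadditivity/refinement direction already recorded before Corollary~\ref{corIGNEiafttsihfv}). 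So the only thing to rule out is that two vertices $i,j \in A$ lie in the same block of $\mathcal{P}_{\min}(B)_{|A}$ but in different blocks of $\mathcal{P}_{\min}(A)$: that would require a path in $G_B$ joining $i$ to $j$ through vertices of $B \setminus A$ and avoiding $\Sigma(B)$, with no corresponding path staying inside $A$. I would assume such a ``detour'' path exists and derive a contradiction by combining it with a path inside $G_A$ (which exists since $i,j$ are in the same component of $G_A$, as $A \in \mathcal{F}$ forces $A$ connected) to produce a cycle $C$ in $G_B$. Hypothesis~2 then supplies either that $G_B$ is cycle-free — immediately contradictory — or an edge $e$ connected to $B$ with $w(e) < \sigma(B)$; in the latter case the cycle $C$ together with a path from $e$ to $C$ forms a pan as in the Pan condition, and the Pan condition forces all edges of $C$ to have the large weight $\hat M$ (cases (a)/(b)), hence none of them is a minimum-weight edge of the detour, contradicting that the detour and the inside-path had different $\mathcal{P}_{\min}(A)$-behaviour. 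The careful bookkeeping of which edges lie in $\Sigma(B)$ versus $\Sigma(A)$, and checking the pan's defining inequality $w(e) \le \min_k w_k$ actually holds, is where I expect the real work.

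For the second conclusion I would use Lemma~\ref{lemmavB+i=1pvBiA}. Write $\mathcal{P}_{\min}(B) = \{B_1, \ldots, B_p\}$. By the first conclusion just established, $\mathcal{P}_{\min}(A) = \mathcal{P}_{\min}(B)_{|A} = \{A \cap B_1, \ldots, A \cap B_p\}$ (discarding empty intersections). Each $B_\ell$ is a block of a $\mathcal{P}_{\min}$-partition, hence connected, so $B_\ell \in \mathcal{F}$; each $A \cap B_\ell$ is a block of $\mathcal{P}_{\min}(A)$, hence connected, so $A \cap B_\ell \in \mathcal{F}$; and $A \in \mathcal{F}$ by hypothesis. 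Thus the hypotheses of Lemma~\ref{lemmavB+i=1pvBiA} are met with the partition $\{B_1, \ldots, B_p\}$ of $B$, and it yields
\[
v(A \cup B) + \sum_{\ell=1}^{p} v(A \cap B_\ell) \geq v(A) + \sum_{\ell=1}^{p} v(B_\ell).
\]
Since $A \subseteq B$ we have $A \cup B = B$, and by definition of the restricted game $\overline{v}(B) = \sum_{\ell} v(B_\ell)$ and $\overline{v}(A) = \sum_{\ell} v(A \cap B_\ell)$. Substituting gives $v(B) + \overline{v}(A) \geq v(A) + \overline{v}(B)$, which is precisely \eqref{eqlemvB-vB>=vA-vA2}.

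The main obstacle is the first part: translating hypothesis~2 (cycle-freeness, or the existence of a low-weight edge connected to $B$) into the statement that no ``detour through $B \setminus A$'' can merge two $\mathcal{P}_{\min}(A)$-blocks. This is exactly the kind of argument where one must set up the pan configuration precisely — identify the cycle, verify $|V(C) \cap V(P)| = 1$, check that the external edge has weight at most the cycle minimum — and then read off from the Pan condition that the offending edges on the cycle cannot be minimum-weight edges. Everything after that is the routine refinement-plus-supermodularity packaging described above. I would also flag that the statement $|A| \ge 2$ is used only to make $\sigma(A)$ well-defined and to ensure $A$ has at least one edge.
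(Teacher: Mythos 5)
First, a point of reference: the paper does not prove this lemma at all — it is explicitly recalled from Skoda (2017) without proof — so there is no in-paper argument to compare yours against line by line. Judged on its own terms, your proposal for part 1 is a plan rather than a proof. Your opening observation is correct and worth keeping: since $\sigma(A)=\sigma(B)$ and $E(A)\subseteq E(B)$, one has $\Sigma(B)\cap E(A)=\Sigma(A)$, so $\mathcal{P}_{\min}(A)$ automatically refines $\mathcal{P}_{\min}(B)_{|A}$ and only the merging direction needs work; the cycle-free alternative of hypothesis 2 is then immediate because in a tree the $G_B$-path between two vertices of a connected $A$ stays in $A$. But in the remaining case you defer exactly the step that constitutes the lemma: extracting a \emph{simple} cycle $C$ from the union of the detour $\gamma\subseteq B\setminus\Sigma(B)$ and a path inside $A$ (these two walks can overlap, so this is not automatic), verifying that the low-weight edge $e$ with $w(e)<\sigma(B)$ together with a connecting path meets the Pan condition's hypotheses ($w(e)\le\min_k w_k$ and $|V(C)\cap V(P)|=1$), and then showing that both alternatives (\ref{eqPanConditionEither}) and (\ref{eqPanConditionOr}) are incompatible with $C$ containing simultaneously an edge of weight $\sigma(B)$ (from the inside path) and edges of weight $>\sigma(B)$ (from $\gamma$). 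You flag this as "where the real work is," which is accurate — but it means the core of part 1 is not actually supplied.

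Part 2 contains a concrete gap. Lemma~\ref{lemmavB+i=1pvBiA} requires $A\cap B_i\in\mathcal{F}$ — in particular nonempty — for \emph{every} block $B_i$ of the partition of $B$, whereas blocks of $\mathcal{P}_{\min}(B)$ disjoint from $A$ arise generically (e.g.\ $A$ a two-vertex subpath of a constant-weight path $B$, where $\mathcal{P}_{\min}(B)$ is all singletons). You acknowledge "discarding empty intersections" when identifying $\mathcal{P}_{\min}(A)$ with $\mathcal{P}_{\min}(B)_{|A}$, but then apply the lemma to the full partition $\lbrace B_1,\ldots,B_p\rbrace$, which its hypotheses do not permit. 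Restricting to the blocks meeting $A$ yields only $v\bigl(\bigcup_{i\in I}B_i\bigr)+\sum_{i\in I}v(A\cap B_i)\ge v(A)+\sum_{i\in I}v(B_i)$, and closing the gap to $v(B)-\overline{v}(B)\ge v(A)-\overline{v}(A)$ requires a superadditivity-type inequality $v(B)\ge v\bigl(\bigcup_{i\in I}B_i\bigr)+\sum_{i\notin I}v(B_i)$ that $\mathcal{F}$-convexity alone (for $\mathcal{F}$ the connected sets) does not deliver. In every application inside this paper the game is convex, hence superadditive, so the issue is harmless there; but your write-up should either invoke superadditivity explicitly or restrict attention to the blocks meeting $A$ and justify the remaining step, rather than apply Lemma~\ref{lemmavB+i=1pvBiA} outside its stated hypotheses.
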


%%  LEMMA
\begin{lemma}
\label{lemminsAi>=sA>=sBorsA=sB>sAi}
Let $\mathcal{F}$ be the family of connected subsets of $N$.
Let us assume that the Path and Star conditions are satisfied.
For all $i \in N$,
for all $A \subseteq B \subseteq N \setminus \lbrace i \rbrace$
with $A$ and $B$ in $\mathcal{F}$,
$|A| \geq 2$,
and $E(A,i) \not= \emptyset$,
we have
\begin{enumerate}
\item
either $\sigma(A,i) \geq \sigma (A) \geq \sigma (B)$,
\item
or $\sigma (A) = \sigma(B) > \sigma (A,i) $,
\end{enumerate}
where $\sigma(A,i) = \min_{e \in E(A,i)} w(e)$.
\end{lemma}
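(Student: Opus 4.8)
The plan is first to collapse the trichotomy into a single implication. Since $A\subseteq B$, we have $E(A)\subseteq E(B)$, and $E(A)\neq\emptyset$ because $|A|\geq 2$ and $A$ is connected; hence $\sigma(A)\geq\sigma(B)$ holds unconditionally. Thus alternative~1 reduces to the inequality $\sigma(A,i)\geq\sigma(A)$, which is immediate whenever it holds, and it only remains to handle the case $\sigma(A,i)<\sigma(A)$: I would show that in this case every edge of $E(B)$ has weight at least $\sigma(A)$; combined with $\sigma(A)\geq\sigma(B)$ this forces $\sigma(A)=\sigma(B)$, and together with $\sigma(A)>\sigma(A,i)$ this is precisely alternative~2. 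So assume $\sigma(A,i)<\sigma(A)$ and suppose, for a contradiction, that some edge $e'\in E(B)$ satisfies $w(e')<\sigma(A)$; then $e'\notin E(A)$, as every edge of $E(A)$ has weight $\geq\sigma(A)$. Fix $e=\{a,i\}\in E(A,i)$ with $w(e)=\sigma(A,i)<\sigma(A)$; since $|A|\geq 2$ and $A$ is connected, $a$ has a neighbour $a''\in A$, and $w(\{a,a''\})\geq\sigma(A)$.

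If $e'$ is incident to $a$, say $e'=\{a,b\}$, then $e$, $e'$ and $\{a,a''\}$ form a star $S_{3}$ centred at $a$, with pairwise distinct leaves $i$, $b$, $a''$, in which $\{a,a''\}$ is the strictly heaviest edge (its weight is $\geq\sigma(A)$ while the other two are $<\sigma(A)$); this contradicts the Star condition, which demands that the two largest edge-weights of any $S_{3}$ be equal. Hence $e'$ is not incident to $a$, and the idea is now to build a simple path of $G$ that violates the Path condition. I would construct a simple path $P$ of $G$ as follows: begin at $i$ with the edge $e$, reaching $a$; then stay inside $A$, following edges of $E(A)$ (each of weight $\geq\sigma(A)$), until reaching an endpoint of $e'$ if $e'$ is incident to $A$, or else a vertex $a^{*}\in A$ adjacent to the connected component $D$ of $G_{B\setminus A}$ containing $e'$ --- such an edge from $A$ to $D$ exists because $B$ is connected whereas $D$ is a non-empty proper subset of $B$ disjoint from $A$; in that latter case, cross to $D$ and continue inside $D$ up to an endpoint of $e'$; finally traverse $e'$. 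Truncating $P$ at the first endpoint of $e'$ it reaches keeps $P$ simple, while $i\notin B$ and $A\cap D=\emptyset$ rule out any other vertex repetition. Because $e'$ is not incident to $a$, the path $P$ has at least three edges, its first being $e$ and its last being $e'$, both of weight $<\sigma(A)$.

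Now let $f$ be the second edge of $P$; it occurs strictly before $e'$. If $w(f)\geq\sigma(A)$, then $e$, $f$, $e'$ appear in this order along $P$ and the middle weight strictly exceeds the maximum of the two outer ones, contradicting the Path condition. Otherwise $w(f)<\sigma(A)$, which forces $f\notin E(A)$, so by construction $f=\{a,c\}$ with $c\in D$; then $e$, $\{a,c\}$ and $\{a,a''\}$ form a star $S_{3}$ centred at $a$, with pairwise distinct leaves $i$, $c$, $a''$, whose strictly heaviest edge is $\{a,a''\}$, once more contradicting the Star condition. In every case we reach a contradiction, so no such $e'$ exists, $\sigma(A)=\sigma(B)$, and the lemma is proved.

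The step I expect to be the main obstacle is the bookkeeping in the construction of $P$: checking that it is a genuine simple path for every possible position of $e'$ with respect to $A$, and isolating the single configuration --- $e'$ ``hanging off'' $a$ through the component $D$ --- that the Path condition cannot settle on its own and that makes the Star condition indispensable. The remaining work is routine once the offending triple of edges is identified.
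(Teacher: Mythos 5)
Your proof is correct. Note that this paper does not actually contain a proof of Lemma~\ref{lemminsAi>=sA>=sBorsA=sB>sAi}: it is only recalled from Skoda (2017), so there is no in-paper argument to compare against. Your route --- observing that $\sigma(A)\geq\sigma(B)$ is automatic, reducing everything to showing that $\sigma(A,i)<\sigma(A)$ forces every edge of $E(B)$ to have weight at least $\sigma(A)$, and then refuting a hypothetical light edge $e'$ by exhibiting either a star at $a$ violating the Star condition or a simple path from $i$ through $e$ to $e'$ violating the Path condition --- is the natural argument, and you handle the delicate points correctly (the second edge of the path is either in $E(A)$, hence heavy, or is an $a$--$D$ crossing edge, in which case the Star condition at $a$ applies; simplicity of the constructed path; distinctness of the star's leaves using $i\notin B$ and $e'\notin E(A)$).
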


\section{Inheritance of convexity for the $\tilde{\mathcal{P}}_{\min}$-restricted game}
\label{SubsectionInheritanceOfConvexityForTheTildePmin-RestrictedGame}

Throughout this section,
$\mathcal{F}$ is the family of connected subsets.

We get in this section
a characterization of inheritance of convexity
for the correspondence $\tilde{\mathcal{P}}_{\min}$.
Of course,
as the class of convex games is contained
in the class of $\mathcal{F}$-convex games,
the necessary conditions
for inheritance of $\mathcal{F}$-convexity
recalled in Section~\ref{SectionInheritanceOfF-Convexity}
are also necessary
for inheritance of convexity.
But these conditions are not sufficient
and we establish three new necessary conditions.
\vspace{\baselineskip}
\begin{mdframed}
\textbf{Reinforced Cycle Condition.}
\it
For every cycle
$C_{m} = \lbrace 1, e_{1}, 2, e_{2}, \ldots,\\ m, e_{m}, 1\rbrace$
with $m \geq 4$
and $w_{3} = \ldots = w_{m} = \hat{M} = \max_{e \in \hat{E}(C_{m})} w(e)$:
\begin{enumerate}
\item
\label{itemReinforcedCycleCondition1}
If $w_{1} < w_{3}$
(resp. $w_{2} < w_{3}$),
then any edge incident to $j$ with $4 \leq j \leq m-1$
(resp. $5 \leq j \leq m$)
is linked to $e_1$
(resp. $e_2$)
by an edge.
\item
\label{itemReinforcedCycleCondition2a}
If $\max(w_1, w_2) < w_3$,
then any edge incident to $j$ with $4 \leq j \leq m$ is linked to $2$ by an edge.
\end{enumerate}
\end{mdframed}

%%  Proposition
\begin{proposition}
If for all $\emptyset \not= S \subseteq N$
the $\tilde{\mathcal{P}}_{\min}$-restricted game $(N, \overline{u_{S}})$ is convex,
then the Reinforced Cycle Condition is satisfied.
\end{proposition}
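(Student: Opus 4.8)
The plan is to recast convexity of the $\tilde{\mathcal{P}}_{\min}$-restricted unanimity games as a combinatorial condition on the correspondence, and then to produce an explicit violation of that condition out of any failure of the Reinforced Cycle Condition. Write $(N,\hat{u}_{S})$ for the $\tilde{\mathcal{P}}_{\min}$-restricted game associated with $u_{S}$. Since $u_{S}$ is superadditive for every $\emptyset\neq S\subseteq N$, Corollary~\ref{corIGNEiafttsihfv} gives that $(N,\hat{u}_{S})$ is superadditive, so the hypothesis of Theorem~\ref{thLGNuPNFNFSNuSSNNuSFABFABF} is met; applying that theorem with $\mathcal{F}=2^{N}\setminus\{\emptyset\}$, for which $\mathcal{F}$-convexity coincides with ordinary convexity, convexity of $(N,\hat{u}_{S})$ for every $\emptyset\neq S\subseteq N$ is equivalent to its condition~3) for $\tilde{\mathcal{P}}_{\min}$: for all $i\in N$, all $A\subseteq B\subseteq N\setminus\{i\}$, and all $A'\in\tilde{\mathcal{P}}_{\min}(A\cup\{i\})_{|A}$,
\[
\tilde{\mathcal{P}}_{\min}(A)_{|A'}=\tilde{\mathcal{P}}_{\min}(B)_{|A'}.
\]
So it suffices to show that a failure of the Reinforced Cycle Condition forces a failure of this identity for some $i,A,B,A'$.

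Assume then that the Reinforced Cycle Condition fails, witnessed by a cycle $C_{m}=\{1,e_{1},2,e_{2},\dots,m,e_{m},1\}$ with $m\geq4$ and $w_{3}=\cdots=w_{m}=\hat{M}$. If some of the Star, Path, Cycle, Pan, or Adjacent Cycles conditions also fails, then some $(N,\hat{u}_{S})$ is not even $\mathcal{F}$-convex, a fortiori not convex, and we are done; so we may assume all five hold and use them to constrain the edge-weights of $C_{m}$ and of the edges leaving it (the Cycle Condition fixes all weights in $\hat{E}(C_{m})$, and the Star and Path conditions constrain the weight of any edge incident to a vertex $j\in\{4,\dots,m-1\}$ as well as the weights along any structure attached there). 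The reflection of $C_{m}$ through vertex~$2$ interchanges $e_{1}$ and $e_{2}$ and carries $\{4,\dots,m-1\}$ onto $\{5,\dots,m\}$, so the two "resp." alternatives in clause~1 are symmetric; it is thus enough to treat clause~1 under $w_{1}<w_{3}$, and clause~2.

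In the case of clause~1, fix an edge $h$, incident to a vertex $j$ with $4\leq j\leq m-1$, that is not linked to $e_{1}$ by an edge, denote by $k$ its other endpoint, and let $\Gamma$ be the connected piece of $G$ hanging on $C_{m}$ at $j$ through $h$. We take $i:=1$, we take $A$ to be a connected subset of $V(C_{m})\setminus\{1\}$ containing vertex~$2$ and vertex~$j$, chosen so that $A\cup\{1\}$ is connected and, in the principal subcase $w_{1}<w_{2}<\hat{M}$, the Cycle Condition makes $e_{1}$ the unique minimum edge of $G_{A\cup\{1\}}$, so that the block $F$ of $\tilde{\mathcal{P}}_{\min}(A\cup\{1\})$ meeting $A$ restricts to $A'=A$ while $\tilde{\mathcal{P}}_{\min}(A)$ detaches vertex~$2$ (the edge $e_{2}$, now of minimum weight in $G_{A}$, being deleted). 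We then take $B:=A\cup\Gamma$ (and, if needed, further vertices of $C_{m}$), so that $A\subseteq B\subseteq N\setminus\{1\}$ and $B$ is connected; adjoining $\Gamma$ changes the partition on $A$, forcing vertex~$2$ back into the block of~$3$ in $\tilde{\mathcal{P}}_{\min}(B)$ while leaving the rest of the restriction to $A'$ intact. Hence $\tilde{\mathcal{P}}_{\min}(A)_{|A'}\neq\tilde{\mathcal{P}}_{\min}(B)_{|A'}$, and condition~3) is violated. The hypothesis that $h$ is \emph{not} linked to $e_{1}$ by an edge is used precisely to make this configuration realizable at~$j$ with no edge reaching an endpoint of $e_{1}$, and to keep the change of partition confined to $A'$ as described — which is exactly the configuration the Reinforced Cycle Condition rules out. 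Clause~2 is handled in the same way, choosing $i$ so that the mismatch concerns vertex~$2$ itself and invoking the stronger "linked to~$2$" requirement of that clause.

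The main obstacle is the case bookkeeping. One must split into the subcases $w_{1}<w_{2}$ and $w_{1}=w_{2}$ (and symmetrically for clause~2), treat separately the subcase $w_{2}=\hat{M}$, distinguish whether $h$ is a chord of $C_{m}$ or leaves $V(C_{m})$, and track the position of $j$ relative to the vertices on which the Cycle Condition forces the weight $w_{2}$ or $\hat{M}$; in each subcase one must exhibit a concrete connected $B$ inside the given graph, check connectedness of $A$, $B$, and $A\cup\{i\}$, compute $\tilde{\mathcal{P}}_{\min}$ of the few coalitions involved, and verify that the two restricted partitions of $A'$ genuinely differ. Verifying this last point is where the hypothesis that the offending edge is not linked to $e_{1}$ (respectively to $e_{2}$, respectively to vertex~$2$) by an edge enters essentially.
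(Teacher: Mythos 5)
Your reduction to condition~3) of Theorem~\ref{thLGNuPNFNFSNuSSNNuSFABFABF} is legitimate, but the construction you build on top of it cannot work, because every coalition you use is connected. Once you assume (as you do, w.l.o.g.) that the Star, Path, Cycle, Pan and Adjacent Cycles conditions hold, Theorem~\ref{thPathBranchCyclePanAdjacentCond} already gives inheritance of $\mathcal{F}$-convexity for the family $\mathcal{F}$ of connected subsets, and hence, by Theorem~\ref{thLGNuPNFNFSNuSSNNuSFABFABF} applied to that family, the identity $\tilde{\mathcal{P}}_{\min}(A)_{|A'}=\tilde{\mathcal{P}}_{\min}(B)_{|A'}$ holds for \emph{all} connected $A$, $B$, $A\cup\{i\}$. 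Your choice $i=1$, $A\subseteq V(C_m)\setminus\{1\}$ connected, $B=A\cup\Gamma$ connected can therefore never violate condition~3). Concretely, the mechanism you describe --- adjoining $\Gamma$ ``forces vertex~$2$ back into the block of~$3$'' --- would require $\sigma(B)<\sigma(A)=w_2$, i.e., an edge of weight strictly less than $w_2$ in $\Gamma$; but the Path condition applied to $\{1,e_1,2,e_2,3,\dots,j\}$ prolonged into $\Gamma$ forces every edge of $\Gamma$ to have weight at least $w_2$ (since $w_1<w_2$), so $e_2$ remains a minimum weight edge of $B$, vertex~$2$ stays detached, and the two restricted partitions of $A'$ coincide.

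The missing idea is that the necessity of the reinforced conditions comes from \emph{non-connected} coalitions, on which $\tilde{\mathcal{P}}_{\min}$ genuinely differs from $\mathcal{P}_{\min}$. The paper instead uses condition~2) of Theorem~\ref{thLGNuPNFNFSNuSSNNuSFABFABF} with $\mathcal{F}=2^{N}\setminus\lbrace\emptyset\rbrace$: it picks two \emph{connected} sets, e.g.\ $A=\lbrace 1,2,3,j,j'\rbrace$ and $B=(V(C_m)\setminus\lbrace 3\rbrace)\cup\lbrace j'\rbrace$, whose intersection $\lbrace 1,2,j,j'\rbrace$ is non-connected precisely when the edge $e=\lbrace j,j'\rbrace$ is not linked to $e_1$; then $\tilde{\mathcal{P}}_{\min}(A\cap B)$ consists only of singletons (each two-element component loses its unique, hence minimum-weight, edge), while condition~2) forces a block of $\tilde{\mathcal{P}}_{\min}(A\cap B)$ containing $\lbrace j,j'\rbrace$ --- a contradiction. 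Some non-connected coalition of this kind is indispensable; your argument as written has no access to one, so the gap is not mere bookkeeping but a wrong choice of witnessing sets.
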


\begin{proof}
Let us assume $w_1 < w_3$
(resp. $\max(w_1,w_2) < w_3$).
Let us note that the proof with $w_2 < w_3$ is similar.
Let us consider $e = \lbrace j, j' \rbrace$
with $4 \leq j \leq m$.
If $e \in E(C_m)$,
then $w(e) = \hat{M}$.
Otherwise,
the Star condition applied to $\lbrace e, e_{j-1}, e_{j} \rbrace$
implies $w(e) \leq \hat{M}$.
The Path condition applied
to $\lbrace 1, e_{1}, 2, e_{2}, \ldots e_{j-1}, j, e \rbrace$
implies $\hat{M} = w_{j-1} \leq \max (w_{1}, w(e))$.
As $w_{1} < \hat{M}$,
we get $w(e) = \hat{M}$.  
Let us assume $j=4$
(the other cases are similar)
and $\lbrace 1, 2, j, j' \rbrace$
(resp. $\lbrace 2, j, j' \rbrace$)
non-connected.
Let us consider
$A = \lbrace 1, 2, 3, j, j' \rbrace$
(resp. $A = \lbrace 2, 3, j, j' \rbrace$)
and $B= (V(C_{m}) \setminus \lbrace 3 \rbrace) \cup \lbrace j'\rbrace$
as represented in Figure~\ref{fig2ProofReinforcedCycleCondition}
(resp.  Figure~\ref{fig2ProofReinforcedCycleCondition-2a})
with $m=6$ and $j' \not= 5$.
%%  Figure
\begin{figure}[!h]
\centering
\subfloat[$w_1 < w_3$.]{
\begin{pspicture}(-.5,-.3)(1,2)
\tiny
\begin{psmatrix}[mnode=circle,colsep=0.4,rowsep=0.2]
 	  & {$3$}	 	& {$4$}  & {$j'$}\\
{$2$}	&	&		&{$5$}\\
 	  & {$1$}& 	  {$6$}
\psset{arrows=-, shortput=nab,labelsep={0.08}}
\tiny
%%  Cycle
\ncline{2,1}{3,2}_{$w_{1}$}
\ncline{2,1}{1,2}^{$w_{2}$}
\ncline{3,2}{3,3}_{$\hat{M}$}
\ncline{3,3}{2,4}_{$\hat{M}$}
\ncline{1,2}{1,3}^{$\hat{M}$}
\ncline{1,3}{2,4}_{$\hat{M}$}
\ncline{1,3}{1,4}^{$w(e)$}
\normalsize
\uput[0](.9,0){\textcolor{blue}{$B$}}
\pspolygon[framearc=1,linecolor=blue,linearc=.4,linewidth=.02]
(-2.4,.6)(-2.2,1.5)(-1.2,.8)(-.5,.5)(-.4,1)(-.4,2)(1.4,2)(1.4,.6)(.1,-.35)(-1.2,-.35)
\uput[0](-2,1.8){\textcolor{red}{$A$}}
\pspolygon[framearc=1,linecolor=red,linearc=.4,linewidth=.02]
(-2.4,.8)(-1.1,1.9)(1.3,1.9)(1.3,1.1)(-1,1.1)(-1.1,.7)(-.4,.2)(-.9,-.45)
\end{psmatrix}
\end{pspicture}
\label{fig2ProofReinforcedCycleCondition}
}
\subfloat[$\max(w_1,w_2)<w_3$.]{
\begin{pspicture}(-.5,-.3)(1,2)
\tiny
\begin{psmatrix}[mnode=circle,colsep=0.4,rowsep=0.2]
 	  & {$3$}	 	& {$4$}  & {$j'$}\\
{$2$}	&	&		&{$5$}\\
 	  & {$1$}& 	  {$6$}
\psset{arrows=-, shortput=nab,labelsep={0.08}}
\tiny
%%  Cycle
\ncline{2,1}{3,2}_{$w_{1}$}
\ncline{2,1}{1,2}^{$w_{2}$}
\ncline{3,2}{3,3}_{$\hat{M}$}
\ncline{3,3}{2,4}_{$\hat{M}$}
\ncline{1,2}{1,3}^{$\hat{M}$}
\ncline{1,3}{2,4}_{$\hat{M}$}
\ncline{1,3}{1,4}^{$w(e)$}
\normalsize
\uput[0](.9,0){\textcolor{blue}{$B$}}
\pspolygon[framearc=1,linecolor=blue,linearc=.4,linewidth=.02]
(-2.4,.6)(-2.2,1.5)(-1.2,.8)(-.5,.5)(-.4,1)(-.4,2)(1.4,2)(1.4,.6)(.1,-.35)(-1.2,-.35)
\uput[0](-2,1.8){\textcolor{red}{$A$}}
\pspolygon[framearc=1,linecolor=red,linearc=.4,linewidth=.02]
(-2.3,.9)(-1.1,1.9)(1.3,1.9)(1.3,1.1)(-.9,1.1)(-1.9,0)
\end{psmatrix}
\end{pspicture}
\label{fig2ProofReinforcedCycleCondition-2a}
}
\caption{$e$ incident to $4$.}
\label{fig2ProofReinforcedCycleCondition-2}
\end{figure}
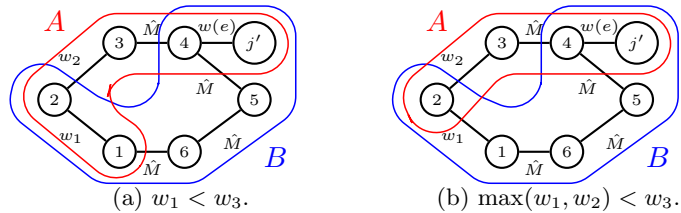
As $A \in \mathcal{F}$ and $B \in \mathcal{F}$,
we have $\tilde{\mathcal{P}}_{\min}(A) = \mathcal{P}_{\min}(A)$
and $\tilde{\mathcal{P}}_{\min}(B) = \mathcal{P}_{\min}(B)$.
As $w_1 < \hat{M}$
(resp.  $\max(w_1,w_2) < \hat{M}$),
there are components $A'$
in $\mathcal{P}_{\min}(A)$
and  $B'$ in $\mathcal{P}_{\min}(B)$
both containing $\lbrace j, j' \rbrace$.
By Theorem~\ref{thLGNuPNFNFSNuSSNNuSFABFABF} (applied with $\mathcal{F}=2^{N} \setminus \lbrace \emptyset \rbrace$),
$A' \cap B'$ is a component of $\tilde{\mathcal{P}}_{\min}(A \cap B)$
containing $\lbrace j, j' \rbrace$.
But we have $\tilde{\mathcal{P}}_{\min}(A \cap B)=
\lbrace \lbrace 1 \rbrace, \lbrace 2 \rbrace, \lbrace j \rbrace, \lbrace j' \rbrace \rbrace$
(resp. $\tilde{\mathcal{P}}_{\min}(A \cap B)=
\lbrace \lbrace 2 \rbrace, \lbrace j \rbrace, \lbrace j' \rbrace \rbrace$),
a contradiction.
\end{proof}

\begin{mdframed}
\textbf{Reinforced Pan Condition}
\it
For all connected subgraphs corresponding to the union of a cycle
$C_m= \lbrace 1, e_{1}, 2,  e_{2}, 3, \ldots, m, e_{m}, 1 \rbrace$
with $m \geq 4$,
satisfying $w_1 \leq w_2 \leq w_3 = \ldots = w_m = \hat{M} = \max_{e \in \hat{E}(C)} w(e)$,
and a path $P$ containing an edge $e$ with $w(e) < \hat{M}$
and $V(C_m) \cap V(P) = \lbrace 2 \rbrace$:
\begin{enumerate}[(a)]
\item
\label{itemReinforcedPanConditionb}
If $w(e) < w_1$,
then any vertex $j$ with $4 \leq j \leq m$
is linked to $P$ by an edge in $E$.
\item
\label{itemReinforcedPanConditionc}
If $w(e) < w_1 < \hat{M}$,
then ($w_1 = w_2$ and) any vertex $j$ with $4 \leq j \leq m$
is linked to $2$ by an edge in $E$.
\end{enumerate}
\end{mdframed}

\begin{proposition}
If for all $\emptyset \not= S \subseteq N$
the $\tilde{\mathcal{P}}_{\min}$-restricted game $(N, \overline{u_{S}})$ is convex,
then the Reinforced Pan Condition is satisfied.
\end{proposition}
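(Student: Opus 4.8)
The plan is to argue by contraposition, in the spirit of the proof of the Reinforced Cycle Condition. Assume that the $\tilde{\mathcal{P}}_{\min}$-restricted game $(N,\overline{u_{S}})$ is convex for every $\emptyset\neq S\subseteq N$. By Corollary~\ref{corIGNEiafttsihfv} it is also superadditive, so Theorem~\ref{thLGNuPNFNFSNuSSNNuSFABFABF}, applied with $\mathcal{F}=2^{N}\setminus\{\emptyset\}$, yields that for all $A,B\subseteq N$,
\begin{equation}
\label{eqProductPartitionReinforcedPan}
\tilde{\mathcal{P}}_{\min}(A\cap B)=\bigl\{\,A'\cap B'\ :\ A'\in\tilde{\mathcal{P}}_{\min}(A),\ B'\in\tilde{\mathcal{P}}_{\min}(B),\ A'\cap B'\neq\emptyset\,\bigr\}.
\end{equation}
Since convexity of $(N,\overline{u_{S}})$ implies its $\mathcal{F}$-convexity, Theorem~\ref{thPathBranchCyclePanAdjacentCond} gives us the Star, Path, Cycle, Pan and Adjacent cycles conditions, and the preceding proposition gives us the Reinforced Cycle Condition; all of these may be used freely. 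In particular, in the situation of the statement the Pan condition forces $w_{1}=w_{2}$ as soon as $w(e)<w_{1}$, and, when moreover $w_{1}<\hat{M}$, it forces the chord $\{1,3\}$ to be present in $G$ with weight $\hat{M}$ (this is exactly where the equality $w_{1}=w_{2}$ claimed in~(\ref{itemReinforcedPanConditionc}) comes from).

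Before constructing coalitions I normalise the path. The Path condition forces the edge-weights along any path to be unimodal, with a single contiguous ``valley''; hence I may replace $P$ by a chord-free sub-path $2=v_{0},f_{1},v_{1},\ldots,f_{s},v_{s}$ starting at the common vertex $2$ and ending at the far endpoint $v_{s}$ of a conveniently chosen edge $e=f_{s}$ with $w(e)<w_{1}$, arranged so that $w(e)=\min_{1\le k\le s}w(f_{k})$ and so that, after deleting the edges of weight $w(e)$, the vertex $v_{s}$ becomes isolated while $v_{0}=2$ stays attached to a (possibly empty) initial segment $v_{1},\ldots,v_{t}$ of $P$. By relabelling $C_{m}$ I also reduce to the case $j=4$, the remaining positions of $j$ being symmetric.

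For part~(\ref{itemReinforcedPanConditionb}) I assume $w(e)<w_{1}$ and that $4$ has no edge to $V(P)$, and I use the coalitions $B=(V(C_{m})\setminus\{3\})\cup\{v_{1},\ldots,v_{s}\}$ and $A=\{1,2,3,4\}\cup\{v_{1},\ldots,v_{s}\}$ (with the evident modification when $m=4$, and, in the degenerate case where the valley of $P$ already reaches $2$, the chord $\{1,3\}$ is used to glue $4$ to $1$ inside $A$). Then $A,B\in\mathcal{F}$ and $\sigma(A)=\sigma(B)=w(e)<w_{1}\le\hat{M}$, so no minimum-weight edge of $A$ or of $B$ lies on $C_{m}$; deleting the minimum-weight edges therefore keeps the relevant cycle-arc intact in each of $\tilde{\mathcal{P}}_{\min}(A)$ and $\tilde{\mathcal{P}}_{\min}(B)$, and there are blocks $A'\in\tilde{\mathcal{P}}_{\min}(A)$ and $B'\in\tilde{\mathcal{P}}_{\min}(B)$ both containing $4$ together with a common companion vertex (the vertex $v_{1}$, resp.\ the vertex $2$ in part~(\ref{itemReinforcedPanConditionc})). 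By~\eqref{eqProductPartitionReinforcedPan}, $A'\cap B'$ is a block of $\tilde{\mathcal{P}}_{\min}(A\cap B)$ containing $4$ and that companion vertex. On the other hand $A\cap B=\{1,2,4\}\cup\{v_{1},\ldots,v_{s}\}$, and in this disconnected set the vertex $4$ is separated from $\{1,2\}\cup\{v_{1},\ldots,v_{s}\}$ once the edges of weight $w(e)$ are removed: $4$ has no edge to $V(P)$ by assumption, and its only other possible edges inside $A\cap B$ are the chords $\{1,4\}$ and $\{2,4\}$ of $C_{m}$, whose weights are pinned down by the Cycle condition and whose presence in the remaining configurations is ruled out by the Reinforced Cycle Condition. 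Hence $4$ and its companion lie in different blocks of $\tilde{\mathcal{P}}_{\min}(A\cap B)$, a contradiction. Part~(\ref{itemReinforcedPanConditionc}) is identical in structure, using that $w_{1}<\hat{M}$ makes the chord $\{1,3\}$ of weight $\hat{M}$ available, choosing $A$ and $B$ so that $4$ lies in the block of $2$ in both and an $A\cap B$ containing $2$ and $4$ but not $3$ separates them.

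I expect the main obstacle to be precisely the verification that, in $A\cap B$, the tracked vertex $j$ is genuinely disconnected from its companion after the minimum-weight edges are deleted. This forces a complete accounting of the chords of $C_{m}$ (the Cycle condition fixing their weights: $w_{2}$ if incident to $2$, $\hat{M}$ otherwise) and of the edges between $C_{m}$ and $P$, hence a case split according to the position of $j$ in $\{4,\ldots,m\}$, the size of the cycle (notably $m=4$), and the two regimes $w_{1}=w_{2}=\hat{M}$ and $w_{1}=w_{2}<\hat{M}$. The Reinforced Cycle Condition, already available, is exactly what excludes the chords that would otherwise reconnect $j$ to its companion, and the only place where the coalitions above must be replaced is the degenerate sub-case in which the valley of $P$ already starts at the vertex $2$, which is handled via the chord $\{1,3\}$.
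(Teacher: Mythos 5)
Your overall strategy --- building connected coalitions $A$ and $B$ whose intersection is disconnected and invoking Theorem~\ref{thLGNuPNFNFSNuSSNNuSFABFABF} with $\mathcal{F}=2^{N}\setminus\lbrace\emptyset\rbrace$ to force a contradiction --- is exactly the paper's, but your choice of coalitions for part~(\ref{itemReinforcedPanConditionb}) does not work. You put vertex $1$ into $A$, so that $A\cap B=\lbrace 1,2,4\rbrace\cup\lbrace v_1,\ldots,v_s\rbrace$, and you then need $4$ to be cut off from its companion in $\tilde{\mathcal{P}}_{\min}(A\cap B)$. When $m=4$ the cycle edge $e_4=\lbrace 4,1\rbrace$ of weight $\hat{M}$ lies inside $A\cap B$, and for $m\geq 5$ the chord $\lbrace 1,4\rbrace$ may be present with weight $\hat{M}$; in either case $4$ stays attached to $1$, hence to $2$ through $e_1$ (whose weight $w_1$ exceeds $\sigma(A\cap B)=w(e)$), and no contradiction arises. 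Your appeal to the Reinforced Cycle Condition to exclude these edges is a misreading: that condition \emph{mandates} the existence of certain edges, it never forbids a chord, and it does not even apply in the regime $w_1=w_2=\hat{M}$, which is allowed in part~(\ref{itemReinforcedPanConditionb}). The paper sidesteps all of this by taking $A=V(P)\cup\lbrace 3,4\rbrace$ and $B=V(P)\cup(V(C_m)\setminus\lbrace 3\rbrace)$, so that $A\cap B=V(P)\cup\lbrace 4\rbrace$ and the only conceivable edges at $4$ inside $A\cap B$ are edges to $V(P)$, which are absent by hypothesis; the elaborate normalisation of $P$ then becomes unnecessary.

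The treatment of part~(\ref{itemReinforcedPanConditionc}) is not a proof. The conclusion there is that $4$ is linked to the specific vertex $2$, and by part~(\ref{itemReinforcedPanConditionb}) you already know $4$ is linked to \emph{some} vertex $j$ of $P$; any such edge $e'_4=\lbrace 4,j\rbrace$ threatens to reconnect $4$ to $2$ inside whatever intersection you form. The paper must therefore first pin down the geometry: it forms the auxiliary cycle $C'$ through $2,3,4,j$ and the subpath $P_{2,j}$, applies the Cycle condition to it, and then runs a case analysis ($j=t$ versus $j\neq t$, and $e'=e$ versus $e'\neq e$, where $e'$ is the edge of $P$ at $2$) using the Pan condition on the pan formed by $C'$ and $P_{j,t}$, reducing everything to the single configuration $j=t$, $e'=e$; only then does it deploy the coalition argument with $A=\lbrace 2,3,4,t\rbrace$ and $B=(V(C_m)\setminus\lbrace 3\rbrace)\cup\lbrace t\rbrace$, where $A\cap B=\lbrace 2,4,t\rbrace$ splits as $\lbrace 2\rbrace,\lbrace 4,t\rbrace$ after deleting $e$. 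None of this reduction appears in your sketch, and without it the claim that ``an $A\cap B$ containing $2$ and $4$ but not $3$ separates them'' is unsupported.
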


\begin{proof}
We first prove that (\ref{itemReinforcedPanConditionb}) is satisfied.
Let us assume $j = 4$ (the other cases are similar).
Let us assume $V (P) \cup \lbrace 4 \rbrace$ non-connected.
Let us consider $A = V(P) \cup \lbrace 3, 4 \rbrace$
and $B = V(P) \cup (V(C_m)\setminus \lbrace 3 \rbrace)$,
as represented in Figure~\ref{figReinforcedCycleConditionConstantCycleb} with $m = 5$.
%%  Figure
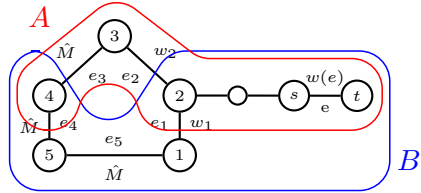
\begin{figure}[!h]
\centering
\begin{pspicture}(-.6,-.4)(0,2.2)
\tiny
\begin{psmatrix}[mnode=circle,colsep=0.4,rowsep=0.3]
& {$3$}\\
{$4$}	& & {$2$} & {} & {$s$} & {$t$} \\
{$5$}	&	& {$1$}
\psset{arrows=-, shortput=nab,labelsep={0.1}}
\tiny
%%  Cycle
\ncline{2,1}{1,2}^{$\hat{M}$}_{$e_3$}
\ncline{2,1}{3,1}_{$\hat{M}$}^{$e_4$}
\ncline{3,1}{3,3}_{$\hat{M}$}^{$e_5$}
\ncline{1,2}{2,3}^{$w_2$}_{$e_2$}
\ncline{2,3}{3,3}^{$w_1$}_{$e_1$}
\ncline{2,3}{2,4}
\ncline{2,4}{2,5}
\ncline{2,5}{2,6}^{$w(e)$}_{e}
\normalsize
\uput[0](2.7,0){\textcolor{blue}{$B$}}
\pspolygon[framearc=1,linecolor=blue,linearc=.4,linewidth=.02]
(-2.3,-.4)(-2.3,1.45)(-1.8,1.45)(-1,.2)(-.2,1.45)(2.7,1.45)(2.7,-.4)(-1.1,-.4)
\uput[0](-2.15,1.9){\textcolor{red}{$A$}}
\pspolygon[framearc=1,linecolor=red,linearc=.4,linewidth=.02]
(-2.2,.4)(-2.2,1.15)(-.9,2.2)(.2,1.35)(2.6,1.35)(2.6,.4)(-.5,.4)(-1,1.8)(-1.5,.4)
\end{psmatrix}
\end{pspicture}
\caption{$w(e) < w_1 \leq w_2 \leq \hat{M}$.}
\label{figReinforcedCycleConditionConstantCycleb}
\end{figure}
As $A \in \mathcal{F}$ and $B \in \mathcal{F}$,
we have $\tilde{\mathcal{P}}_{\min}(A) = \mathcal{P}_{\min}(A)$
and $\tilde{\mathcal{P}}_{\min}(B) = \mathcal{P}_{\min}(B)$.
As $w(e) < w_1 $,
there are components $A'$ in $\mathcal{P}_{\min}(A)$
and $B'$ in $\mathcal{P}_{\min}(B)$ containing $\lbrace 2, 4 \rbrace$.
By Theorem~\ref{thLGNuPNFNFSNuSSNNuSFABFABF}
(applied with $\mathcal{F}=2^{N} \setminus \lbrace \emptyset \rbrace$),
$A' \cap B'$ is a component of $\tilde{\mathcal{P}}_{\min}(A \cap B)$
containing $\lbrace 2, 4 \rbrace$.
But $\tilde{\mathcal{P}}_{\min}(A \cap B)=\lbrace \lbrace 4 \rbrace, \mathcal{P}_{\min}(V(P)) \rbrace$,
a contradiction.

We now prove that (\ref{itemReinforcedPanConditionc}) is satisfied.
As $w(e) < w_1$,
the Pan condition implies
\begin{equation}
w_1 = w_2 < w_3 = \cdots = w_m = \hat{M}.
\end{equation}
Let us assume $j = 4$ (the other cases are similar)
and $\lbrace 2, 4 \rbrace \notin E$.
By~(\ref{itemReinforcedPanConditionb}),
there exists at least one edge linking $4$ to $P$.
Let us consider an edge $e'_4 = \lbrace 4, j \rbrace$ with $j \in V(P) \setminus \lbrace 2 \rbrace$. 
Let $t$ be the end-vertex of $P$ different from $2$
and let $e'$ be the edge of $P$ incident to $2$
($e'$ may coincide with $e$).
By the Star condition,
we have $w(e') \leq w_1 = w_2$.
Let $C'$ be the cycle formed by $\lbrace 2,  e_2, 3, e_3, 4, e'_4, j \rbrace \cup P_{2,j}$.
As $w(e') \leq w_2 < \hat{M} = w_3$,
the Cycle condition implies
\begin{equation}
\label{eqw(e)=HatM,ForalleInE(C')Setminuse',e2}
w(e) = \hat{M}, \; \forall e \in E(C') \setminus \lbrace e', e_2 \rbrace.
\end{equation}

Let us first assume $e' \not= e$.
If $w(e') < w_1 = w_2$,
then we can replace $e$ by~$e'$.
Therefore,
we can assume $w(e')=w_1 = w_2$.
Let us assume $j = t$ as represented in Figure~\ref{figReinforcedCycleConditionConstantCycle2a}.
Then,
$e$ belongs to $C'$ and has weight $w(e) < \hat{M}$,
contradicting (\ref{eqw(e)=HatM,ForalleInE(C')Setminuse',e2}). 
%%  Figure
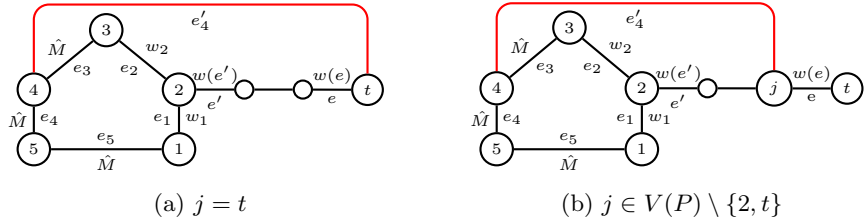
\begin{figure}[!h]
\centering
\subfloat[$j=t$]{
\begin{pspicture}(-.5,-.4)(.5,2)
\tiny
\begin{psmatrix}[mnode=circle,colsep=0.5,rowsep=0.3]
& {$3$}\\
{$4$}	& & {$2$} & {} & {} & {$t$} \\
{$5$}	&	& {$1$}
\psset{arrows=-, shortput=nab,labelsep={0.05}}
\tiny
\ncline{2,1}{1,2}^{$\hat{M}$}_{$e_3$}
\ncline{2,1}{3,1}_{$\hat{M}$}^{$e_4$}
\ncline{3,1}{3,3}_{$\hat{M}$}^{$e_5$}
\ncline{1,2}{2,3}^{$w_2$}_{$e_2$}
\ncline{2,3}{3,3}^{$w_1$}_{$e_1$}
\ncline{2,3}{2,4}_{$e'$}^{$w(e')$}
\ncline{2,4}{2,5}
\ncline{2,5}{2,6}^{$w(e)$}_{$e$}
\ncangles[angleA=-270,angleB=90,armA=.9,armB=.7,linearc=.2,linecolor=red]{2,1}{2,6}_{$e'_4$}
\end{psmatrix}
\end{pspicture}
\label{figReinforcedCycleConditionConstantCycle2a}
}
\subfloat[$j \in V(P) \setminus \lbrace 2, t \rbrace$]{
\begin{pspicture}(-.5,-.4)(.5,2)
\tiny
\begin{psmatrix}[mnode=circle,colsep=0.5,rowsep=0.3]
& {$3$}\\
{$4$}	& & {$2$} & {} & {$j$} & {$t$} \\
{$5$}	&	& {$1$}
\psset{arrows=-, shortput=nab,labelsep={0.05}}
\tiny
\ncline{2,1}{1,2}^{$\hat{M}$}_{$e_3$}
\ncline{2,1}{3,1}_{$\hat{M}$}^{$e_4$}
\ncline{3,1}{3,3}_{$\hat{M}$}^{$e_5$}
\ncline{1,2}{2,3}^{$w_2$}_{$e_2$}
\ncline{2,3}{3,3}^{$w_1$}_{$e_1$}
\ncline{2,3}{2,4}_{$e'$}^{$w(e')$}
\ncline{2,4}{2,5}
\ncline{2,5}{2,6}^{$w(e)$}_{e}
\ncangles[angleA=-270,angleB=90,armA=.9,armB=.1,linearc=.2,linecolor=red]{2,1}{2,5}_{$e'_4$}
\end{psmatrix}
\end{pspicture}
\label{figReinforcedCycleConditionConstantCycle2b}
}
\caption{$w(e) < w(e') = w_1 = w_2 < \hat{M}$.}
\label{figReinforcedCycleConditionConstantCycle2}
\end{figure}
Let us now assume $j \in V(P) \setminus \lbrace 2, t \rbrace$,
as represented in Figure~\ref{figReinforcedCycleConditionConstantCycle2b}.
As $w(e) < w(e') = w_2 < \hat{M}$,
the Pan condition applied to the pan formed by $C'$ and $P_{j,t}$ implies $j=2$,
a contradiction.

Let us now assume $e' = e$.
We necessarily have $j=t$.
Let us consider
$A = \lbrace 2, 3, 4, t \rbrace$
and $B= (V(C_{m}) \setminus \lbrace 3 \rbrace) \cup \lbrace t \rbrace$
as represented in Figure~\ref{figReinforcedCycleConditionConstantCycle3-bis}
with $m=5$.
%%  Figure
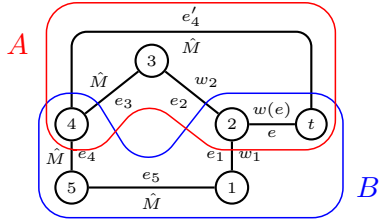
\begin{figure}[!h]
\centering
\begin{pspicture}(-.5,-.1)(.5,2.5)
\tiny
\begin{psmatrix}[mnode=circle,colsep=0.6,rowsep=0.35]
& {$3$}\\
{$4$}	& & {$2$} & {$t$}\\
{$5$}	&	& {$1$}
\psset{arrows=-, shortput=nab,labelsep={0.05}}
\tiny
\ncline{2,1}{1,2}^{$\hat{M}$}_{$e_3$}
\ncline{2,1}{3,1}_{$\hat{M}$}^{$e_4$}
\ncline{3,1}{3,3}_{$\hat{M}$}^{$e_5$}
\ncline{1,2}{2,3}^{$w_2$}_{$e_2$}
\ncline{2,3}{3,3}^{$w_1$}_{$e_1$}
\ncline{2,3}{2,4}_{$e$}^{$w(e)$}
% \ncarc[arcangle=-30,linecolor=blue]{1,2}{3,3}_{$\hat{M}$}
\ncangles[angleA=-270,angleB=90,armA=1,armB=.7,linearc=.2]{2,1}{2,4}^{$e'_4$}_{$\hat{M}$}
\end{psmatrix}
\normalsize
\uput[0](.2,0){\textcolor{blue}{$B$}}
\pspolygon[framearc=1,linecolor=blue,linearc=.4,linewidth=.02]
(-3.8,-.4)(-3.8,1.25)(-3.1,1.25)(-2.4,.1)(-1.5,1.25)(.2,1.25)(.2,-.4)(-1.1,-.4)
\uput[0](-4.4,1.9){\textcolor{red}{$A$}}
\pspolygon[framearc=1,linecolor=red,linearc=.4,linewidth=.02]
(-3.7,.5)(-3.7,2.45)(.1,2.45)(.1,.5)(-1.5,.5)(-2.4,1.2)(-3,.5)
\end{pspicture}
\caption{$w(e) < w_1 = w_2 < \hat{M}$.}
\label{figReinforcedCycleConditionConstantCycle3-bis}
\end{figure}
As $A \in \mathcal{F}$ and $B \in \mathcal{F}$,
we have $\tilde{\mathcal{P}}_{\min}(A) = \mathcal{P}_{\min}(A)$
and $\tilde{\mathcal{P}}_{\min}(B) = \mathcal{P}_{\min}(B)$.
As $w(e) < w_1 = w_2$,
there are components $A'$ in $\mathcal{P}_{\min}(A)$
and $B'$ in $\mathcal{P}_{\min}(B)$ containing $\lbrace 2, 4 \rbrace$.
By Theorem~\ref{thLGNuPNFNFSNuSSNNuSFABFABF}
(applied with $\mathcal{F}=2^{N} \setminus \lbrace \emptyset \rbrace$),
$A' \cap B'$ is a component of $\tilde{\mathcal{P}}_{\min}(A \cap B)$
containing $\lbrace 2, 4 \rbrace$.
But $\tilde{\mathcal{P}}_{\min}(A \cap B)=
\lbrace \lbrace 2 \rbrace, \lbrace 4, t \rbrace \rbrace$,
a contradiction.
\end{proof}

\begin{mdframed}
\textbf{Reinforced Adjacent Cycles Condition}
\it
For all pairs $\lbrace C,C' \rbrace$ of adjacent cycles in $G$ such that
one of the following conditions is satisfied
\begin{enumerate}
\item
\label{enumProp|E(C)|=|E(C')|=4}
$|E(C)| = |E(C')| = 4$ and $C$ and $C'$ have two common non-maximum weight edges.
\item
\label{enumProp|E(C)|=|E(C')|=5}
\begin{enumerate}
\item
$|E(C)| = |E(C')| = 5$,
$|E(C) \cap E(C')| = 3$,
and $C$ and $C'$ have only one non-maximum weight-edge
which is common to $C$ and $C'$.
\item
Setting $C = \lbrace 1, e_1, 2, e_2, 3, e_3, 4, e_4, 5, e_5, 1 \rbrace$,
we have $C' = \lbrace 1, e_1, 2, e_2, 3, e'_3, 4', e'_4, 5, e_5, 1 \rbrace$ with $4' \not= 4$,
$4$ and $4'$ are not both linked to $1$ or to $2$,
and $e_1$ is the unique non-maximum weight edge common to $C$ and $C'$.
\end{enumerate}
\end{enumerate}
\noindent
There exists an edge linking $V(C) \setminus V(C')$ to $V(C') \setminus V(C)$.
\end{mdframed}

%% LEMMA
\begin{proposition}
If for all $\emptyset \not= S \subseteq N$
the $\tilde{\mathcal{P}}_{\min}$-restricted game $(N, \overline{u_{S}})$ is convex,
then the Reinforced Adjacent Cycles Condition is satisfied.
\end{proposition}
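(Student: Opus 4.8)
The plan is to argue by contradiction, in the spirit of the two preceding propositions. Suppose the Reinforced Adjacent Cycles Condition fails, so there is a pair $\{C,C'\}$ of adjacent cycles satisfying one of the hypotheses~(\ref{enumProp|E(C)|=|E(C')|=4}) or~(\ref{enumProp|E(C)|=|E(C')|=5}) for which no edge of $E$ links $V(C)\setminus V(C')$ to $V(C')\setminus V(C)$. In each of these configurations $V(C)\setminus V(C')=\{u\}$ and $V(C')\setminus V(C)=\{u'\}$ are singletons, so the failure means $\{u,u'\}\notin E$. Since $(N,\overline{u_{S}})$ is superadditive for all $\emptyset\neq S\subseteq N$ (Corollary~\ref{corIGNEiafttsihfv}) and, by hypothesis, convex, Theorem~\ref{thLGNuPNFNFSNuSSNNuSFABFABF} applied with $\mathcal{F}=2^{N}\setminus\{\emptyset\}$ gives, for all nonempty $A,B\subseteq N$ with $A\cap B\neq\emptyset$, that $\tilde{\mathcal{P}}_{\min}(A\cap B)$ is precisely the family of nonempty sets $A_{j}\cap B_{k}$ with $A_{j}\in\tilde{\mathcal{P}}_{\min}(A)$ and $B_{k}\in\tilde{\mathcal{P}}_{\min}(B)$. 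I would then construct two connected subsets $A$ and $B$ for which $u$ and $u'$ lie in a common block of $\tilde{\mathcal{P}}_{\min}(A)=\mathcal{P}_{\min}(A)$ and in a common block of $\tilde{\mathcal{P}}_{\min}(B)=\mathcal{P}_{\min}(B)$, whereas $\tilde{\mathcal{P}}_{\min}(A\cap B)$ puts $u$ and $u'$ in two distinct blocks; this contradicts the previous description of $\tilde{\mathcal{P}}_{\min}(A\cap B)$, since the intersection of the two blocks carrying $\{u,u'\}$ would have to be one of its blocks.

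The first step is a weight analysis. Convexity implies $\mathcal{F}$-convexity, so the Star, Path, Cycle, Pan and Adjacent Cycles conditions hold (Theorem~\ref{thPathBranchCyclePanAdjacentCond}), and by the two preceding propositions the Reinforced Cycle and Reinforced Pan conditions hold as well. Feeding the shapes of $C$ and $C'$ into the Cycle condition, and applying the Adjacent Cycles condition to $\{C,C'\}$ and to the auxiliary cycles obtained by recombining the two arcs of $C\cup C'$ joining $u$ to $u'$, should force $\hat{M}_{C}=\hat{M}_{C'}=:\hat{M}$ and show that every edge of $E(C)\cup E(C')$, as well as every chord of $C$ or of $C'$, has weight $\hat{M}$ except for the one or two non-maximum weight edges of $C$, which lie in $E(C)\cap E(C')$. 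In particular both arcs from $u$ to $u'$ inside $G_{V(C)\cup V(C')}$ consist entirely of $\hat{M}$-weight edges, so any subset keeping $u$ and $u'$ together under $\mathcal{P}_{\min}$ must contain a non-maximum weight edge of $C$ (so that its minimum weight is $<\hat{M}$) and must join $u$ to $u'$ through a common neighbour of $u$ and $u'$ linked to both by $\hat{M}$-weight edges.

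Next I would choose $A$ and $B$. Each is taken to contain a non-maximum weight edge of $C$, the two vertices $u$ and $u'$, and one of the two arcs of $C\cup C'$ from $u$ to $u'$ (a different one for $A$ and for $B$), plus just enough additional vertices to stay connected. For instance, in case~(\ref{enumProp|E(C)|=|E(C')|=4}), writing $C=\{1,e_{1},2,e_{2},3,e_{3},4,e_{4},1\}$ and $C'=\{1,e_{1},2,e_{2},3,e'_{3},4',e'_{4},1\}$ with $u=4$, $u'=4'$ and $e_{1},e_{2}$ the two common non-maximum weight edges, one takes $A=\{2,3,4,4'\}$ (whose minimum weight edge is $e_{2}$) and $B=\{1,2,4,4'\}$ (whose minimum weight edge is $e_{1}$); then $\{3,4,4'\}$ is a block of $\mathcal{P}_{\min}(A)$, some block of $\mathcal{P}_{\min}(B)$ contains $\{1,4,4'\}$, while in $A\cap B=\{2,4,4'\}$ the vertices $4$ and $4'$ are no longer connected after deletion of the minimum-weight edges (the only edges of $G_{A\cap B}$ that could join them, the possible chords $\{2,4\}$ and $\{2,4'\}$, have weight $w(e_{2})<\hat{M}$ and are the minimum-weight ones), so $\tilde{\mathcal{P}}_{\min}(A\cap B)$ separates $4$ and $4'$. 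Likewise, in the sub-case of~(\ref{enumProp|E(C)|=|E(C')|=5}) in which $C=\{1,e_{1},2,e_{2},3,e_{3},4,e_{4},5,e_{5},1\}$, $C'=\{1,e_{1},2,e_{2},3,e'_{3},4',e'_{4},5,e_{5},1\}$, $u=4$, $u'=4'$ and $e_{1}$ is the unique non-maximum weight edge, one takes $A=\{1,2,3,4,4'\}$ and $B=\{1,2,4,4',5\}$, so $A\cap B=\{1,2,4,4'\}$; using $\{4,4'\}\notin E$, the fact that $e_{1}$ is the only edge of $E(A\cap B)$ of weight $<\hat{M}$, and the hypothesis that $4$ and $4'$ are not both linked to $1$ nor both to $2$ (which forces every path from $4$ to $4'$ in $G_{A\cap B}$ to traverse $e_{1}$), one checks that $4$ and $4'$ fall into distinct blocks of $\tilde{\mathcal{P}}_{\min}(A\cap B)$. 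The remaining sub-case of~(\ref{enumProp|E(C)|=|E(C')|=5}) (three common edges forming a path) is treated by the same device, after the Adjacent Cycles condition applied to $\{C,C'\}$ has either placed the non-maximum weight edge strictly between the two hub vertices or shown the configuration to be impossible.

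The step I expect to be the main obstacle is this last one: the case split on the position of the non-maximum weight edge(s) of $C$ relative to the two ``hub'' vertices of $C\cup C'$, and on which $\hat{M}$-weight chords of $C$ and $C'$ are present, together with the verification in each instance that $A$ and $B$ can be chosen so that the minimum-weight-edge deletion in $A\cap B$ genuinely disconnects $u$ from $u'$. It is exactly to keep this possible that hypotheses~(\ref{enumProp|E(C)|=|E(C')|=4}) and~(\ref{enumProp|E(C)|=|E(C')|=5}) are stated as they are --- in particular the extra requirement ``$4$ and $4'$ are not both linked to $1$ or to $2$'' --- and a substantial part of the argument consists in checking that the complementary sub-configurations are already excluded by the $\mathcal{F}$-convexity conditions in force.
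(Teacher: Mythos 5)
Your proposal is correct and follows essentially the same route as the paper: the same application of Theorem~\ref{thLGNuPNFNFSNuSSNNuSFABFABF} with $\mathcal{F}=2^{N}\setminus\{\emptyset\}$, and the same witness sets (the paper uses $A=\{1,2,4,4'\}$, $B=\{2,3,4,4'\}$ in case~\ref{enumProp|E(C)|=|E(C')|=4} and $A=\{1,2,4,4',5\}$, $B=\{1,2,3,4,4'\}$ in case~\ref{enumProp|E(C)|=|E(C')|=5}, i.e.\ yours with the roles of $A$ and $B$ swapped), with the contradiction obtained exactly as you describe from $\tilde{\mathcal{P}}_{\min}(A\cap B)$ separating $4$ from $4'$. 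The only inessential deviations are your extra preliminary weight analysis (the paper gets the needed weights directly from the Cycle condition) and your reading of item~\ref{enumProp|E(C)|=|E(C')|=5} as two distinct sub-cases, whereas its parts (a) and (b) describe a single configuration that your argument already covers.
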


\begin{proof}
Let us first assume Condition~\ref{enumProp|E(C)|=|E(C')|=4} satisfied.
Let us consider $C = \lbrace 1, e_1, 2, e_2, 3, e_3, 4, e_4, 1 \rbrace$
and $C' = \lbrace 1, e_1, 2, e_2, 3, e'_3, 4', e'_4, 1  \rbrace$
where $e_1$ and $e_2$ are non-maximum weight edges common to $C$ and $C'$.
By the Cycle condition,
we have
\begin{equation}
\max(w_1, w_2) < w_3 = w_4 = w(e'_3) = w(e'_4) = \hat{M},
\end{equation}
where $\hat{M} = \max_{e \in \hat{E}(C)} w(e) = \max_{e \in \hat{E}(C')} w(e)$.
Moreover,
any chord of $C$ or $C'$ incident to $2$
has a weight equal to $\max(w_1,w_2)$.
Let us consider
$A = \lbrace 1, 2, 4, 4' \rbrace$
and $B = \lbrace 2, 3, 4, 4' \rbrace$)
as represented in Figure~\ref{figReinforcedAdjcacentCyclesCondition}.
%%  Figure
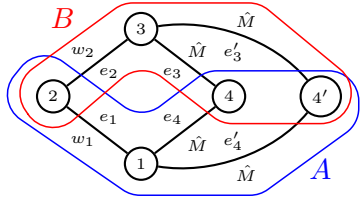
\begin{figure}[!h]
\centering
\begin{pspicture}(-.5,-.2)(.5,2.4)
\tiny
\begin{psmatrix}[mnode=circle,colsep=0.7,rowsep=0.35]
& {$3$}\\
{$2$}	& & {$4$} & {$4'$}\\
& {$1$}
\psset{arrows=-, shortput=nab,labelsep={0.05}}
\tiny
\ncline{2,1}{1,2}^{$w_2$}_{$e_2$}
\ncline{1,2}{2,3}_{$e_3$}^{$\hat{M}$}
\ncline{2,3}{3,2}_{$e_4$}^{$\hat{M}$}
\ncline{2,1}{3,2}_{$w_1$}^{$e_1$}
\ncarc[arcangle=30]{1,2}{2,4}_{$e'_3$}^{$\hat{M}$}
\ncarc[arcangle=-30]{3,2}{2,4}_{$\hat{M}$}^{$e'_4$}
\normalsize
\uput[0](2.1,0){\textcolor{blue}{$A$}}
\pspolygon[framearc=1,linecolor=blue,linearc=.4,linewidth=.02]
(-2,.9)(-1.4,1.65)(-.05,.65)(.8,1.3)(2.8,1.3)(2.8,.6)(1.5,-.35)(-.2,-.35)
\uput[0](-1.3,2){\textcolor{red}{$B$}}
\pspolygon[framearc=1,linecolor=red,linearc=.4,linewidth=.02]
(-1.8,1.1)(-.3,2.2)(1.5,2.2)(2.7,1.2)(2.7,.6)(1,.6)(.1,1.3)(-.2,1.3)(-1.3,.35)
\end{psmatrix}
\end{pspicture}
\caption{$\max(w_1, w_2) < \hat{M}$.}
\label{figReinforcedAdjcacentCyclesCondition}
\end{figure}
Let us assume $\lbrace 4, 4' \rbrace \notin E$.
Let us note that $\lbrace 1, 3 \rbrace$, $\lbrace 2, 4 \rbrace$, and $\lbrace 2, 4' \rbrace$ may exist.
As $A \in \mathcal{F}$ and $B \in \mathcal{F}$,
we have $\tilde{\mathcal{P}}_{\min}(A) = \mathcal{P}_{\min}(A)$
and $\tilde{\mathcal{P}}_{\min}(B) = \mathcal{P}_{\min}(B)$.
As $\max(w_1,w_2) < \hat{M}$,
there are components $A'$
in $\mathcal{P}_{\min}(A)$
and  $B'$ in $\mathcal{P}_{\min}(B)$
both containing $\lbrace 4, 4' \rbrace$.
By Theorem~\ref{thLGNuPNFNFSNuSSNNuSFABFABF} (applied with $\mathcal{F}=2^{N} \setminus \lbrace \emptyset \rbrace$),
$A' \cap B'$ is a component of $\tilde{\mathcal{P}}_{\min}(A \cap B)$
containing $\lbrace 4, 4' \rbrace$.
But we have $\tilde{\mathcal{P}}_{\min}(A \cap B)=
\lbrace \lbrace 2 \rbrace, \lbrace 4 \rbrace, \lbrace 4' \rbrace \rbrace$,
a contradiction.

Let us now assume Condition~\ref{enumProp|E(C)|=|E(C')|=5} satisfied.
Let us consider $C = \lbrace 1, e_1, 2,\\ e_2, 3, e_3, 4, e_4, 5, e_5, 1 \rbrace$
and $C' = \lbrace 1, e_1, 2, e_2, 3, e'_3, 4', e'_4, 5, e_5, 1  \rbrace$
where $e_1$ is the unique non-maximum weight edge common to $C$ and $C'$.
By the Cycle condition,
we have
\begin{equation}
w_1 < w_2 = w_3 = w_4 = w_5 = w(e'_3) = w(e'_4) = \hat{M}.
\end{equation}
Moreover,
any chord of $C$ or $C'$ has weight $\hat{M}$.
Let us consider
$A = \lbrace 1, 2, 4, 4', 5 \rbrace$
and $B = \lbrace 1, 2, 3, 4, 4' \rbrace$
as represented in Figure~\ref{figReinforcedAdjcacentCyclesCondition}.
%%  Figure
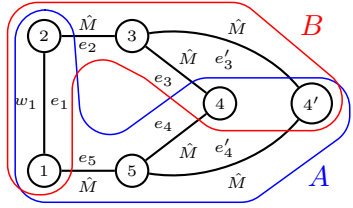
\begin{figure}[!h]
\centering
\begin{pspicture}(-.5,-.2)(.5,2.4)
\tiny
\begin{psmatrix}[mnode=circle,colsep=0.7,rowsep=0.35]
{$2$ }& {$3$}\\
 & & {$4$} & {$4'$}\\
{$1$}& {$5$}
\psset{arrows=-, shortput=nab,labelsep={0.05}}
\tiny
\ncline{1,1}{1,2}^{$\hat{M}$}_{$e_2$}
\ncline{1,2}{2,3}_{$e_3$}^{$\hat{M}$}
\ncline{2,3}{3,2}_{$e_4$}^{$\hat{M}$}
\ncline{1,1}{3,1}_{$w_1$}^{$e_1$}
\ncline{3,1}{3,2}_{$\hat{M}$}^{$e_5$}
\ncarc[arcangle=30]{1,2}{2,4}_{$e'_3$}^{$\hat{M}$}
\ncarc[arcangle=-30]{3,2}{2,4}_{$\hat{M}$}^{$e'_4$}
\normalsize
\uput[0](2.2,0){\textcolor{blue}{$A$}}
\pspolygon[framearc=1,linecolor=blue,linearc=.4,linewidth=.02]
(-1.6,2.2)(-.85,2.2)(-.7,.2)(.8,1.3)(2.8,1.3)(2.8,.6)(1.5,-.35)(-1.6,-.35)
\uput[0](2.1,2){\textcolor{red}{$B$}}
\pspolygon[framearc=1,linecolor=red,linearc=.4,linewidth=.02]
(-1.7,-.25)(-1.7,2.3)(1.4,2.3)(2.7,1.2)(2.7,.6)(1,.6)(.1,1.3)(-.85,1.8)(-.85,-.25)
\end{psmatrix}
\end{pspicture}
\caption{$w_1 < \hat{M}$.}
\label{figReinforcedAdjcacentCyclesCondition}
\end{figure}
Let us assume $\lbrace 4, 4' \rbrace \notin E$.
As $A \in \mathcal{F}$ and $B \in \mathcal{F}$,
we have $\tilde{\mathcal{P}}_{\min}(A) = \mathcal{P}_{\min}(A)$
and $\tilde{\mathcal{P}}_{\min}(B) = \mathcal{P}_{\min}(B)$.
As $w_1 < \hat{M}$,
there are components $A'$
in $\mathcal{P}_{\min}(A)$
and  $B'$ in $\mathcal{P}_{\min}(B)$
both containing $\lbrace 4, 4' \rbrace$.
By Theorem~\ref{thLGNuPNFNFSNuSSNNuSFABFABF} (applied with $\mathcal{F}=2^{N} \setminus \lbrace \emptyset \rbrace$),
$A' \cap B'$ is a component of $\tilde{\mathcal{P}}_{\min}(A \cap B)$
containing $\lbrace 4, 4' \rbrace$.
As $4$ and $4'$ are neither both linked to $1$
nor both linked to $2$,
they necessarily belong to distinct blocks of
$\tilde{\mathcal{P}}_{\min}(A \cap B)$,
%\lbrace \lbrace 1 \rbrace,\lbrace 2 \rbrace, \lbrace 4 \rbrace, \lbrace 4' \rbrace \rbrace$,
a contradiction.
\end{proof}

%% Theorem
\begin{theorem}
\label{thADDLG=NEbacfgaluEbawfvhat}
For every convex game $(N, v)$,
the $\tilde{\mathcal{P}}_{\min}$-restricted game $(N, \overline{v})$ is convex
if and only if the Star, Path, Cycle, Pan, Adjacent Cycles,
Reinforced Cycle, Reinforced Pan,
and Reinforced Adjacent Cycles conditions are satisfied.
\end{theorem}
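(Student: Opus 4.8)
The plan is to prove both implications of the equivalence. Necessity is short. If the $\tilde{\mathcal{P}}_{\min}$-restricted game is convex for every convex game, then in particular $(N,\widehat{u_S})$ is convex, hence $\mathcal{F}$-convex, for every $\emptyset\neq S\subseteq N$, since each unanimity game is convex. The three Propositions proved above then yield the Reinforced Cycle, Reinforced Pan and Reinforced Adjacent Cycles conditions. The five basic conditions follow from~\cite{Skoda2017}: by Theorem~\ref{thLGNuPNFNFSNuSSNNuSFABFABF}, $\mathcal{F}$-convexity of $\widehat{u_S}$ for all $S$ (with $\mathcal{F}$ the family of connected subsets) is condition~2) of that theorem, which~\cite{Skoda2017} shows to be equivalent to the Star, Path, Cycle, Pan and Adjacent cycles conditions (equivalently, these five are necessary for inheritance of $\mathcal{F}$-convexity by Theorem~\ref{thPathBranchCyclePanAdjacentCond}, and convexity implies $\mathcal{F}$-convexity).

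For sufficiency, assume the eight conditions and let $(N,v)$ be convex. By Corollary~\ref{corIGNEiafttsihfv} the game $(N,\hat v)$ is superadditive, and by Theorem~\ref{thPathBranchCyclePanAdjacentCond} together with the five basic conditions it is $\mathcal{F}$-convex for $\mathcal{F}$ the family of connected subsets. Since a game is convex if and only if it is superadditive and supermodular, and since supermodularity is equivalent, by Theorem~\ref{theoremLGNEbafalFbtfocsoN} applied to the complete graph on $N$, to
\[
\Delta_i\hat v(B)\geq\Delta_i\hat v(A)\qquad\text{for all }i\in N\text{ and all }\emptyset\neq A\subseteq B\subseteq N\setminus\{i\},
\]
the whole problem reduces to proving these derivative inequalities (the case $A=\emptyset$ being handled by superadditivity of $\hat v$). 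When $A$, $B$ and $A\cup\{i\}$ are all connected this is exactly the $\mathcal{F}$-convexity already in hand, so the real work concerns the case where $A$ or $B$ is disconnected.

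The heart of the argument is a component-wise rewriting. Write $D_A$ (resp. $D_B$) for the connected component of $i$ in $G_{A\cup\{i\}}$ (resp. $G_{B\cup\{i\}}$) and $A^{(i)}_1,\dots,A^{(i)}_r$ (resp. $B^{(i)}_1,\dots,B^{(i)}_s$) for the components of $G_A$ (resp. $G_B$) having an edge to $i$. The components of $G_A$ and $G_B$ not adjacent to $i$ cancel, so $\Delta_i\hat v(B)\geq\Delta_i\hat v(A)$ becomes
\[
\overline v(D_B)+\sum_{k=1}^{r}\overline v\big(A^{(i)}_k\big)\ \geq\ \overline v(D_A)+\sum_{l=1}^{s}\overline v\big(B^{(i)}_l\big),
\]
an inequality between $\overline v$-values on connected sets, with $D_A\subseteq D_B$ and each $A^{(i)}_k$ contained in one of the $B^{(i)}_l$. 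I would prove it by peeling off the $B$-components one at a time, invoking at each step the $\mathcal{F}$-convexity of $\overline v$ through Lemma~\ref{lemmavB+i=1pvBiA}, while controlling the defect $v-\overline v$ via Lemma~\ref{lemPminvB-vB>=vA-vA2} and the minimum-weight restrictions of Lemma~\ref{lemminsAi>=sA>=sBorsA=sB>sAi}. For this telescoping to be legitimate the relevant intersections (such as $D_A\cap B^{(i)}_l$) must be connected, and the $\mathcal{P}_{\min}$-partitions of the sets involved must be mutually induced, as in condition~3) of Theorem~\ref{thLGNuPNFNFSNuSSNNuSFABFABF}. This is exactly what the three Reinforced conditions supply: whenever inserting $i$ reconnects, through a cycle, a pan, or two adjacent cycles, several components of $G_A$ that $\mathcal{P}_{\min}$ had separated by minimum-weight edges, the Reinforced conditions force the extra edges — linking far cycle vertices to $e_1$, to the vertex $2$, to the path $P$, or $V(C)\setminus V(C')$ to $V(C')\setminus V(C)$ — that keep these components glued together in $G_{A\cup\{i\}}$. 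In other words, the Reinforced conditions are the contrapositives of the configurations ruled out in the three Propositions.

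Organizationally, I would first dispose of the case $G_B$ connected (where, after applying Lemma~\ref{lemPminvB-vB>=vA-vA2}, the inequality reduces to the $\mathcal{F}$-convexity already established), and then treat the case $G_B$ disconnected through a case analysis on the local structure created by the merging of components — cycle-free, a single cycle with or without chords of the various admissible lengths, and two adjacent cycles — mirroring the hypotheses of the Reinforced Cycle, Reinforced Pan and Reinforced Adjacent Cycles conditions. I expect this disconnected case to be the main obstacle: the delicate point is the simultaneous bookkeeping of which minimum-weight edges are deleted by $\mathcal{P}_{\min}$ inside each of $D_A$, $D_B$, the $A^{(i)}_k$ and the $B^{(i)}_l$, so that the telescoping sum closes with no residual negative term; the Reinforced conditions are designed precisely so that this always happens.
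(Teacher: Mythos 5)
Your necessity argument is fine and matches the paper's: the three Propositions give the Reinforced conditions, and the five basic conditions follow because convexity of each $\widehat{u_S}$ implies $\mathcal{F}$-convexity. Your setup for sufficiency is also the right one and coincides with the paper's: reduce to $\Delta_i\hat v(B)\geq\Delta_i\hat v(A)$, cancel the components of $A$ and $B$ not adjacent to $i$, observe that the connected case is exactly the $\mathcal{F}$-convexity already granted by Theorem~\ref{thPathBranchCyclePanAdjacentCond}, and recognize that the whole difficulty is the case where $A$ is disconnected.

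However, from that point on your proposal is a plan rather than a proof, and the part you defer is precisely the part that carries all the weight. The paper does not ``peel off $B$-components one at a time''; after reducing (by superadditivity and $\mathcal{F}$-convexity of $\overline v$) to $B$ connected and $|B\setminus A|=1$ plus an induction on $|B\setminus A|$, the decisive input is a set of structural statements about a disconnected $A$ with $A\cup\{i\}$ connected: under the eight conditions, at most one component $A_1$ of $A$ can have size $\geq 2$ or carry the strictly smallest weight towards $i$, every other component $A_k$ is a singleton, these singletons sit in pairwise distinct blocks of $\mathcal{P}_{\min}(B)$ disjoint from the blocks meeting $A_1$, and consequently $\sum_{F\in\mathcal{P}_{\min}(B)_{|A}}v(F)=\sum_k\overline v(A_k)$. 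It is this last identity, combined with Lemma~\ref{lemmavB+i=1pvBiA} applied to the partition $\mathcal{P}_{\min}(B)$ and with the convexity of $(N,v)$, that closes each case of the final case analysis on $\sigma(A,i)$, $M(A,i)$, $\sigma(B)$ and $\sigma(B,i)$. Your remark that ``the Reinforced conditions are the contrapositives of the configurations ruled out in the three Propositions'' conflates necessity with sufficiency: the necessity Propositions only show that a violated condition breaks convexity for some unanimity game, whereas the sufficiency direction needs the positive structural consequences of the conditions holding, and these require a separate, lengthy argument (several pages of case analysis on cycles, pans and adjacent cycles in the paper, constituting its Propositions~\ref{Propositionpartitionin2set1} and~\ref{Propositionpartitionin2set1-bis} and Lemmas~\ref{Lemmapartitionin2set2}--\ref{lemmaPathStarCyclePanSum_FInP_Min(B)v(ACapF)=Sum_k=1^mv(A_k)=v(A)}). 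Without establishing those statements, the ``telescoping'' you describe has no reason to close without a residual negative term, so the sufficiency direction is not proved.
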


We have already seen that these conditions are necessary.
To prove their sufficiency,
we will need the following propositions and lemmas.

\begin{proposition}
\label{Propositionpartitionin2set1}
Let us assume that the Path, Star, Cycle,
Reinforced cycle and Reinforced pan conditions are satisfied.
Let us consider $i \in N$
and $A \subseteq B \subseteq N \setminus \lbrace i \rbrace$
with $A \cup \lbrace i \rbrace$ and $B$ in $\mathcal{F}$.
Let us assume $A \notin \mathcal{F}$,
and let $A_1$, $A_2$
be two connected components of $A$
with $\sigma(A_1, i) < \sigma(A_2, i)$.
If $\sigma(B)<\sigma(A_2,i)$,
then
\begin{enumerate}
\item
\label{itemPropsIf|A1|=1AndIfSigma(B)<M,Then|A2|=1a}
$|A_2| = 1$.
Moreover,
there exists a unique edge $e_1$ in $\Sigma(A_1,i)$
and setting $e_1= \lbrace i, j_1 \rbrace$ with $j_1 \in A_1$ and $A_2 = \lbrace j_2 \rbrace$,
there exist a vertex $k \in B$,
and a cycle $C_4 = \lbrace e_1, e_2, e_3, e_4 \rbrace$
with $e_2 = \lbrace i, j_2 \rbrace$, $e_3= \lbrace j_2, k \rbrace$, $e_4 = \lbrace k, j_1 \rbrace$
and $w_2 = w_3 = \sigma(A_2,i)$, and $w_4 = \sigma(B)$.
\item
\label{itemPropA1AndA2AreInDistinctBlocksOfPmin(B)}
$\lbrace j_1 \rbrace$ and $A_2$ are in distinct blocks of $\mathcal{P}_{\min}(B)$.
Moreover,
any block of $\mathcal{P}_{\min}(A_1)$ belongs to a block of $\mathcal{P}_{\min}(B)$
different from the one containing $A_2$.
\item
\label{itemPropA1A2AndA3AreAllInDistinctBlocksOfPmin(B)}
If $A_3$ is a third connected component of $A$
and if the Reinforced adjacent cycles condition is satisfied,
then $|A_3|=1$,
$\sigma(A_2,i) = \sigma(A_3,i)$,
and $A_2$ and $A_3$ are in distinct blocks of $\mathcal{P}_{\min}(B)$.
\end{enumerate}
\end{proposition}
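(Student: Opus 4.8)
The plan is to exploit the fact that, since $A\cup\{i\}\in\mathcal F$ while $A\notin\mathcal F$, the vertex $i$ joins the components $A_1,A_2,\dots$ of $A$, whereas $B\supseteq A$ (with $i\notin B$) joins them along some other route inside $G_B$; concatenating the two routes produces cycles of $G$ on which we can apply the Path, Star, Cycle and, above all, the Reinforced Cycle and Reinforced Pan conditions. Concretely, I would fix $e_1=\{i,j_1\}\in\Sigma(A_1,i)$ and a minimum-weight edge $f=\{i,j_2\}\in E(A_2,i)$, so $w(e_1)=\sigma(A_1,i)<\sigma(A_2,i)=w(f)$, and set $\mu=\sigma(B)$. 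Since $G_B$ is connected and contains $j_1,j_2$ but not $i$, there is a path $P$ in $G_B$ from $j_1$ to $j_2$; choosing $P$ of minimum length, the closed walk $C:=\{i,e_1\}\cup P\cup\{f,i\}$ is a cycle of $G$ through $i$ with $|E(C)|\ge 3$, all of whose edges other than $e_1,f$ lie in $E(B)$ and hence have weight $\ge\mu$. Applying the Cycle condition to $C$, and using that $e_1,f$ are adjacent with $w(e_1)<w(f)$, a short case analysis on the position of $\{e_1,f\}$ in the Cycle-condition ordering shows that $\hat M:=\max_{e\in\hat E(C)}w(e)=\sigma(A_2,i)$, that the interior edges of $P$ all have weight $\sigma(A_2,i)$, and that the only possible cheaper edge of $C$ is the edge of $P$ incident to $j_1$; in particular $f$ is a maximum-weight edge of $C$ and the low weight sits next to $j_1$.

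Next I would collapse $C$ to a four-cycle. Suppose $|E(C)|\ge 5$. Applying the Reinforced Cycle condition to $C$ (and, whenever the weight-$\mu$ edge of $B$ does not already lie on $C$, the Reinforced Pan condition to the pan obtained by attaching to $C$ a shortest path in $G_B$ to such an edge, which meets $C$ in a single vertex) forces, for every vertex $j$ on the constant $\hat M$-arc, an edge from $j$ to $j_1$ or to the neighbour of $i$ on $C$ inside $P$. One then rules out every such edge: one reaching $A_2$ would join $A_1$ and $A_2$ inside $G_A$; one reaching $A_1$ would yield a path from $i$ to $A_2$ through $A_1$ of weight $<\sigma(A_2,i)$ avoiding $\Sigma(A_2,i)$; and one landing in $B\setminus A$ would give a strictly shorter $j_1$--$j_2$ path in $G_B$, contradicting the minimality of $P$. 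Hence $|E(C)|=4$, say $C=C_4$ with vertices $i,j_1,k,j_2$ and $k\in B$; reading off the Cycle-condition weight pattern on $C_4$ (together with the Reinforced Pan argument to locate the weight-$\mu$ edge) gives $e_2=\{i,j_2\}$, $e_3=\{j_2,k\}$, $e_4=\{k,j_1\}$ with $w(e_2)=w(e_3)=\sigma(A_2,i)$ and $w(e_4)=\sigma(B)$. The same analysis also shows $P$ cannot re-enter $A_2$, so $A_2=\{j_2\}$ and $|A_2|=1$; and it yields the uniqueness of $e_1$, since a second minimum-weight edge $\{i,j_1'\}\in\Sigma(A_1,i)$ would, with $f$ (or with $C_4$), form a star or a cycle on which the Star/Cycle condition forces $\sigma(A_1,i)=\sigma(A_2,i)$, a contradiction. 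This collapsing step --- turning the ``there exists an edge linking $\dots$'' outputs of the Reinforced conditions into genuine contradictions while carefully keeping track of which cycle vertices lie in $A_1$, in $A_2$, or in $B\setminus A$ --- is the main obstacle of the proof.

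For (2), suppose $\{j_1\}$ and $A_2$ were in the same block of $\mathcal P_{\min}(B)$, i.e.\ joined by a path in $G_B$ using no edge of weight $\sigma(B)$. Re-running the argument of the previous two paragraphs with such a path would again produce a four-cycle, but now one all of whose $B$-edges have weight $>\sigma(B)$, contradicting the forced value $w(e_4)=\sigma(B)$ with $e_4\in E(B)$; hence $\{j_1\}$ and $A_2$ lie in distinct blocks of $\mathcal P_{\min}(B)$. For the second assertion, I would combine this with the refinement property (Theorem~\ref{thLGNEPNFaSNuSsFaSNuSin}: $\mathcal P_{\min}(A_1)$ is a refinement of $\mathcal P_{\min}(B)_{|A_1}$), so that each block of $\mathcal P_{\min}(A_1)$ lies in some block of $\mathcal P_{\min}(B)$; the same rerouting argument then shows that none of these blocks can be joined to $A_2$ in $G_B$ after deleting the weight-$\sigma(B)$ edges, hence none of them is the block containing $A_2$.

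For (3), with $A_3$ a third component, I would analyse the possible orderings of the values $\sigma(A_\ell,i)$. Applying part (1) to the appropriate pairs of components (the hypotheses hold because $\sigma(B)<\sigma(A_2,i)$) produces four-cycles through $i$, all sharing the unique minimum-weight $i$-edge of the ``lowest'' component; two such four-cycles form a pair of adjacent cycles. In the cases $\sigma(A_3,i)\ne\sigma(A_2,i)$ one can arrange that such a pair satisfies the hypotheses of the Reinforced Adjacent Cycles condition (case~\ref{enumProp|E(C)|=|E(C')|=4}), whose conclusion --- an edge between the private vertex sets of the two cycles --- would join $j_2$ and $j_3$, i.e.\ two distinct components of $A$, a contradiction. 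Hence $\sigma(A_3,i)=\sigma(A_2,i)$, so $\sigma(B)<\sigma(A_3,i)$ and part (1) applied to $(A_1,A_3)$ gives $|A_3|=1$, while the analogue of (2) for $(A_1,A_3)$ gives that $A_2$ and $A_3$ lie in distinct blocks of $\mathcal P_{\min}(B)$. Here the delicate point is selecting the pair of cycles so that all of the premises (a)--(d) and the numeric constraints of the Reinforced Adjacent Cycles condition are met, which requires first ruling out the relevant chords (all of weight $\hat M$ by the Cycle condition) and controlling how the third vertices of the two four-cycles relate.
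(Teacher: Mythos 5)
Your strategy for Claims 1 and 2 is essentially the paper's: join $e_1=\lbrace i,j_1\rbrace$ and a minimum-weight edge into $A_2$ by a path in $G_B$, pin down the weights with the Cycle/Path/Star conditions, and use the Reinforced Cycle and Reinforced Pan conditions to force edges that either merge $A_1$ with $A_2$ or shorten the path, until only the $C_4$ with $w_4=\sigma(B)$ survives. Two details you underestimate: (i) the subcase where the chosen $j_1$--$j_2$ path contains no edge of $\Sigma(B)$ is not a parenthetical remark --- it occupies half of the paper's proof of Claim 1 (one must attach a shortest path to an external edge $\tilde e_1\in\Sigma(B)$, rule out $\tilde e_1$ being a chord of the cycle, and split on whether its attachment point is $j_1$); and (ii) your stated contradiction for Claim 2 (``contradicting the forced value $w(e_4)=\sigma(B)$'') is not the one that works --- a path inside a block of $\mathcal{P}_{\min}(B)$ does not reproduce the $C_4$ of Claim 1; instead the Cycle condition plus the Reinforced Pan condition (resp.\ the Reinforced Cycle condition, for the second assertion, via a two-path construction $\gamma\cup\gamma'$ through a vertex $\tilde l$ of $A_{1,l}$) forces a direct edge from $j_2$ to $j_1$, merging the components. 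These points are fixable.

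The genuine gap is in Claim 3. That $\sigma(A_2,i)=\sigma(A_3,i)$ is immediate from the Star condition applied to the three edges at $i$ (your detour through adjacent cycles there is unnecessary). But the assertion you actually need --- that $A_2$ and $A_3$ lie in distinct blocks of $\mathcal{P}_{\min}(B)$ --- does not follow from ``the analogue of (2) for $(A_1,A_3)$'': Claim 2 separates the blocks of $\mathcal{P}_{\min}(A_1)$ from the block containing $A_3$ and says nothing about $A_2$ versus $A_3$. Moreover, the Reinforced Adjacent Cycles condition (case 1) cannot be applied directly to the two $C_4$'s produced by Claim 1 for $(A_1,A_2)$ and $(A_1,A_3)$: those cycles are only guaranteed to share the single non-maximum weight edge $e_1$ (their fourth vertices $k_2$ and $k_3$ may differ), whereas case 1 requires two common non-maximum weight edges. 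The paper's argument first assumes $A_2$ and $A_3$ lie in a common block, takes a $\Sigma(B)$-avoiding path from $j_2$ to $j_3$ inside that block, uses the Cycle, Star and Reinforced Pan conditions to force an edge $\lbrace k,j_3\rbrace$ of weight $M$ through the same vertex $k$, and only then obtains two $4$-cycles sharing the two non-maximum edges $e_1$ and $e_4$, so that the Reinforced Adjacent Cycles condition yields the forbidden edge $\lbrace j_2,j_3\rbrace$. Without this intermediate construction the adjacent-cycles condition never fires, so your proof of Claim 3 is incomplete.
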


%%  PROOF
\begin{proof}
\textbf{\ref{itemPropsIf|A1|=1AndIfSigma(B)<M,Then|A2|=1a}.}
Let $e_1 = \lbrace i, j_1 \rbrace$
(resp. $e_2 = \lbrace i, j_2 \rbrace$)
be an  edge in $\Sigma(A_1,i)$
(resp. $\Sigma(A_2,i)$).
As $\sigma(A_1,i) < \sigma(A_2,i)$,
$e_1$ is necessarily unique by the Star condition.
As $B \in \mathcal{F}$,
there exists at least one path in $G_B$ linking $j_1$ to $j_2$.

Let us first assume that there is no path in $G_B$ linking $j_1$ to $j_2$
and containing at least one edge in $\Sigma(B)$.
Let $\gamma$ be a shortest path in $G_B$ linking $j_1$ to $j_2$.
$\lbrace e_1 \rbrace \cup \gamma \cup \lbrace e_2 \rbrace$ induces a cycle
$C_m= \lbrace e_1, e_2, \ldots, e_m \rbrace$ with $m \geq 4$.
Let $\tilde{e}_1$ be an edge in $\Sigma(B)$.
As $E(\gamma) \cap \Sigma(B) = \emptyset$,
$\tilde{e}_1$ cannot belong to $E(C_m)$.
Moreover,
$\tilde{e}_1$ cannot be a chord of $C_m$,
otherwise there would be a path linking $j_1$ to $j_2$ 
and containing an edge in $\Sigma(B)$.
Let $P$ be a shortest path in $G_B$ linking $\tilde{e}_1$ to a vertex $j^*$ in $\gamma$
($P$ may be reduced to $j^*$).
We select $\tilde{e}_1$ such that $P$ is as short as possible.
As $w_1 = \sigma(A_1,i) < \sigma(A_2,i) = w_2$,
the Cycle condition implies
\begin{equation}
\label{eqProofu1<u3=M=maxeinE(Cm)u(e)-2-2}
w_{1} < w_{2} \leq w_{3} = \cdots = w_{m-1} = \hat{M} = \max_{e \in \hat{E}(C_{m})} w(e),
\end{equation}
and
\begin{equation}
\label{eqEitherwm<HatMAndw2=HatM,Orwm=HatMAndw2<=HatM}
\begin{array}{cc}
\textrm{either} & w_m < \hat{M} \textrm{ and } w_2 = \hat{M},\\
\textrm{or} & w_m = \hat{M} \textrm{ and } w_2 \leq \hat{M}.
\end{array}
\end{equation}
If $j^* \not= j_1$ as represented in Figure~\ref{figLemmaCyclePanj*Not=j1-a},
then the Path condition implies $w_2 \leq \max (w_1, w(\tilde{e}_1))$.
%%  Figure
\begin{figure}[!h]
\centering
\subfloat[$j^* \not= j_1$.]{
\label{figLemmaCyclePanj*Not=j1-a}
\begin{pspicture}(0,-.4)(1.5,2)
\tiny
\begin{psmatrix}[mnode=circle,colsep=0.4,rowsep=0.3]
  & & & {$i$}\\
 & & {$j_1$}	& & {$j_2$} \\
{} & {} &  {$j^*$}	&	{} & {}
\psset{arrows=-, shortput=nab,labelsep={0.05}}
\tiny
%%  Cycle
\ncline{3,4}{3,5}_{$\hat{M}$}
\ncline{3,3}{3,4}_{$\hat{M}$}
\ncline{2,3}{3,3}_{$w_m$}^{$e_m$}
\ncline{2,3}{1,4}^{$w_1$}_{$e_1$}
\ncline{1,4}{2,5}^{$w_2$}_{$e_2$}
\ncline{2,5}{3,5}^{$\hat{M}$}_{$e_3$}
\ncline{3,1}{3,2}^{$\tilde{e}_1$}
\ncline{3,2}{3,3}
% \ncline{2,5}{2,6}^{$\tilde{e}_2$}
\normalsize
\uput[0](.4,1.6){\textcolor{blue}{$A_{2}$}}
\pspolygon[framearc=1,linestyle=dashed,linecolor=blue,linearc=.3]
(-.4,-.2)(-.4,1.3)(1.1,1.3)(1.1,-.2)
\uput[0](-2.7,1.4){\textcolor{blue}{$A_{1}$}}
\pspolygon[framearc=1,linestyle=dashed,linecolor=blue,linearc=.3]
(-2.2,.6)(-2.2,1.3)(-1.45,1.3)(-1.45,.6)
\end{psmatrix}
\end{pspicture}
}
\subfloat[$j^* = j_1$.]{
\label{figLemmaCyclePanj*Not=j1-b}
\begin{pspicture}(-.5,-.4)(.5,2)
\tiny
\begin{psmatrix}[mnode=circle,colsep=0.4,rowsep=0.3]
   & & {$i$}\\
 {} & {$j_1$}	& & {$j_2$} \\
   &  {}	&	{} & {}
\psset{arrows=-, shortput=nab,labelsep={0.05}}
\tiny
%%  Cycle
\ncline{3,3}{3,4}_{$\hat{M}$}
\ncline{3,2}{3,3}_{$\hat{M}$}
\ncline{2,2}{3,2}_{$w_m$}^{$e_m$}
\ncline{2,2}{1,3}^{$w_1$}_{$e_1$}
\ncline{1,3}{2,4}^{$w_2$}_{$e_2$}
\ncline{2,4}{3,4}^{$\hat{M}$}_{$e_3$}
\ncline{2,1}{2,2}_{$\tilde{e}_1$}
\normalsize
\uput[0](.4,1.4){\textcolor{blue}{$A_{2}$}}
\pspolygon[framearc=1,linestyle=dashed,linecolor=blue,linearc=.3]
(-.4,-.2)(-.4,1.1)(1.1,1.1)(1.1,-.2)
\uput[0](-2.6,1.3){\textcolor{blue}{$A_{1}$}}
\pspolygon[framearc=1,linestyle=dashed,linecolor=blue,linearc=.3]
(-2.2,.4)(-2.2,1.1)(-1.45,1.1)(-1.45,.4)
\end{psmatrix}
\end{pspicture}
}
\caption{$w_1 < w_2 \leq \hat{M}$ and $w_m \leq \hat{M}$.}
\label{figLemmaCyclePanj*Not=j1}
\end{figure}
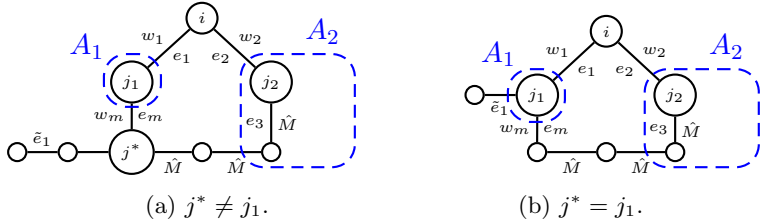
As $w_1 < w_2$ and as $w(\tilde{e}_1) = \sigma(B) < \sigma(A_2,i) = w_2$,
we get a contradiction.
We henceforth assume $j^* = j_1$.

Let us first assume $\tilde{e}_1$ incident to $j_1$
as represented in Figure~\ref{figLemmaCyclePanj*Not=j1-b}.
Any edge in $\gamma$ has weight strictly greater than $\sigma(B)$.
As $w(\tilde{e}_1) = \sigma(B)$,
the Star condition implies
$w(\tilde{e}_1) < w_1 = w_m$.
As $w_1 < \hat{M}$,
we get $w_m < \hat{M}$,
and (\ref{eqEitherwm<HatMAndw2=HatM,Orwm=HatMAndw2<=HatM}) implies $w_2 = \hat{M}$.
As $w(\tilde{e}_1) < w_1 = w_m < \hat{M}$,
the Reinforced pan condition implies that
$j_2$ is linked to $j_1$ by an edge,
a contradiction.

Let us now assume $\tilde{e}_1$ non-incident to $j_1$.
Let $e'_1$ be the edge of $P$ incident to $j_1$
as represented in Figure~\ref{figLemmaCyclePanTildee1Non-IncidentToj1}.
%%  Figure
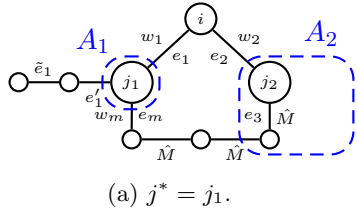
\begin{figure}[!h]
\centering
\subfloat[$j^* = j_1$.]{
\label{figLemmaCyclePanTildee1Non-IncidentToj1-b}
\begin{pspicture}(0,-.4)(.5,2)
\tiny
\begin{psmatrix}[mnode=circle,colsep=0.4,rowsep=0.3]
  & & & {$i$}\\
{} & {} & {$j_1$}	& & {$j_2$} \\
 &  &  {}	&	{} & {}
\psset{arrows=-, shortput=nab,labelsep={0.05}}
\tiny
%%  Cycle
\ncline{3,4}{3,5}_{$\hat{M}$}
\ncline{3,3}{3,4}_{$\hat{M}$}
\ncline{2,3}{3,3}_{$w_m$}^{$e_m$}
\ncline{2,3}{1,4}^{$w_1$}_{$e_1$}
\ncline{1,4}{2,5}^{$w_2$}_{$e_2$}
\ncline{2,5}{3,5}^{$\hat{M}$}_{$e_3$}
\ncline{2,1}{2,2}^{$\tilde{e}_1$}
\ncline{2,2}{2,3}_{$e'_1$}
% \ncline{2,5}{2,6}^{$\tilde{e}_2$}
\normalsize
\uput[0](.4,1.4){\textcolor{blue}{$A_{2}$}}
\pspolygon[framearc=1,linestyle=dashed,linecolor=blue,linearc=.3]
(-.4,-.2)(-.4,1.1)(1.1,1.1)(1.1,-.2)
\uput[0](-2.6,1.3){\textcolor{blue}{$A_{1}$}}
\pspolygon[framearc=1,linestyle=dashed,linecolor=blue,linearc=.3]
(-2.2,.4)(-2.2,1.1)(-1.45,1.1)(-1.45,.4)
\end{psmatrix}
\end{pspicture}
}
\caption{$w_1 < w_2 \leq \hat{M}$ and $w_m \leq \hat{M}$.}
\label{figLemmaCyclePanTildee1Non-IncidentToj1}
\end{figure}
Let us assume $w(\tilde{e}_1) \geq w_1$.
Then the Path condition implies $w(e'_1) \leq \max (w(\tilde{e}_1),w_1) = w(\tilde{e}_1)$.
As $w(\tilde{e}_1) = \sigma(B)$,
we necessarily have $w(e'_1) = \sigma(B)$.
This contradicts the choice of $\tilde{e}_1$.
Let us assume $w(\tilde{e}_1) < w_1$.
As any edge in $\gamma$ has weight strictly greater than $\sigma(B)$,
we also have $w(\tilde{e}_1) < w_m$.
If $w_m < \hat{M}$ and $w_2 = \hat{M}$,
then by the Reinforced pan condition
$j_2$ is linked to $j_1$ by an edge,
a contradiction.
If $w_m = \hat{M}$ and $w_2 \leq \hat{M}$,
then the Pan condition implies
$w_1 = w_2$ (and $V(C_m) \cap V(P) = \lbrace 2 \rbrace$),
a contradiction.

Let us now assume that there exists a path $\gamma$ in $G_B$ linking $j_1$ to $j_2$
and containing at least one edge in $\Sigma(B)$.
We select $\gamma$ as short as possible.
$\lbrace e_1 \rbrace \cup \gamma \cup \lbrace e_2 \rbrace$ induces a cycle
$C_m= \lbrace e_1, e_2, \ldots, e_m \rbrace$ with $m \geq 4$.
Let $\tilde{e}_1$ be an edge of $\gamma$ with $w(\tilde{e}_1) = \sigma(B)$.
If $\tilde{e}_1$ is non-incident to $j_1$,
then the Path condition implies
$\sigma(A_2, i) = w_2  \leq \max(w_1,w(\tilde{e}_1)) = \max(\sigma(A_1,i), \sigma(B))$,
a contradiction.
If $\tilde{e}_1$ is incident to $j_1$,
then $\tilde{e}_1 = e_m$.
As $w_1 = \sigma(A_1,i) < \sigma(A_2,i) = w_2$
and as $w(\tilde{e}_1) = \sigma(B) < \sigma(A_2,i) = w_2$,
the Cycle condition implies
\begin{equation}
\label{eqProofu1<u3=M=maxeinE(Cm)u(e)-2-2}
\max(w_{1},w_m) < w_{2} = \cdots = w_{m-1} = \hat{M} = \max_{e \in \hat{E}(C_{m})} w(e).
\end{equation}
Let us assume $|A_2| \geq 2$.
Then,
there is at least one edge $\tilde{e}_2$ in $E(A_2)$ incident to $j_2$
($\tilde{e}_2$ may coincide with $e_3$),
as represented in~Figure~\ref{figLemmaCyclePanb-1-a}.
%%  Figure
\begin{figure}[!h]
\centering
\centering
\subfloat[$|A_2| \geq 2$.]{
\label{figLemmaCyclePanb-1-a}
\begin{pspicture}(-.5,-.4)(1.5,2)
\tiny
\begin{psmatrix}[mnode=circle,colsep=0.4,rowsep=0.3]
& {$i$}\\
{$j_1$}	& & {$j_2$} & {}\\
 {}	& {}	& {}
\psset{arrows=-, shortput=nab,labelsep={0.05}}
\tiny
%%  Cycle
\ncline{3,2}{3,3}_{$\hat{M}$}
\ncline{3,1}{3,2}_{$\hat{M}$}
\ncline{2,1}{3,1}_{$w_m$}^{$e_m$}
\ncline{2,1}{1,2}^{$w_1$}_{$e_1$}
\ncline{1,2}{2,3}^{$\hat{M}$}_{$e_2$}
\ncline{2,3}{3,3}^{$\hat{M}$}_{$e_3$}
\ncline{2,3}{2,4}^{$\tilde{e}_2$}
\normalsize
\uput[0](.4,1.5){\textcolor{blue}{$A_{2}$}}
\pspolygon[framearc=1,linestyle=dashed,linecolor=blue,linearc=.3]
(-.4,-.2)(-.4,1.2)(1.1,1.2)(1.1,-.2)
\uput[0](-2.7,1.2){\textcolor{blue}{$A_{1}$}}
\pspolygon[framearc=1,linestyle=dashed,linecolor=blue,linearc=.3]
(-2.2,.4)(-2.2,1.1)(-1.45,1.1)(-1.45,.4)
\end{psmatrix}
\end{pspicture}
}
\subfloat[$|A_2| = 1$.]{
\label{figLemmaCyclePanb-1-b}
\begin{pspicture}(-.5,-.4)(1.5,2)
\tiny
\begin{psmatrix}[mnode=circle,colsep=0.4,rowsep=0.3]
& {$i$}\\
{$j_1$}	& & {$j_2$}\\
 {}	& {}	& {}
\psset{arrows=-, shortput=nab,labelsep={0.05}}
\tiny
%%  Cycle
\ncline{3,2}{3,3}_{$\hat{M}$}
\ncline{3,1}{3,2}_{$\hat{M}$}
\ncline{2,1}{3,1}_{$w_m$}^{$e_m$}
\ncline{2,1}{1,2}^{$w_1$}_{$e_1$}
\ncline{1,2}{2,3}^{$\hat{M}$}_{$e_2$}
\ncline{2,3}{3,3}^{$\hat{M}$}_{$e_3$}
\ncline{2,1}{3,3}^{$\tilde{e}$}
\normalsize
\uput[0](.3,1.2){\textcolor{blue}{$A_{2}$}}
\pspolygon[framearc=1,linestyle=dashed,linecolor=blue,linearc=.3]
(-.4,.4)(-.4,1.1)(.4,1.1)(.4,.4)
\uput[0](-2.7,1.2){\textcolor{blue}{$A_{1}$}}
\pspolygon[framearc=1,linestyle=dashed,linecolor=blue,linearc=.3]
(-2.2,.4)(-2.2,1.1)(-1.45,1.1)(-1.45,.4)
\end{psmatrix}
\end{pspicture}
}
\caption{$w_1 < \hat{M} = w_2$ and $w_m < \hat{M}$.}
\label{figLemmaCyclePana-2}
\end{figure}
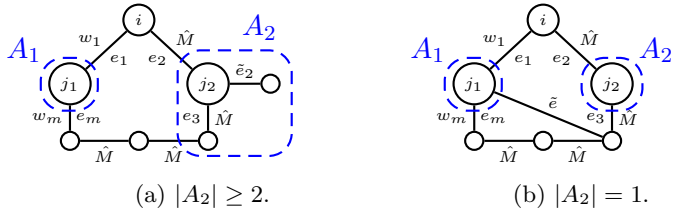
By the Reinforced cycle condition,
$\tilde{e}_2$ is linked to $j_1$ by an edge,
and therefore $A_2$ is linked to $A_1$,
a contradiction.
Let us now assume $|A_2| = 1$.
Then,
by the Reinforced cycle condition,
$e_3$ is linked to $j_1$ by an edge $\tilde{e}$.
If $m \geq 5$ as represented in Figure~\ref{figLemmaCyclePanb-1-b},
then it contradicts the minimality of $\gamma$.
Therefore,
we necessarily have $m=4$.

\textbf{\ref{itemPropA1AndA2AreInDistinctBlocksOfPmin(B)}.}
By Claim~\ref{itemPropsIf|A1|=1AndIfSigma(B)<M,Then|A2|=1a},
we have $|A_2| = 1$.
Moreover,
there exists a unique edge $e_1 \in \Sigma(A_1,i)$
and setting $e_1 = \lbrace i, j_1 \rbrace$ with $j_1 \in A_1$
and $A_2 = \lbrace j_2 \rbrace$,
there exists a vertex $k \in B$
and a cycle $C_4 = \lbrace j_1, e_1, i, e_2, j_2, e_3, k, e_4, j_1 \rbrace$
with $w_2 = w_3 = \sigma(A_2,i)$, and $w_4 = \sigma(B)$.
We set $\sigma(A_2,i) = M$.

Let us first assume that $\lbrace j_1 \rbrace$ and $A_2$ belong to the same block $B_j$ of $\mathcal{P}_{\min}(B)$.
Then,
there exists a path $\gamma$ in $G_{B_j}$ linking $j_1$ to $j_2$
as represented in Figure~\ref{figIf|A1|=1,ThenA1AndA2AreInDistinctBlocksOfPmin(B)}
and such that $w(e) > \sigma(B)$ for every edge $e$ in $\gamma$.
We select $\gamma$ as short as possible.
Let us assume
$k \in \gamma$
(resp. $k \notin \gamma$)
as represented in Figure~\ref{figIf|A1|=1,ThenA1AndA2AreInDistinctBlocksOfPmin(B)a}
(resp. Figure~\ref{figIf|A1|=1,ThenA1AndA2AreInDistinctBlocksOfPmin(B)b}).
%%  Figure
\begin{figure}[!h]
\centering
\subfloat[$k \in \gamma$.]{
\begin{pspicture}(-.5,-.7)(.5,2.2)
\tiny
\begin{psmatrix}[mnode=circle,colsep=0.7,rowsep=0.35]
& {$i$}\\
{$j_1$}	& & {$j_2$}\\
{} & {$k$}
\psset{arrows=-, shortput=nab,labelsep={0.05}}
\tiny
\ncline{2,1}{1,2}^{$w_1$}_{$e_1$}
\ncline{1,2}{2,3}_{$e_2$}^{$M$}
\ncline[linecolor=blue]{2,3}{3,2}_{$e_3$}^{$M$}
\ncline{2,1}{3,2}_{$w_4$}^{$e_4$}
\ncarc[arcangle=-25,linecolor=blue]{2,1}{3,1}_{$w_1$}^{$\tilde{e}$}
\ncarc[arcangle=-25,linecolor=blue]{3,1}{3,2}_{$M$}
\end{psmatrix}
\end{pspicture}
\label{figIf|A1|=1,ThenA1AndA2AreInDistinctBlocksOfPmin(B)a}
}
\subfloat[$k \notin \gamma$.]{
\begin{pspicture}(-.5,-.7)(.5,2.2)
\tiny
\begin{psmatrix}[mnode=circle,colsep=0.7,rowsep=0.35]
& {$i$}\\
{$j_1$}	& & {$j_2$}\\
{} & {$k$}	&	 {}
\psset{arrows=-, shortput=nab,labelsep={0.05}}
\tiny
\ncline{2,1}{1,2}^{$w_1$}_{$e_1$}
\ncarc[arcangle=-25,linecolor=blue]{2,1}{3,1}_{$w_1$}^{$\tilde{e}$}
\ncarc[arcangle=-35,linecolor=blue]{3,1}{3,3}_{$M$}
\ncline{1,2}{2,3}^{$M$}_{$e_2$}
\ncline{2,3}{3,2}^{$M$}_{$e_3$}
\ncline{2,1}{3,2}_{$w_4$}^{$e_4$}
\ncline[linecolor=blue]{3,3}{2,3}_{$M$}
\end{psmatrix}
\end{pspicture}
\label{figIf|A1|=1,ThenA1AndA2AreInDistinctBlocksOfPmin(B)b}
}
\caption{$w_4 = \sigma(B)$, $w_1 = \sigma(A_1,i) < M$ and $w(\tilde{e}) > \sigma(B) $.}
\label{figIf|A1|=1,ThenA1AndA2AreInDistinctBlocksOfPmin(B)}
\end{figure}
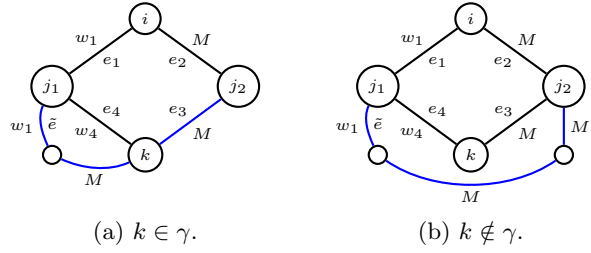
Let $\tilde{e}$ be the edge of $\gamma$ incident to $j_1$.
As $w(\tilde{e}) > \sigma(B) = w_4$,
the Star condition applied to $\lbrace \tilde{e}, e_1, e_4 \rbrace$,
implies $w(\tilde{e}) = w_1 > w_4$.
We get
\begin{equation}
\label{eqwA<w_1=w(Tildee)<w_2=M}
w_4 < w_1 = w(\tilde{e}) < w_2 = M.
\end{equation}
Let $\tilde{C}$ be the cycle formed by $\lbrace e_1 \rbrace \cup \lbrace e_2 \rbrace \cup \gamma$.
By~(\ref{eqwA<w_1=w(Tildee)<w_2=M}),
the Cycle condition implies $w(e) = M$ for every edge $e$ in $\gamma \setminus \lbrace \tilde{e} \rbrace$
and $w(e) = w_1$ for any chord $e$ of $\tilde{C}$ incident to $j_1$.
In particular,
$e_4$ cannot be a chord of $\tilde{C}$
and the situation represented in Figure~\ref{figIf|A1|=1,ThenA1AndA2AreInDistinctBlocksOfPmin(B)a} is not possible.
Then,
by~(\ref{eqwA<w_1=w(Tildee)<w_2=M}) and the Reinforced pan condition,
$j_2$ is linked to $j_1$ by an edge,
a contradiction.

Let $A_{1,1}, A_{1,2}, \ldots, A_{1,p}$ be the blocks of $\mathcal{P}_{\min}(A_1)$.
If $\sigma(B) < \sigma(A_1)$,
then there exists a block $B_j$ of $\mathcal{P}_{\min}(B)$
such that $A_{1,l} \subseteq B_j$ for all $l$,
$1 \leq l \leq p$.
As $j_1$ belongs to one block of $\mathcal{P}_{\min}(A_1)$,
we get by the previous reasoning that $j_2$ belongs to a block of $\mathcal{P}_{\min}(B)$
distinct from $B_j$.
We henceforth assume $\sigma(B) = \sigma(A_1)$.
Let us assume the existence of a block $A_{1,l}$ of $\mathcal{P}_{\min}(A_1)$
such that $A_{1,l}$ and $A_2$ belong to the same block $B_j$ of $\mathcal{P}_{\min}(B)$.
By the previous reasoning,
$j_1$ cannot belong to $A_{1,l}$.
Let $\gamma$ be a shortest path in $G_{B_j}$ linking $j_2$ to a vertex $\tilde{l}$ in $A_{1,l}$
and such that $w(e) > \sigma(B)$ for every edge $e$ in $\gamma$.
Let $\gamma'$ be a shortest path in $G_{A_1}$
linking $\tilde{l}$ to $j_1$.
We select $A_{1,l}$ and $\tilde{l} \in A_{1,l}$
such that $\gamma$ and $\gamma'$ are as short as possible.
Let us note that $\gamma$ cannot contain $j_1$,
otherwise $j_1$ and $j_2$ belong to the same block of $\mathcal{P}_{\min}(B)$,
a contradiction.
Moreover,
$\tilde{l}$ is the unique vertex common to $\gamma$ and $\gamma'$,
otherwise it contradicts the choice of $\tilde{l}$ or $A_{1,l}$.
Let us assume
$k \in \gamma$
(resp. $k \notin \gamma$)
as represented in Figure~\ref{figIf|A1|=1,ThenA1AndA2AreInDistinctBlocksOfPmin(B)a-bis}
(resp. Figure~\ref{figIf|A1|=1,ThenA1AndA2AreInDistinctBlocksOfPmin(B)b-bis}).
%%  Figure
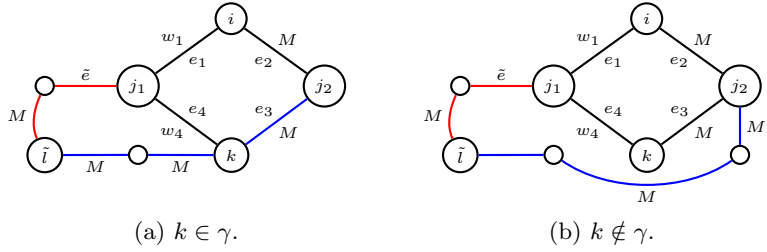
\begin{figure}[!h]
\centering
\subfloat[$k \in \gamma$.]{
\begin{pspicture}(-.5,-.7)(.5,2.2)
\tiny
\begin{psmatrix}[mnode=circle,colsep=0.7,rowsep=0.35]
& & {$i$}\\
{} & {$j_1$}	& & {$j_2$}\\
{$\tilde{l}$} & {} & {$k$}
\psset{arrows=-, shortput=nab,labelsep={0.05}}
\tiny
\ncline{2,2}{1,3}^{$w_1$}_{$e_1$}
\ncline{1,3}{2,4}_{$e_2$}^{$M$}
\ncline[linecolor=blue]{2,4}{3,3}_{$e_3$}^{$M$}
\ncline{2,2}{3,3}_{$w_4$}^{$e_4$}
\ncline[linecolor=red]{2,1}{2,2}^{$\tilde{e}$}
\ncarc[arcangle=-25,linecolor=red]{2,1}{3,1}_{$M$}
\ncline[linecolor=blue]{3,1}{3,2}_{$M$}
\ncline[linecolor=blue]{3,2}{3,3}_{$M$}
\end{psmatrix}
\end{pspicture}
\label{figIf|A1|=1,ThenA1AndA2AreInDistinctBlocksOfPmin(B)a-bis}
}
\subfloat[$k \notin \gamma$.]{
\begin{pspicture}(-.5,-.7)(.5,2.2)
\tiny
\begin{psmatrix}[mnode=circle,colsep=0.7,rowsep=0.35]
& & {$i$}\\
{} & {$j_1$}	& & {$j_2$}\\
{$\tilde{l}$} & {} & {$k$}	&	 {}
\psset{arrows=-, shortput=nab,labelsep={0.05}}
\tiny
\ncline{2,2}{1,3}^{$w_1$}_{$e_1$}
\ncline[linecolor=blue]{3,1}{3,2}
\ncarc[arcangle=-35,linecolor=blue]{3,2}{3,4}_{$M$}
\ncline{1,3}{2,4}^{$M$}_{$e_2$}
\ncline{2,4}{3,3}^{$M$}_{$e_3$}
\ncline{2,2}{3,3}_{$w_4$}^{$e_4$}
\ncline[linecolor=red]{2,1}{2,2}^{$\tilde{e}$}
\ncarc[arcangle=-25,linecolor=red]{2,1}{3,1}_{$M$}
\ncline[linecolor=blue]{3,4}{2,4}_{$M$}
\end{psmatrix}
\end{pspicture}
\label{figIf|A1|=1,ThenA1AndA2AreInDistinctBlocksOfPmin(B)b-bis}
}
\caption{$w_1 = \sigma(A_1,i) < M$ and $w(\tilde{e}) = \sigma(A_1) = \sigma(B) = w_4$.}
\label{figIf|A1|=1,ThenA1AndA2AreInDistinctBlocksOfPmin(B)-bis}
\end{figure}
As $j_1 \notin A_{1,l}$,
$\gamma'$ contains at least one edge $\tilde{e}$ with weight $\sigma(A_1)$.
As $w_1 < w_2 = M$
and as $w(e) > \sigma(B) = \sigma(A_1) = w(\tilde{e})$ for all edge $e$ in $\gamma$,
the Cycle condition applied to $\lbrace e_1 \rbrace \cup \lbrace e_2 \rbrace \cup \gamma \cup \gamma'$
implies that $\tilde{e}$ is incident to $j_1$
with
$w(\tilde{e})<M$
and $w(e) = M$ for any edge $e$ in $\gamma$ and in $\gamma' \setminus \lbrace \tilde{e} \rbrace$.
As $\max(w_1, w(\tilde{e})) < M$,
the Reinforced Cycle condition implies that $j_1$ and $j_2$ are linked by an edge,
a contradiction.

\textbf{\ref{itemPropA1A2AndA3AreAllInDistinctBlocksOfPmin(B)}.}
As $\sigma(A_1,i) < \sigma(A_2,i)$,
the Star condition implies $\sigma(A_2,i) = \sigma(A_3,i)$
and there is a unique edge $e_1$ in $\Sigma(A_1,i)$.
By Claim~\ref{itemPropsIf|A1|=1AndIfSigma(B)<M,Then|A2|=1a},
we have $|A_2| = |A_3| = 1$.
Let us set $e_1 = \lbrace i, j_1 \rbrace$ with $j_1 \in A_1$,
$A_2 = \lbrace j_2 \rbrace$,
and $A_3 = \lbrace j_3 \rbrace$.
By Claim~\ref{itemPropA1AndA2AreInDistinctBlocksOfPmin(B)},
$\lbrace j_1 \rbrace$ and $A_2$
(resp. $A_3$)
are in distinct blocks of $\mathcal{P}_{\min}(B)$.
Let us assume that $A_2$ and $A_3$ belong to the same block $B_j$ of $\mathcal{P}_{\min}(B)$.
By Claim~\ref{itemPropsIf|A1|=1AndIfSigma(B)<M,Then|A2|=1a},
there exists a vertex $k \in B$
and a cycle $C_4 = \lbrace j_1, e_1, i, e_2, j_2, e_3, k, e_4, j_1 \rbrace$
with $w_2 = w_3 = \sigma(A_2,i)$, and $w_4 = \sigma(B)$.
Let us set $e'_2 = \lbrace i, j_3 \rbrace$.
As $w_1 = \sigma(A_1,i) < \sigma(A_2,i) = w_2$,
the Star condition applied to $\lbrace e_1, e_2, e'_2 \rbrace$
implies $w(e'_2)=\sigma(A_2,i)$.
We set $\sigma(A_2,i) = M$.
As $B_j \in \mathcal{P}_{\min}(B)$,
there exists a path $\gamma$ linking $j_2$ to $j_3$ in $G_{B_j}$
with $w(e) > \sigma(B)$ for all edge $e$ in~$\gamma$.
We select $\gamma$ as short as possible.
$\gamma$ cannot contain $j_1$,
otherwise $j_1$ and $j_2$ (resp. $j_3$) belong to the same block of $\mathcal{P}_{\min}(B)$,
a contradiction.
Let us assume
$k \in \gamma$
(resp. $k \notin \gamma$)
as represented in Figure~\ref{figReinforcedCycleConditionConstantCycle4-2a}
(resp. Figure~\ref{figReinforcedCycleConditionConstantCycle4-2b}).
%%  Figure
\begin{figure}[!h]
\centering
\subfloat[$k \in \gamma$.]{
\begin{pspicture}(-.5,-.7)(.5,2.2)
\tiny
\begin{psmatrix}[mnode=circle,colsep=0.7,rowsep=0.35]
& {$i$}\\
{$j_1$}	& & {$j_2$} & {$j_3$}\\
& {$k$} & {} & {}
\psset{arrows=-, shortput=nab,labelsep={0.05}}
\tiny
\ncline{2,1}{1,2}^{$w_1$}_{$e_1$}
\ncline{1,2}{2,3}_{$e_2$}^{$M$}
\ncline[linecolor=blue]{2,3}{3,2}_{$e_3$}^{$M$}
\ncline{2,1}{3,2}_{$w_4$}^{$e_4$}
\ncarc[arcangle=30]{1,2}{2,4}_{$e'_2$}^{$M$}
\ncline[linecolor=blue]{3,4}{2,4}_{$M$}
\ncline[linecolor=blue]{3,2}{3,3}_{$M$}
\ncline[linecolor=blue]{3,3}{3,4}_{$M$}
\end{psmatrix}
\end{pspicture}
\label{figReinforcedCycleConditionConstantCycle4-2a}
}
\subfloat[$k \notin \gamma$.]{
\begin{pspicture}(-.5,-.7)(.5,2.2)
\tiny
\begin{psmatrix}[mnode=circle,colsep=0.7,rowsep=0.35]
& {$i$}\\
{$j_1$}	& & {$j_2$} & {$j_3$}\\
& {$k$}	&	 {} & {}
\psset{arrows=-, shortput=nab,labelsep={0.05}}
\tiny
\ncline{2,1}{1,2}^{$w_1$}_{$e_1$}
\ncline{1,2}{2,3}^{$M$}_{$e_2$}
\ncline{2,3}{3,2}^{$M$}_{$e_3$}
\ncline{2,1}{3,2}_{$w_4$}^{$e_4$}
\ncarc[arcangle=30]{1,2}{2,4}_{$e'_2$}^{$M$}
\ncline[linecolor=blue]{3,3}{2,3}_{$M$}
\ncline[linecolor=blue]{3,3}{3,4}_{$M$}
\ncline[linecolor=blue]{3,4}{2,4}_{$M$}
\end{psmatrix}
\end{pspicture}
\label{figReinforcedCycleConditionConstantCycle4-2b}
}
\caption{$w_1 = \sigma(A_1,i) < M$ and $w_4 = \sigma(B) < M$.}
\label{figReinforcedCycleConditionConstantCycle4-2}
\end{figure}
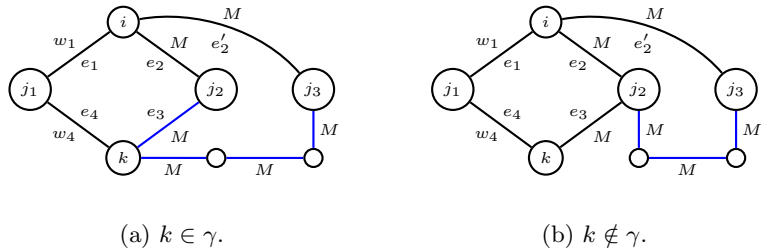
As $w_1 < M$ and as $w_4 = \sigma(B) < M = w(e'_2)$,
the Cycle condition applied to
$\lbrace e_4 \rbrace \cup \lbrace e_1 \rbrace \cup \lbrace e'_2 \rbrace \cup (\gamma \setminus \lbrace e_3 \rbrace)$
(resp. $\lbrace e_4 \rbrace \cup \lbrace e_1 \rbrace \cup \lbrace e'_2 \rbrace \cup \gamma \cup \lbrace e_3 \rbrace$)
implies $w(e) = M$ for all $e$ in $\gamma$.
As $w_4 < M$,
the Reinforced pan condition
applied to the cycle formed by $\lbrace e_2 \rbrace \cup \lbrace e'_2 \rbrace \cup \gamma$
and the path formed by $e_4$
(resp. $\lbrace e_4 \rbrace \cup \lbrace e_3 \rbrace$)
implies that
$j_3$ is linked to a vertex in
$\lbrace j_1, k \rbrace$
(resp. $\lbrace j_1, k, j_2 \rbrace$)
by an edge $e$.
$e$ is necessarily incident to~$k$ as represented in Figure~\ref{figReinforcedCycleConditionConstantCycle5},
otherwise it would link $j_1$ or $A_2$ to $A_3$.
%%  Figure
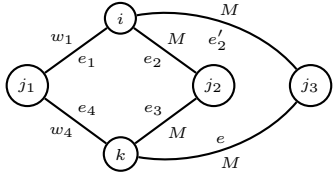
\begin{figure}[!h]
\centering
\begin{pspicture}(-.5,-.2)(.5,2.2)
\tiny
\begin{psmatrix}[mnode=circle,colsep=0.7,rowsep=0.35]
& {$i$}\\
{$j_1$}	& & {$j_2$} & {$j_3$}\\
& {$k$}
\psset{arrows=-, shortput=nab,labelsep={0.05}}
\tiny
\ncline{2,1}{1,2}^{$w_1$}_{$e_1$}
\ncline{1,2}{2,3}_{$e_2$}^{$M$}
\ncline{2,3}{3,2}_{$e_3$}^{$M$}
\ncline{2,1}{3,2}_{$w_4$}^{$e_4$}
\ncarc[arcangle=30]{1,2}{2,4}_{$e'_2$}^{$M$}
\ncarc[arcangle=-30]{3,2}{2,4}_{$M$}^{$e$}
\end{psmatrix}
\end{pspicture}
\caption{$w_1 < M$ and $w_4 < M$.}
\label{figReinforcedCycleConditionConstantCycle5}
\end{figure}
By the Star condition applied to $\lbrace e, e_3, e_4 \rbrace$,
$e$ has weight $M$.
Then,
by the Reinforced adjacent cycles condition,
there exists an edge linking $j_2$ to $j_3$,
a contradiction.
\end{proof}

\begin{proposition}
\label{Propositionpartitionin2set1-bis}
Let us assume that the Path, Star, Cycle,
Reinforced cycle and Reinforced pan conditions are satisfied.
Let us consider $i \in N$
and $A \subseteq B \subseteq N \setminus \lbrace i \rbrace$
with $A \cup \lbrace i \rbrace$ and $B$ in $\mathcal{F}$.
Let us assume $A \notin \mathcal{F}$,
and let $A_1$, $A_2$
be two connected components of $A$
with $\sigma(A_1, i) = \sigma(A_2, i) = M$.
If $|A_1| \geq 2$ and if $\sigma(A_1)<M$,
then
\begin{enumerate}
\item
\label{itemPropsIf|A1|=1AndIfSigma(B)<M,Then|A2|=1a-bis}
$|A_2| = 1$
and $\sigma(B) = \sigma(A_1)$.
Moreover,
there exists a unique edge $\tilde{e}_1$ in $\Sigma(B)$
and setting $\tilde{e}_1= \lbrace j_1, k_1 \rbrace$
with $j_1$ and $k_1$ in $A_1$ and $A_2 = \lbrace j_2 \rbrace$,
there exist a vertex $k_2 \in B$,
and a cycle $C_5 = \lbrace k_1, \tilde{e}_1, j_1, e_2, i, e_3, j_2, e_4,\\ k_2, e_5, k_1 \rbrace$
with $w(\tilde{e}_1) = \sigma(B)$ and $w_2 = w_3 = w_4 = w_5 = M$.
\item
\label{itemPropA1AndA2AreInDistinctBlocksOfPmin(B)-bis}
$\lbrace j_1 \rbrace$ and $A_2$ are in distinct blocks of $\mathcal{P}_{\min}(B)$.
Moreover,
there is a unique block in $\mathcal{P}_{\min}(A_1)$ linked to $i$
and this block contains $j_1$ and belongs to a block of $\mathcal{P}_{\min}(B)$
different from the one containing $A_2$.
\item
\label{itemPropA1A2AndA3AreAllInDistinctBlocksOfPmin(B)-bis}
If the Reinforced adjacent cycles condition is satisfied,
then $A_2$ is unique.
\end{enumerate}
\end{proposition}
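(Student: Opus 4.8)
The plan is to follow closely the proof of Proposition~\ref{Propositionpartitionin2set1}, adapted to the present situation in which the two connecting sets $\Sigma(A_1,i)$ and $\Sigma(A_2,i)$ have the \emph{same} minimum weight $M$, while $A_1$ carries an internal edge of weight $\sigma(A_1)<M$. This internal edge plays here the role that the light connecting edge $e_1$ played there, which is why the cycle one finally produces has length $5$ instead of $4$. First I would fix $e_2=\{i,j_1\}\in\Sigma(A_1,i)$ and $e_3=\{i,j_2\}\in\Sigma(A_2,i)$, both of weight $M$, note $\sigma(B)\le\sigma(A_1)<M$, and record that, since $B\in\mathcal{F}$ and $i\notin B$, any path in $G_B$ joining $j_1$ to $j_2$ closes up with $e_2,e_3$ into a cycle through $i$.

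For Claim~\ref{itemPropsIf|A1|=1AndIfSigma(B)<M,Then|A2|=1a-bis} I would take a shortest path $\gamma$ in $G_B$ from $j_1$ to $j_2$, form the cycle $C_m=\{e_2\}\cup\gamma\cup\{e_3\}$ with $m\ge4$, and split into cases exactly as in Proposition~\ref{Propositionpartitionin2set1}, according to whether an edge of $\Sigma(B)$ can be reached from $\gamma$ without recrossing $\Sigma(B)$. The Cycle condition forces all but at most two edges of $C_m$ (together with its chords) to equal $\hat M$; since $e_2$ and $e_3$ already have weight $M$ and $\sigma(A_1)<M$, this gives first $M=\hat M$, and then that the only admissible low-weight edge of $C_m$ is a weight-$\sigma(A_1)$ edge lying inside $A_1$ and incident to $j_1$ --- the Path condition excludes a low-weight edge far from $j_1$ or on a chord, and the Star condition at $j_1$ pins it to a single edge $\tilde e_1=\{j_1,k_1\}$ with $k_1\in A_1$; in particular $\sigma(B)=\sigma(A_1)$ and $\Sigma(B)=\{\tilde e_1\}$. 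To see that the arc from $k_1$ to $j_2$ has length exactly $2$ and that $|A_2|=1$, I would invoke the Reinforced Cycle and Reinforced Pan conditions: an internal edge of $A_2$ at $j_2$, or any extra vertex on the $k_1$--$j_2$ arc, would be forced to be joined back to $j_1$, either connecting $A_1$ to $A_2$ inside $A$, which is impossible, or contradicting the minimality of $\gamma$. What survives is precisely the five-cycle $C_5=\{k_1,\tilde e_1,j_1,e_2,i,e_3,j_2,e_4,k_2,e_5,k_1\}$ with $w(\tilde e_1)=\sigma(B)$ and $w_2=w_3=w_4=w_5=M$.

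For Claim~\ref{itemPropA1AndA2AreInDistinctBlocksOfPmin(B)-bis} I would argue by contradiction: if $\{j_1\}$ and $A_2$ lay in one block of $\mathcal{P}_{\min}(B)$, there would be a $j_1$--$j_2$ path in $G_B$ using only edges of weight $>\sigma(B)$; feeding the shortest such path into the cycle $\{e_2\}\cup\gamma\cup\{e_3\}$ and applying the Cycle condition (now with $w(\tilde e_1)=\sigma(B)<M=w(e_2)=w(e_3)$) and then the Reinforced Cycle or Reinforced Pan condition yields an edge $\{j_1,j_2\}$, contradicting that $A_1$ and $A_2$ are distinct components of $A$. Running the same argument with $j_1$ replaced by a representative of any block of $\mathcal{P}_{\min}(A_1)$ linked to $i$ --- appending a shortest $G_{A_1}$-path to $j_1$ and using $w(\tilde e_1)=\sigma(A_1)$ --- shows that each such block lies in a block of $\mathcal{P}_{\min}(B)$ distinct from the one containing $A_2$, and the Star condition at $i$ together with the rigidity of the $C_5$ forces that block to be unique and to contain $j_1$. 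Finally, for Claim~\ref{itemPropA1A2AndA3AreAllInDistinctBlocksOfPmin(B)-bis}, assuming a second component $A_2'\ne A_1$ of $A$ with $\sigma(A_2',i)=M$, Claim~\ref{itemPropsIf|A1|=1AndIfSigma(B)<M,Then|A2|=1a-bis} applied to $(A_1,A_2')$ produces another five-cycle $C_5'$; since $\Sigma(B)=\{\tilde e_1\}$ is a singleton, $C_5$ and $C_5'$ share at least the edge $\tilde e_1$, so they form an adjacent pair of $5$-cycles, and matching their overlap against item~\ref{enumProp|E(C)|=|E(C')|=5} of the Reinforced Adjacent Cycles condition (the residual overlaps being discarded by the Star condition at $i$ and $k_1$) forces an edge joining $A_2$ to $A_2'$ --- a contradiction --- so $A_2$ is unique.

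The main obstacle is Claim~\ref{itemPropsIf|A1|=1AndIfSigma(B)<M,Then|A2|=1a-bis}, and within it the case analysis that pins the cycle down to length exactly $5$ with the prescribed weight pattern: one must simultaneously locate the unique low-weight edge inside $A_1$ incident to $j_1$, rule out every extra vertex on the $k_1$--$j_2$ arc, and establish $\sigma(B)=\sigma(A_1)$, $\Sigma(B)=\{\tilde e_1\}$ and $|A_2|=1$. Each alternative configuration is excluded by a different one of the Path, Star, Reinforced Cycle and Reinforced Pan conditions, just as in Proposition~\ref{Propositionpartitionin2set1} but one edge longer because the low-weight edge now sits strictly inside the component $A_1$ rather than joining it to $i$; the bookkeeping of the sub-cases --- the position of $\tilde e_1$, whether $k_1$ is itself attached to $i$, whether the $k_1$--$j_2$ arc meets $A_1$ again --- is where the real work lies.
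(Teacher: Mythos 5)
Your overall strategy is the right one, but two steps as you describe them would not go through. First, in Claim~1 your cycle is built from a shortest $j_1$--$j_2$ path $\gamma$ in $G_B$ closed up by the two edges $\lbrace i,j_1\rbrace$ and $\lbrace i,j_2\rbrace$, and you then assert that the Cycle condition locates ``the only admissible low-weight edge of $C_m$'' as a weight-$\sigma(A_1)$ edge inside $A_1$ incident to $j_1$. But nothing in your construction puts the low-weight edge $\tilde e_1\in E(A_1)$ on that cycle at all: $\gamma$ may well consist entirely of weight-$M$ edges, the cycle is then constant, and the Cycle condition says nothing about where $\tilde e_1$ sits. The paper's proof works in the opposite order: it first chooses $\tilde e_1=\lbrace s,t\rbrace$ with $w(\tilde e_1)<M$ and a shortest path $P$ in $G_{A_1}$ from $\tilde e_1$ to a vertex $j_1$ with $\lbrace i,j_1\rbrace\in\Sigma(A_1,i)$ (showing $w(e)=M$ on $P$ by the Path condition), and only then takes $\gamma$ to be a shortest path in $G_B$ from $j_2$ to a vertex $j^*$ of $V(P)\cup\lbrace s\rbrace$. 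The dichotomy $j^*\in V(P)$ versus $j^*=s$ is what decides whether $\tilde e_1$ hangs off the cycle as a pendant (handled by the Reinforced Pan condition) or lies on the cycle (handled by the Cycle and Reinforced Cycle conditions), and it is only in the second case that the five-cycle with $\tilde e_1$ incident to $j_1$ emerges. Without some device of this kind that forces $\tilde e_1$ into the picture, your reduction to ``a low-weight edge on $C_m$ near $j_1$'' is unsupported.

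Second, in Claim~3 you rely on the Reinforced Adjacent Cycles condition after observing that the two five-cycles $C_5$ and $C_5'$ share $\tilde e_1$. That condition, however, only applies to pairs of $5$-cycles sharing \emph{three} edges (items~2(a) and~2(b) both require the cycles to agree except in the single vertex $4$ versus $4'$). When the two cycles are $\lbrace k_1,\tilde e_1,j_1,e_2,i,e_3,j_2,e_4,k_2,e_5,k_1\rbrace$ and $\lbrace k_1,\tilde e_1,j_1,e_2,i,\tilde e_3,j_3,\tilde e_4,k_3,\tilde e_5,k_1\rbrace$ with $k_2\neq k_3$, they share only the two edges $\tilde e_1$ and $\lbrace j_1,i\rbrace$, and no appeal to the Star condition turns this into a configuration covered by the Reinforced Adjacent Cycles condition. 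The paper disposes of the case $k_2\neq k_3$ separately, by applying the Reinforced Pan condition to the six-cycle $\lbrace e_5,e_4,e_3,\tilde e_3,\tilde e_4,\tilde e_5\rbrace$ with the pendant edge $\tilde e_1$, which forces an edge from $j_2$ to $j_1$ or $k_1$ and hence a contradiction; only the case $k_2=k_3$ is settled by the Reinforced Adjacent Cycles condition (after checking that $j_2$ and $j_3$ are not both linked to $j_1$ or to $k_1$, which is needed to invoke item~2(b)). A similar remark applies to the uniqueness of the block of $\mathcal{P}_{\min}(A_1)$ linked to $i$ in Claim~2: the paper excludes a second such block with the Reinforced Pan condition applied to explicit pans built from the $C_5$, not with the Star condition alone.
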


\begin{proof}
\textbf{\ref{itemPropsIf|A1|=1AndIfSigma(B)<M,Then|A2|=1a-bis}.}
Let $\tilde{e}_1 = \lbrace s, t \rbrace$ be an edge in $E(A_1)$
with weight $w(\tilde{e}_1) < M$
(such an edge exists as $\sigma(A_1)<M$).
Let $P$ be a shortest path in $G_{A_1}$ connecting $\tilde{e}_1$ to a vertex $j_1$ in $A_1$ such that
$e'_1 = \lbrace i,j_1 \rbrace$ belongs to $\Sigma(A_1,i)$.
We select $\tilde{e}_1$ such that $P$ is as short as possible
($P$ may be reduced to $j_1$).
We can assume $t \in P$. 
We have $w(e) \geq M$ for any edge $e$ in $P$,
otherwise we can change $\tilde{e}_1$.
As $w(\tilde{e}_1) < M = w(e'_1)$,
the Path condition implies
$w(e) \leq \max (w(\tilde{e}_1),w(e'_1))=M$ for any edge $e$ in $P$.
Therefore,
we have
\begin{equation}
\label{eqw(e)=MForAnyEdgeeInP}
w(e) = M \textrm{ for any edge } e \textrm{ in } P.
\end{equation}
Let $e_2' = \lbrace i, j_2 \rbrace$ be an edge in $\Sigma(A_2,i)$.
Let $\gamma$ be a shortest path in $G_{B}$ linking $j_2$
to a vertex $j^*$ in $V(P) \cup \lbrace s \rbrace$.
Let us denote by $P'$ the path formed by
$P \cup \lbrace \tilde{e}_1 \rbrace$
and by $P'_{j_1,j^*}$
the subpath of $P'$ linking $j_1$
to $j^*$.
Then $e'_1$, $e'_2$, $\gamma$, and $P'_{j_1,j^*}$
form a cycle $C_m = \lbrace e_1, e_2, \ldots, e_m \rbrace$ with $m \geq 4$.
We select $j_1$, $P$, and $\gamma$ such that
$C_m$ is as short as possible.

Let us first assume $j^* \in V(P)$.
Let us denote by $P_{j_1,j^*}$
(resp. $P_{t,j^*}$)
the subpath of $P$ linking $j_1$
(resp. $t$)
to $j^*$.
The Path condition applied to
$\lbrace \tilde{e}_1 \rbrace \cup P_{t,j^*} \cup \gamma \cup \lbrace e'_2 \rbrace $
implies $w(e) \leq \max  (w(\tilde{e}_1),w(e'_2))=M$ for any edge $e$ in $\gamma$.
Let us assume that there exists an edge $e$ in $\gamma$
with $w(e)<M$.
If $P_{t,j^*}$ contains at least one edge,
then (\ref{eqw(e)=MForAnyEdgeeInP}) and the Path condition imply $M \leq \max (w(\tilde{e}_1), w(e))$,
a contradiction.
Otherwise, $P_{t,j^*}$ reduces to~$j^*$
and $\tilde{e}_1$ is incident to $j^*$.
Then,
by the Star condition,
the edge in $\gamma$
incident to $j^*$
has weight $M$,
and the Path condition also implies $M \leq \max(w(\tilde{e}_1), w(e))$,
a contradiction.
Hence,
every edge in $\gamma$ has weight $M$ and $C_m$ is a constant cycle.
We can assume $j^* = 2$ and $2 \leq j_1 < i < j_2 \leq m$
as represented in Figure~\ref{figReinforcedCycleConditionConstantCycle+Pan1a},
after renumbering if necessary.
%%  Figure
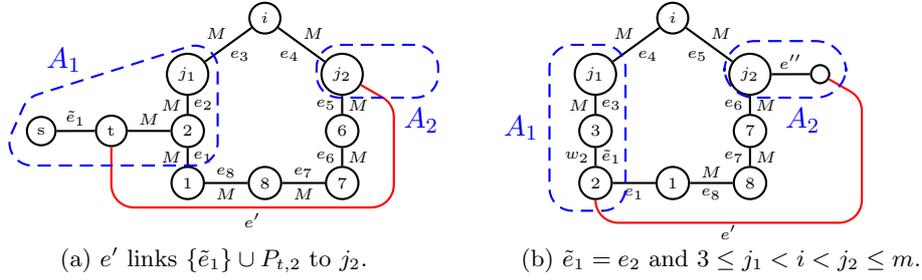
\begin{figure}[!h]
\centering
\subfloat[$e'$ links $\lbrace \tilde{e}_1 \rbrace \cup P_{t,2}$ to $j_2$.]{
\label{figReinforcedCycleConditionConstantCycle+Pan1a}
\begin{pspicture}(-.5,-.7)(1,2)
\tiny
\begin{psmatrix}[mnode=circle,colsep=0.5,rowsep=0.2]
  & & & {$i$}\\
 & & {$j_1$}	& & {$j_2$}\\
{s} & {t} &  {$2$}	&	& {$6$}\\
& 	&{$1$} & {$8$} & {$7$}
\psset{arrows=-, shortput=nab,labelsep={0.05}}
\tiny
%%  Cycle
\ncline{4,4}{4,5}_{$M$}^{$e_7$}
\ncline{4,3}{4,4}_{$M$}^{$e_8$}
\ncline{3,3}{4,3}_{$M$}^{$e_1$}
\ncline{2,3}{3,3}_{$M$}^{$e_2$}
\ncline{2,3}{1,4}^{$M$}_{$e_3$}
\ncline{1,4}{2,5}^{$M$}_{$e_4$}
\ncline{2,5}{3,5}^{$M$}_{$e_5$}
\ncline{4,5}{3,5}_{$M$}^{$e_6$}
\ncline{3,1}{3,2}^{$\tilde{e}_1$}
\ncline{3,2}{3,3}^{$M$}
\ncangles[angleA=270,angleB=-30,armA=.8,armB=.5,linearc=.3,linecolor=red]{3,2}{2,5}_{$e'$}
\normalsize
\uput[0](.7,.9){\textcolor{blue}{$A_{2}$}}
\pspolygon[framearc=1,linestyle=dashed,linecolor=blue,linearc=.3]
(-.4,1.2)(-.4,1.9)(1.2,1.9)(1.2,1.2)
\uput[0](-4,1.7){\textcolor{blue}{$A_{1}$}}
\pspolygon[framearc=1,linestyle=dashed,linecolor=blue,linearc=.3]
(-4.6,.3)(-4.2,1.2)(-1.7,2)(-1.7,.3)
\end{psmatrix}
\end{pspicture}
}
\subfloat[$\tilde{e}_1 = e_2$ and $3 \leq j_1 < i < j_2 \leq m$.]{
\label{figReinforcedCycleConditionConstantCycle+Pan1b}
\begin{pspicture}(-1.5,-.7)(2,2)
\tiny
\begin{psmatrix}[mnode=circle,colsep=0.5,rowsep=0.2]
& {$i$}\\
{$j_1$}	& & {$j_2$} & {}\\
 {$3$}	&	& {$7$}\\
{$2$} & {$1$} & {$8$}
\psset{arrows=-, shortput=nab,labelsep={0.15}}
\tiny
\psset{arrows=-, shortput=nab,labelsep={0.05}}
\ncline{3,1}{4,1}_{$w_2$}^{$\tilde{e}_1$}
\ncline{1,2}{2,3}^{$M$}_{$e_5$}
\ncline{2,3}{3,3}^{$M$}_{$e_6$}
\ncline{2,3}{2,4}^{$e''$}
\ncline{2,1}{1,2}^{$M$}_{$e_4$}
\ncline{2,1}{3,1}_{$M$}^{$e_3$}
\ncline{4,2}{4,3}^{$M$}_{$e_8$}
\ncline{4,1}{4,2}_{$e_1$}
\ncline{4,3}{3,3}_{$M$}^{$e_7$}
\color{black}
\psset{arrows=-, shortput=nab,labelsep={0.0}}
\ncangles[angleA=270,angleB=-30,armA=.3,armB=.5,linearc=.3,linecolor=red]{4,1}{2,4}_{$e'$}
\normalsize
\uput[0](.4,.9){\textcolor{blue}{$A_{2}$}}
\pspolygon[framearc=1,linestyle=dashed,linecolor=blue,linearc=.4]
(-.4,1.2)(-.4,1.95)(1.2,1.95)(1.2,1.2)
\uput[0](-3.3,.8){\textcolor{blue}{$A_{1}$}}
\pspolygon[framearc=1,linestyle=dashed,linecolor=blue,linearc=.3]
(-2.7,-.3)(-2.7,1.9)(-1.7,1.9)(-1.7,-.3)
\end{psmatrix}
\end{pspicture}
}
\caption{$w(\tilde{e}_1) < M$.}
\label{figReinforcedCycleConditionConstantCycle+Pan1}
\end{figure}
By the Reinforced pan condition,
there exists an edge $e'$ in $E$
linking $\lbrace \tilde{e}_1 \rbrace \cup P_{t,2}$ to $j_2$,
and therefore linking $A_1$ to $A_2$,
a contradiction.

Let us now assume $j^* = s$.
Then,
we necessarily have $m \geq 5$.
We can assume
$\tilde{e}_1 = e_2$ with $s=2$ and $t=3$,
and $3 \leq j_1 < i < j_2 \leq m$ as represented in Figure~\ref{figReinforcedCycleConditionConstantCycle+Pan1b}.
Moreover,
by the Cycle condition,
we have $w_2 < w_3 = \cdots = w_m= M$ and $w_1 \leq M$.
If $|A_2| \geq 2$,
then there exists an edge $e''$ in $E(A_2)$
incident to $j_2$.
By the Reinforced cycle condition,
there exists an edge linking $e_2$ to $e''$,
and therefore linking $A_1$ to $A_2$,
a contradiction.
Thus,
we necessarily have $|A_2| =1$
with $A_2 = \lbrace j_2 \rbrace$.
If $w_1 < M$,
then we have $\max(w_1, w_2) < M$
and by the Reinforced Cycle condition,
the edge $\lbrace i, j_2 \rbrace$ is linked to $2$ by an edge $e'$.
If $e' = \lbrace j_2, 2 \rbrace$,
then $e'$ links $A_1$ to $A_2$,
a contradiction.
If $e' = \lbrace i, 2 \rbrace$,
then it contradicts the choice of $j_1$
and the minimality of $C_m$.
Hence,
we have $w_1 = M$.
As $w_2 < M$,
the Reinforced Cycle condition implies the existence of an edge $e'$
(resp. $e''$)
linking the edge $\lbrace i, j_2 \rbrace$
(resp.  $\lbrace j_2, j_2 +1 \rbrace$)
to $\tilde{e}_1$.
$e'$
(resp. $e''$)
is necessarily incident to $i$
(resp. $j_2 +1$),
otherwise it would link $A_2$ to $A_1$.
If $e'$ is incident to $2$ as represented in Figure~\ref{figSmallerCyclef},
then it contradicts the choice of $j_1$ (and the minimality of $C_m$).
%%  Figure
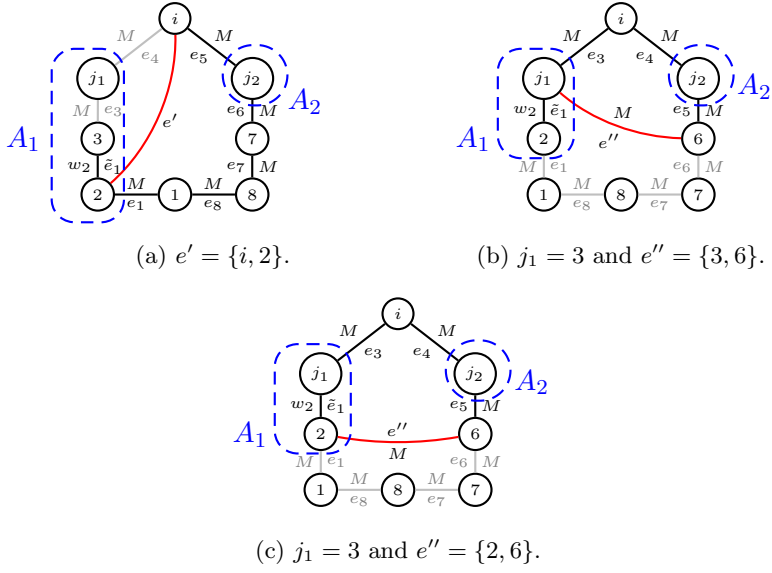
\begin{figure}[!h]
\centering
\subfloat[$e'=\lbrace i, 2 \rbrace$.]{
\label{figSmallerCyclef}
\begin{pspicture}(-1,-.5)(2,2.5)
\tiny
\begin{psmatrix}[mnode=circle,colsep=0.5,rowsep=0.25]
& {$i$}\\
{$j_1$}	& & {$j_2$}\\
 {$3$}	&	& {$7$}\\
{$2$} & {$1$} & {$8$}
\psset{arrows=-, shortput=nab,labelsep={0.15}}
\tiny
\psset{arrows=-, shortput=nab,labelsep={0.05}}
\ncline{3,1}{4,1}_{$w_2$}^{$\tilde{e}_1$}
\ncline{1,2}{2,3}^{$M$}_{$e_5$}
\ncline{2,3}{3,3}^{$M$}_{$e_6$}
\ncline{4,2}{4,3}^{$M$}_{$e_8$}
\ncline{4,1}{4,2}_{$e_1$}^{$M$}
\ncline{4,3}{3,3}_{$M$}^{$e_7$}
\color{gray}
\ncline[linecolor=lightgray]{2,1}{1,2}^{$M$}_{$e_4$}
\ncline[linecolor=lightgray]{2,1}{3,1}_{$M$}^{$e_3$}
\color{black}
\psset{arrows=-, shortput=nab,labelsep={0.0}}
\ncarc[arcangle=-25,linecolor=red]{4,1}{1,2}_{$e'$}
\normalsize
\uput[0](.4,1.3){\textcolor{blue}{$A_{2}$}}
\pspolygon[framearc=1,linestyle=dashed,linecolor=blue,linearc=.4]
(-.45,1.25)(-.45,2.05)(.4,2.05)(.4,1.25)
\uput[0](-3.3,.8){\textcolor{blue}{$A_{1}$}}
\pspolygon[framearc=1,linestyle=dashed,linecolor=blue,linearc=.3]
(-2.7,-.3)(-2.7,2)(-1.75,2)(-1.75,-.3)
\end{psmatrix}
\end{pspicture}
}
\subfloat[$j_1=3$ and $e''=\lbrace 3,6 \rbrace$.]{
\label{figSmallerCycled}
\begin{pspicture}(-1,-.5)(1,2.5)
\tiny
\begin{psmatrix}[mnode=circle,colsep=0.5,rowsep=0.25]
& {$i$}\\
{$j_1$}	& & {$j_2$}\\
 {$2$}	&	& {$6$}\\
{$1$} & {$8$} & {$7$}
\psset{arrows=-, shortput=nab,labelsep={0.15}}
\tiny
\psset{arrows=-, shortput=nab,labelsep={0.05}}
\ncline{2,1}{1,2}^{$M$}_{$e_3$}
\ncline{2,1}{3,1}_{$w_2$}^{$\tilde{e}_1$}
\ncline{1,2}{2,3}^{$M$}_{$e_4$}
\ncline{2,3}{3,3}^{$M$}_{$e_5$}
\ncarc[arcangle=-20,linecolor=red]{2,1}{3,3}_{$e''$}^{$M$}
\color{gray}
\ncline[linecolor=lightgray]{4,2}{4,3}^{$M$}_{$e_7$}
\ncline[linecolor=lightgray]{4,1}{4,2}^{$M$}_{$e_8$}
\ncline[linecolor=lightgray]{3,1}{4,1}^{$e_1$}_{$M$}
\ncline[linecolor=lightgray]{4,3}{3,3}_{$M$}^{$e_6$}
\normalsize
\uput[0](.4,1.5){\textcolor{blue}{$A_{2}$}}
\pspolygon[framearc=1,linestyle=dashed,linecolor=blue,linearc=.4]
(-.45,1.25)(-.45,2.05)(.4,2.05)(.4,1.25)
\uput[0](-3.3,.8){\textcolor{blue}{$A_{1}$}}
\pspolygon[framearc=1,linestyle=dashed,linecolor=blue,linearc=.3]
(-2.7,.55)(-2.7,2)(-1.7,2)(-1.7,.55)
\end{psmatrix}
\end{pspicture}
}\\
\subfloat[$j_1=3$ and $e''=\lbrace 2,6 \rbrace$.]{
\label{figSmallerCyclec}
\begin{pspicture}(-1,-.5)(1,2.5)
\tiny
\begin{psmatrix}[mnode=circle,colsep=0.5,rowsep=0.25]
& {$i$}\\
{$j_1$}	& & {$j_2$}\\
 {$2$}	&	& {$6$}\\
{$1$} & {$8$} & {$7$}
\psset{arrows=-, shortput=nab,labelsep={0.15}}
\tiny
\psset{arrows=-, shortput=nab,labelsep={0.05}}
\ncline{2,1}{1,2}^{$M$}_{$e_3$}
\ncline{2,1}{3,1}_{$w_2$}^{$\tilde{e}_1$}
\ncline{1,2}{2,3}^{$M$}_{$e_4$}
\ncline{2,3}{3,3}^{$M$}_{$e_5$}
\ncarc[arcangle=-10,linecolor=red]{3,1}{3,3}^{$e''$}_{$M$}
\color{gray}
\ncline[linecolor=lightgray]{4,2}{4,3}^{$M$}_{$e_7$}
\ncline[linecolor=lightgray]{4,1}{4,2}^{$M$}_{$e_8$}
\ncline[linecolor=lightgray]{3,1}{4,1}^{$e_1$}_{$M$}
\ncline[linecolor=lightgray]{4,3}{3,3}_{$M$}^{$e_6$}
\normalsize
\uput[0](.4,1.5){\textcolor{blue}{$A_{2}$}}
\pspolygon[framearc=1,linestyle=dashed,linecolor=blue,linearc=.4]
(-.45,1.25)(-.45,2.05)(.4,2.05)(.4,1.25)
\uput[0](-3.3,.8){\textcolor{blue}{$A_{1}$}}
\pspolygon[framearc=1,linestyle=dashed,linecolor=blue,linearc=.3]
(-2.7,.55)(-2.7,2)(-1.7,2)(-1.7,.55)
\end{psmatrix}
\end{pspicture}
}
\caption{$w(\tilde{e}_1) = w_2 < M$ and $w_1 = M$.}
\label{figSmallerCycle}
\end{figure}
Hence,
we have $e' = \lbrace i, 3 \rbrace$ and this implies $j_1 = 3$
(otherwise it still contradicts the choice of~$j_1$).
If $e'' = \lbrace 3, j_2 +1 \rbrace$ as represented in Figure~\ref{figSmallerCycled},
then it contradicts the choice of $\gamma$.
Thus,
we necessarily have $e'' = \lbrace 2, j_2 +1 \rbrace$ as represented in Figure~\ref{figSmallerCyclec}.
If $m \geq 6$,
then it contradicts the minimality of $\gamma$.
Therefore,
we necessarily have $m=5$.

Let us assume that there exist an edge $\tilde{e} \not= \tilde{e}_1$
in $E(B)$ with weight $w(\tilde{e}) \leq w(\tilde{e}_1)$.
By the Cycle condition,
any chord of $C_5$ has weight $M$.
Thus,
$\tilde{e}$ cannot belong to $\hat{E}(C_5)$.
Let $\tilde{P}$ be a shortest path in $G_B$ linking $\tilde{e}$ to $C_5$.
As $w(\tilde{e}) \leq w(\tilde{e}_1) < M$,
the Pan condition applied to the pan formed by $C_5$ and $\tilde{P} \cup \lbrace \tilde{e} \rbrace$
implies that one of the edges in $E(C_5)$ adjacent to $\tilde{e}_1$
should have weight $w(\tilde{e}_1)$,
a contradiction.
Therefore,
we necessarily have $w(\tilde{e}_1) = \sigma(B) = \sigma(A_1)$
and $\Sigma(B) = \Sigma(A_1) = \lbrace \tilde{e}_1 \rbrace$.

\textbf{\ref{itemPropA1AndA2AreInDistinctBlocksOfPmin(B)-bis}.}
By Claim~\ref{itemPropsIf|A1|=1AndIfSigma(B)<M,Then|A2|=1a-bis},
we have $|A_2| = 1$ and $\sigma(B) = \sigma(A_1)$.
Moreover,
there exists a unique edge $e_1$ in $\Sigma(B) = \Sigma(A_1)$
and setting $e_1 = \lbrace j_1, k_1 \rbrace$
with $j_1$ and $k_1$ in $A_1$ and $A_2 = \lbrace j_2 \rbrace$,
there exists a vertex $k_2$ in $B$ and a cycle $C_5 = \lbrace k_1, e_1, j_1, e_2, i, e_3, j_2, e_4, k_2, e_5, k_1\rbrace$
with $w_1 = \sigma(A_1)$ and $w_2 = \cdots = w_5 = M$.

Let us first assume that $\lbrace j_1 \rbrace$ and $A_2$ belong to the same block $B_j$ of $\mathcal{P}_{\min}(B)$.
Then,
there exists a path $\gamma$ in $G_{B_j}$ linking $j_1$ to $j_2$
as represented in Figure~\ref{figC5Path}
and such that $w(e) > \sigma(B)$ for every edge $e$ in $\gamma$.
%%  Figure
\begin{figure}[!h]
\centering
\subfloat[$k_2 \in \gamma$.]{
\label{figC5Patha}
\begin{pspicture}(-1,-.2)(1,2.6)
\tiny
\begin{psmatrix}[mnode=circle,colsep=0.5,rowsep=0.25]
& {$i$}\\
{$j_1$}	& & {$j_2$}\\
{$k_1$}	&	& {$k_2$}\\
{} &  & {}
\psset{arrows=-, shortput=nab,labelsep={0.15}}
\tiny
\psset{arrows=-, shortput=nab,labelsep={0.05}}
\ncline{2,1}{1,2}^{$M$}_{$e_2$}
\ncline{2,1}{3,1}_{$w_1$}^{$e_1$}
\ncline{1,2}{2,3}^{$M$}_{$e_3$}
\ncline[linecolor=blue]{2,3}{3,3}^{$M$}_{$e_4$}
\ncline{3,1}{3,3}_{$M$}^{$e_5$}
\ncarc[arcangle=-60,linecolor=blue]{2,1}{4,1}_{$e'$}
\ncarc[arcangle=-10,linecolor=blue]{4,1}{4,3}
\ncarc[arcangle=-20,linecolor=blue]{4,3}{3,3}
\normalsize
\uput[0](.4,1.5){\textcolor{blue}{$A_{2}$}}
\pspolygon[framearc=1,linestyle=dashed,linecolor=blue,linearc=.4]
(-.45,1.25)(-.45,2.05)(.4,2.05)(.4,1.25)
\uput[0](-3.5,1.4){\textcolor{blue}{$A_{1}$}}
\pspolygon[framearc=1,linestyle=dashed,linecolor=blue,linearc=.3]
(-2.9,.3)(-2.9,2)(-1.6,2)(-1.6,.3)
\end{psmatrix}
\end{pspicture}
}
\subfloat[$k_2 \notin \gamma$.]{
\label{figC5Pathb}
\begin{pspicture}(-1,-.2)(1,2.6)
\tiny
\begin{psmatrix}[mnode=circle,colsep=0.5,rowsep=0.25]
& {$i$}\\
{$j_1$}	& & {$j_2$}\\
{$k_1$}	&	& {$k_2$}\\
{} &  & {}
\psset{arrows=-, shortput=nab,labelsep={0.15}}
\tiny
\psset{arrows=-, shortput=nab,labelsep={0.05}}
\ncline{2,1}{1,2}^{$M$}_{$e_2$}
\ncline{2,1}{3,1}_{$w_1$}^{$e_1$}
\ncline{1,2}{2,3}^{$M$}_{$e_3$}
\ncline{2,3}{3,3}^{$M$}_{$e_4$}
\ncline{3,1}{3,3}_{$M$}^{$e_5$}
\ncarc[arcangle=-60,linecolor=blue]{2,1}{4,1}_{$e'$}
\ncarc[arcangle=-10,linecolor=blue]{4,1}{4,3}
\ncarc[arcangle=-60,linecolor=blue]{4,3}{2,3}
\normalsize
\uput[0](.4,1.5){\textcolor{blue}{$A_{2}$}}
\pspolygon[framearc=1,linestyle=dashed,linecolor=blue,linearc=.4]
(-.45,1.25)(-.45,2.05)(.4,2.05)(.4,1.25)
\uput[0](-3.5,1.4){\textcolor{blue}{$A_{1}$}}
\pspolygon[framearc=1,linestyle=dashed,linecolor=blue,linearc=.3]
(-2.9,.3)(-2.9,2)(-1.6,2)(-1.6,.3)
\end{psmatrix}
\end{pspicture}
}
\caption{$w_1=\sigma(A_1) < M = w_2 = \cdots = w_5$.}
\label{figC5Path}
\end{figure}
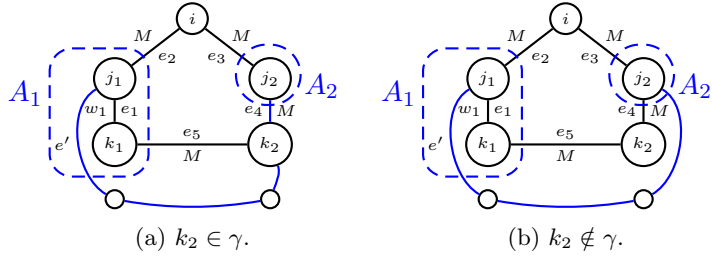
We select $\gamma$ as short as possible.
Let $\tilde{C}$ be the cycle formed by $\lbrace e_2 \rbrace \cup \lbrace e_3 \rbrace \cup \gamma$.
Any edge in $E(\tilde{C})$ has a weight strictly greater than $\sigma(A_1)$.
By the Cycle condition,
$e_1$ cannot be a chord of $\tilde{C}$.
Thus,
$\gamma$ cannot contain $k_1$ but may contain~$k_2$.
Let us assume $k_2 \in \gamma$
(resp. $k_2 \notin \gamma$)
as represented in Figure~\ref{figC5Patha}
(resp. Figure~\ref{figC5Pathb}).
Let $e'$ be the edge of $\gamma$ incident to $j_1$.
As $w_1 < M = w_2$,
the Star condition applied to $\lbrace e', e_1, e_2 \rbrace$ implies $w(e') = M$.
Then,
the Cycle condition applied to the cycle formed by $\lbrace e_1 \rbrace \cup \lbrace e_5 \rbrace \cup \gamma$
(resp. $\lbrace e_1 \rbrace \cup \lbrace e_5 \rbrace \cup \lbrace e_4 \rbrace \cup \gamma$)
implies $w(e) = M$ for any edge $e$ in $\gamma$.
Hence,
$\tilde{C}$ is a constant cycle.
As $w_1 = \sigma(A_1) < M$,
the Reinforced pan condition implies the existence of an edge linking
$j_2$ to $e_1$,
a contradiction.

As $\lbrace j_1, k_1 \rbrace$ is the unique edge in $\Sigma(A_1)$,
$\mathcal{P}_{\min}(A_1)$ contains at most two blocks.
Let $A_{1,1}, \ldots, A_{1,p}$ with $p \leq 2$ be the blocks of $\mathcal{P}_{\min}(A_1)$ linked to $i$
and let us assume $j_1 \in A_{1,1}$.
Let us assume $p=2$.
Then,
$A_{1,2}$ necessarily contains $k_1$.
Let $\tilde{e} = \lbrace i, \tilde{l} \rbrace$  be an edge in $E(A_1,i)$
linking $i$ to~$A_{1,2}$.
If $\tilde{l} = k_1$ as represented in Figure~\ref{figIf|A1|=1,ThenA1AndA2AreInDistinctBlocksOfPmin(B)a-ter},
then
as $w_1 < M = w_5$
the Star condition applied to $\lbrace \tilde{e}, e_1, e_5 \rbrace$
implies $w(\tilde{e}) = M$.
%%  Figure
\begin{figure}[!h]
\centering
\subfloat[$\tilde{l} = k_1$.]{
\begin{pspicture}(-.5,-.5)(1,2)
\tiny
\begin{psmatrix}[mnode=circle,colsep=0.55,rowsep=0.35]
 & {$i$}\\
{$j_1$}	& & {$j_2$}\\
{$k_1$} & & {$k_2$}
\psset{arrows=-, shortput=nab,labelsep={0.05}}
\tiny
\ncline{2,1}{1,2}^{$M$}_{$e_2$}
\ncline{2,1}{3,1}_{$w_1$}^{$e_1$}
\ncline{1,2}{2,3}_{$e_3$}^{$M$}
\ncline{2,3}{3,3}_{$e_4$}^{$M$}
\ncarc[arcangle=-20,linecolor=red]{3,1}{1,2}_{$\tilde{e}$}
\ncline{3,1}{3,3}_{$M$}^{$e_5$}
\normalsize
\uput[0](.1,1.5){\textcolor{blue}{$A_{2}$}}
\pspolygon[framearc=1,linestyle=dashed,linecolor=blue,linearc=.4]
(-.5,.65)(-.5,1.4)(.3,1.4)(.3,.65)
\uput[0](-3,1.7){\textcolor{blue}{$A_{1}$}}
\pspolygon[framearc=1,linestyle=dashed,linecolor=blue,linearc=.3]
(-3,-.4)(-3,1.4)(-1.85,1.4)(-1.85,-.4)
\end{psmatrix}
\end{pspicture}
\label{figIf|A1|=1,ThenA1AndA2AreInDistinctBlocksOfPmin(B)a-ter}
}
\subfloat[$\tilde{l} \not= k_1$.]{
\begin{pspicture}(-1,-.5)(1,2)
\tiny
\begin{psmatrix}[mnode=circle,colsep=0.55,rowsep=0.35]
 & & & {$i$}\\
&  & {$j_1$}	& & {$j_2$}\\
{$\tilde{l}$} & {} & {$k_1$} & & {$k_2$}
\psset{arrows=-, shortput=nab,labelsep={0.05}}
\tiny
\ncline{2,3}{1,4}^{$M$}_{$e_2$}
\ncline{2,3}{3,3}_{$w_1$}^{$e_1$}
\ncline{1,4}{2,5}_{$e_3$}^{$M$}
\ncline{2,5}{3,5}_{$e_4$}^{$M$}
\ncarc[arcangle=40,linecolor=red]{3,1}{1,4}^{$\tilde{e}$}
\ncline[linecolor=blue]{3,1}{3,2}_{$M$}
\ncline[linecolor=blue]{3,2}{3,3}_{$M$}^{$e'$}
\ncline{3,3}{3,5}_{$M$}^{$e_5$}
\normalsize
\uput[0](.1,1.5){\textcolor{blue}{$A_{2}$}}
\pspolygon[framearc=1,linestyle=dashed,linecolor=blue,linearc=.4]
(-.5,.65)(-.5,1.4)(.3,1.4)(.3,.65)
\uput[0](-5,1.7){\textcolor{blue}{$A_{1}$}}
\pspolygon[framearc=1,linestyle=dashed,linecolor=blue,linearc=.3]
(-4.9,-.45)(-4.9,1.45)(-1.8,1.45)(-1.8,-.45)
\uput[0](-4.5,1.2){\textcolor{cyan}{$A_{1,1}$}}
\pspolygon[framearc=1,linestyle=dashed,linewidth=.02,linecolor=cyan,linearc=.3]
(-3.7,.7)(-3.7,1.35)(-1.9,1.35)(-1.9,.7)
\uput[0](-4.9,.6){\textcolor{cyan}{$A_{1,2}$}}
\pspolygon[framearc=1,linestyle=dashed,linewidth=.02,linecolor=cyan,linearc=.3]
(-4.65,-.35)(-4.65,.4)(-1.9,.4)(-1.9,-.35)
\end{psmatrix}
\end{pspicture}
\label{figIf|A1|=1,ThenA1AndA2AreInDistinctBlocksOfPmin(B)b-ter}
}
\caption{$w_1 = \sigma(A_1) < M$.}
\label{figIf|A1|=1,ThenA1AndA2AreInDistinctBlocksOfPmin(B)-ter}
\end{figure}
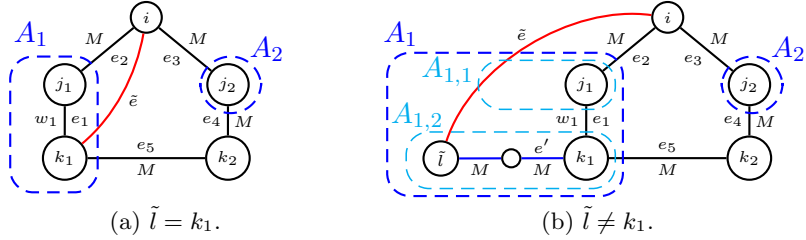
As $w_1 < M$,
the Reinforced pan condition applied to the pan formed by $\lbrace e_3, e_4, e_5, \tilde{e} \rbrace$
and $e_1$ implies that
$j_2$ is linked to $j_1$ or $k_1$,
a contradiction.
We henceforth assume $\tilde{l} \not= k_1 $.
Let $\gamma'$ be a shortest path in $G_{A_{1,2}}$
linking $\tilde{l}$ to $k_1$
as represented in Figure~\ref{figIf|A1|=1,ThenA1AndA2AreInDistinctBlocksOfPmin(B)b-ter}.
Let~$e'$ be the edge of $\gamma'$ incident to $k_1$.
As $w_1 < M = w_5$,
the Star condition applied to $\lbrace e_1, e_5, e' \rbrace$ implies $w(e') = M$.
Then,
by the Cycle condition applied to the cycle formed by $\lbrace \tilde{e}, e_2, e_1 \rbrace \cup \gamma'$,
we get $w(\tilde{e}) = M$ and $w(e) = M $ for every edge in $\gamma'$.
Finally,
by the Reinforced pan condition applied to the pan formed by $\lbrace \tilde{e}, e_3, e_4, e_5 \rbrace \cup \gamma'$ and $e_1$,
there is an edge linking $j_2$ to $j_1$ or~$k_1$,
a contradiction.
Hence,
we necessarily have $p=1$
and $A_{1,1}$ is the unique block of $\mathcal{P}_{\min}(A_1)$ linked to $i$.
As $j_1 \in A_{1,1}$,
$A_{1,1}$ and $A_2$ are in distinct blocks of $\mathcal{P}_{\min}(B)$ by the previous reasoning.

\textbf{\ref{itemPropA1A2AndA3AreAllInDistinctBlocksOfPmin(B)-bis}.}
By contradiction,
let $A_3$ be a third connected component of $A$ satisfying $\sigma(A_3,i) = M$.
By Claim~\ref{itemPropsIf|A1|=1AndIfSigma(B)<M,Then|A2|=1a-bis},
we have $|A_3| = 1$.
Moreover,
there exists a unique edge $e_1$ in $\Sigma(B) = \Sigma(A_1)$
and setting $e_1 = \lbrace j_1, k_1 \rbrace$
with $j_1$ and $k_1$ in $A_1$,
and $A_2 = \lbrace j_2 \rbrace$
(resp. $A_3 = \lbrace j_3 \rbrace$),
there exists a vertex $k_2$
(resp. $k_3$)
in $B$ and a cycle $C_5 = \lbrace k_1, e_1, j_1, e_2, i, e_3, j_2, e_4, k_2, e_5, k_1 \rbrace$
(resp. $\tilde{C}_5 = \lbrace k_1, e_1, j_1, e_2, i, \tilde{e}_3, j_3, \tilde{e}_4, k_3, \tilde{e}_5, k_1 \rbrace$)
with $w_1 = \sigma(A_1)$ and $w_2 = w_3 = \cdots = w_5 = M$
(resp. $w_2 = w(\tilde{e}_3) = \cdots = w(\tilde{e}_5) = M$).
We may have $k_2 = k_3$ and $e_5 = \tilde{e}_5$.
Let us assume $k_2 \not= k_3$ as represented in Figure~\ref{figIf|A1|=1,ThenA1AndA2AreInDistinctBlocksOfPmin(B)-5b}.
%%  Figure
\begin{figure}[!h]
\centering
\subfloat[$k_2 = k_3$.]{
\begin{pspicture}(-.5,-.8)(.5,2)
\tiny
\begin{psmatrix}[mnode=circle,colsep=0.55,rowsep=0.35]
 & {$i$}\\
{$j_1$}	& & {$j_2$} & {$j_3$}\\
{$k_1$} & & {$k_2$}
\psset{arrows=-, shortput=nab,labelsep={0.05}}
\tiny
\ncline{2,1}{1,2}^{$M$}_{$e_2$}
\ncline{2,1}{3,1}_{$w_1$}^{$e_1$}
\ncline{1,2}{2,3}_{$e_3$}^{$M$}
\ncline{2,3}{3,3}_{$e_4$}^{$M$}
\ncline{3,1}{3,3}_{$M$}^{$e_5$}
\ncarc[arcangle=30]{1,2}{2,4}_{$\tilde{e}_3$}^{$M$}
\ncarc[arcangle=30]{2,4}{3,3}_{$\tilde{e}_4$}^{$M$}
\normalsize
\uput[0](-1.1,1){\textcolor{blue}{$A_{2}$}}
\pspolygon[framearc=1,linestyle=dashed,linecolor=blue,linearc=.4]
(-.5,.65)(-.5,1.4)(.3,1.4)(.3,.65)
\uput[0](1,1.6){\textcolor{blue}{$A_{3}$}}
\pspolygon[framearc=1,linestyle=dashed,linecolor=blue,linearc=.4]
(.6,.65)(.6,1.4)(1.4,1.4)(1.4,.65)
\uput[0](-2.9,1.7){\textcolor{blue}{$A_{1}$}}
\pspolygon[framearc=1,linestyle=dashed,linecolor=blue,linearc=.3]
(-2.9,-.45)(-2.9,1.45)(-1.8,1.45)(-1.8,-.45)
\end{psmatrix}
\end{pspicture}
\label{figIf|A1|=1,ThenA1AndA2AreInDistinctBlocksOfPmin(B)-5a}
}
\subfloat[$k_2 \not= k_3$.]{
\begin{pspicture}(-1,-.8)(1,2)
\tiny
\begin{psmatrix}[mnode=circle,colsep=0.55,rowsep=0.35]
 & {$i$}\\
{$j_1$}	& & {$j_2$} & {$j_3$}\\
{$k_1$} & & {$k_2$} & {$k_3$}
\psset{arrows=-, shortput=nab,labelsep={0.05}}
\tiny
\ncline{2,1}{1,2}^{$M$}_{$e_2$}
\ncline{2,1}{3,1}_{$w_1$}^{$e_1$}
\ncline{1,2}{2,3}_{$e_3$}^{$M$}
\ncline{2,3}{3,3}_{$e_4$}^{$M$}
\ncline{3,1}{3,3}_{$M$}^{$e_5$}
\ncarc[arcangle=30]{1,2}{2,4}_{$\tilde{e}_3$}^{$M$}
\ncline{2,4}{3,4}_{$\tilde{e}_4$}^{$M$}
\ncarc[arcangle=-30]{3,1}{3,4}^{$\tilde{e}_5$}_{$M$}
\normalsize
\uput[0](-2.3,1){\textcolor{blue}{$A_{2}$}}
\pspolygon[framearc=1,linestyle=dashed,linecolor=blue,linearc=.4]
(-1.7,.65)(-1.7,1.4)(-.9,1.4)(-.9,.65)
\uput[0](-.1,1.6){\textcolor{blue}{$A_{3}$}}
\pspolygon[framearc=1,linestyle=dashed,linecolor=blue,linearc=.4]
(-.5,.65)(-.5,1.4)(.3,1.4)(.3,.65)
\uput[0](-4.1,1.7){\textcolor{blue}{$A_{1}$}}
\pspolygon[framearc=1,linestyle=dashed,linecolor=blue,linearc=.3]
(-4.1,-.45)(-4.1,1.45)(-3,1.45)(-3,-.45)
\end{psmatrix}
\end{pspicture}
\label{figIf|A1|=1,ThenA1AndA2AreInDistinctBlocksOfPmin(B)-5b}
}
\caption{$w_1 = \sigma(A_1) < M$.}
\label{figIf|A1|=1,ThenA1AndA2AreInDistinctBlocksOfPmin(B)-5}
\end{figure}
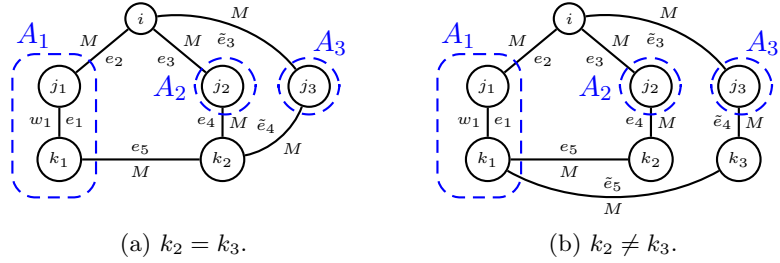
As $w_1 < M$,
the Reinforced pan condition applied to
$\lbrace e_5, e_4, e_3, \tilde{e}_3, \tilde{e}_4, \tilde{e}_5 \rbrace$
and $e_1$
implies that $j_2$ is linked to $j_1$ or $k_1$,
a contradiction.
Let us now assume $k_2 = k_3$ as represented in Figure~\ref{figIf|A1|=1,ThenA1AndA2AreInDistinctBlocksOfPmin(B)-5a}.
$\lbrace j_1, j_2 \rbrace$ and $\lbrace k_1, j_2 \rbrace$
(resp. $\lbrace j_1, j_3 \rbrace$  and $\lbrace k_1, j_3 \rbrace$)
cannot exist,
otherwise $A_1$ and $A_2$
(resp. $A_3$)
are not distinct components.
Then,
by the Reinforced adjacent cycles condition,
there exists an edge linking $j_2$ to $j_3$,
a contradiction.
\end{proof}

\begin{lemma}
\label{Lemmapartitionin2set2}
Let us assume that the Path, Star, Cycle, Pan, Adjacent cycles,
Reinforced cycle, Reinforced pan,
and Reinforced adjacent cycles conditions are satisfied.
Let us consider $i \in N$
and $A \subseteq B \subseteq N \setminus \lbrace i \rbrace$
with $A \cup \lbrace i \rbrace$ and $B$ in $\mathcal{F}$.
Let us assume $A \notin \mathcal{F}$,
and let $A_1, A_2, \ldots, A_p$ with $p \geq 2$ be the connected components of $A$.
We set $K_A = \lbrace 2, \ldots, p \rbrace$.
Let us assume that one of the following conditions is satisfied:
\begin{enumerate}
\item
$|A_1| = 1$ and $\sigma(B) \leq \sigma(A_1, i) < \sigma(A_2, i) \leq \cdots \leq \sigma(A_p,i)$.
\item
$|A_1| \geq 2$ and either 
\begin{equation}
\label{eqSigma(B)<=Sigma(A1)<=Sigma(A1,i)<Sigma(A2,i)<=Sigma(Ap,i)}
\sigma(B) \leq \sigma(A_1) \leq \sigma(A_1, i) < \sigma(A_2, i) \leq \cdots \leq \sigma(A_p,i),
\end{equation}
or
\begin{equation}
\label{eqSigma(B)<=Sigma(A1)<Sigma(A1,i)=Sigma(A2,i)=Sigma(Ap,i)}
\sigma(B) \leq \sigma(A_1) < \sigma(A_1, i) = \sigma(A_2, i) = \cdots = \sigma(A_p,i).
\end{equation}
\end{enumerate}
Let $A_{1,1}, A_{1,2}, \ldots, A_{1,k}$
(resp. $B_{1}, B_{2}, \ldots, B_{q}$)
be the blocks of $\mathcal{P}_{\min}(A_1)$
(resp. $\mathcal{P}_{\min}(B)$).
Let $J_{A_1}$
(resp. $J_B$)
be the set of indices $j \in \lbrace 1, \ldots, k \rbrace$
(resp. $j \in \lbrace 1, \ldots, q \rbrace$)
such that $A_{1,j}$
(resp. $B_j$)
is linked to $i$ by an edge $e$ in $E(A_1,i)$
(resp. $E(B,i)$)
with weight $w(e) > \sigma(A_1)$
(resp. $w(e) > \sigma(B)$).
We set $J_{A_1} = \emptyset$ if $|A_1| = 1$.
Then,
we have
\begin{equation}
\label{itemA1jSubseteqBjForAlljWith1<=j<=k}
A_{1,j} \subseteq B_j,\, \forall j \in J_{A_1},
\end{equation}
\begin{equation}
\label{itemAl=1AndAlSubseteqBk+l-1ForAlllWith2<=l<=p}
|A_j| = 1  \textrm{ and } A_j \subseteq B_{j}, \, \forall j \in K_A,
\end{equation}
with $J_{A_1} \cup K_A \subseteq J_B$ and $J_{A_1} \cap K_A = \emptyset$,
after renumbering if necessary.
Moreover,
if (\ref{eqSigma(B)<=Sigma(A1)<Sigma(A1,i)=Sigma(A2,i)=Sigma(Ap,i)}) is satisfied,
then $|J_{A_1}|=1$ and $|K_A|=1$.
\end{lemma}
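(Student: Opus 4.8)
The plan is to deduce the statement by combining Propositions~\ref{Propositionpartitionin2set1} and~\ref{Propositionpartitionin2set1-bis}, applied to the relevant pairs (and triples) of components of $A$, with a bookkeeping argument that reconciles the labelling of the blocks of $\mathcal{P}_{\min}(B)$ coming from the blocks $A_{1,j}$ of $\mathcal{P}_{\min}(A_1)$ with the one coming from the components $A_2,\dots,A_p$; the single point that does not reduce to those propositions is isolated at the end.

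First I would record that, since $A\cup\{i\}\in\mathcal{F}$ and $A$ decomposes into the connected components $A_1,\dots,A_p$, every $A_l$ is joined to $i$, so $\sigma(A_l,i)$ is defined and $A_l\cup\{i\}\in\mathcal{F}$. For $l\in K_A$: in Case~1 and under~\eqref{eqSigma(B)<=Sigma(A1)<=Sigma(A1,i)<Sigma(A2,i)<=Sigma(Ap,i)} we have $\sigma(A_1,i)<\sigma(A_2,i)\le\sigma(A_l,i)$ and $\sigma(B)<\sigma(A_l,i)$, so the first claim of Proposition~\ref{Propositionpartitionin2set1} applied to $(A_1,A_l)$ gives $|A_l|=1$; under~\eqref{eqSigma(B)<=Sigma(A1)<Sigma(A1,i)=Sigma(A2,i)=Sigma(Ap,i)} we have $\sigma(A_1,i)=\sigma(A_l,i)=M$ with $|A_1|\ge 2$ and $\sigma(A_1)<M$, so the first claim of Proposition~\ref{Propositionpartitionin2set1-bis} gives $|A_l|=1$ and $\sigma(B)=\sigma(A_1)$, and its third claim gives $p=2$, that is $|K_A|=1$. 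This establishes the singleton part of~\eqref{itemAl=1AndAlSubseteqBk+l-1ForAlllWith2<=l<=p} and the last assertion of the lemma.

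Next I would locate all the relevant blocks of $\mathcal{P}_{\min}(B)$. Each $A_{1,j}$ lies in a single block of $\mathcal{P}_{\min}(B)$ because $\mathcal{P}_{\min}(A_1)$ refines $\mathcal{P}_{\min}(B)_{|A_1}$ (the remark following Theorem~\ref{thLGNEPNFaSNuSsFaSNuSin}), and each singleton $A_l$ lies in one block. If $j\in J_{A_1}$ then $A_{1,j}$ is joined to $i$ by an edge of weight $>\sigma(A_1)\ge\sigma(B)$, and if $l\in K_A$ then $A_l$ is joined to $i$ by an edge of weight $\sigma(A_l,i)$, which exceeds $\sigma(B)$ in every case ($\sigma(A_l,i)\ge\sigma(A_2,i)>\sigma(A_1,i)\ge\sigma(B)$ in Case~1; $\sigma(A_l,i)>\sigma(A_1,i)\ge\sigma(A_1)\ge\sigma(B)$ under~\eqref{eqSigma(B)<=Sigma(A1)<=Sigma(A1,i)<Sigma(A2,i)<=Sigma(Ap,i)}; $\sigma(A_l,i)=M>\sigma(A_1)=\sigma(B)$ under~\eqref{eqSigma(B)<=Sigma(A1)<Sigma(A1,i)=Sigma(A2,i)=Sigma(Ap,i)}), so in all cases the containing block lies in $J_B$. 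For distinctness of the blocks carrying the $A_l$: under~\eqref{eqSigma(B)<=Sigma(A1)<Sigma(A1,i)=Sigma(A2,i)=Sigma(Ap,i)} there is nothing to prove since $|K_A|=1$; otherwise, for $l\ne l'$ in $K_A$ I would use the second claim of Proposition~\ref{Propositionpartitionin2set1} for the pair formed by the component with the smaller $\sigma(\cdot,i)$ and the other if $\sigma(A_l,i)\ne\sigma(A_{l'},i)$, and its third claim with $A_1$ as first component and $A_l,A_{l'}$ in the roles of $A_2,A_3$ if $\sigma(A_l,i)=\sigma(A_{l'},i)$. That no block carrying some $A_{1,j}$ ($j\in J_{A_1}$) carries some $A_l$ ($l\in K_A$) follows from the second claim of Proposition~\ref{Propositionpartitionin2set1} applied with $A_2:=A_l$ (legitimate since $\sigma(A_1,i)<\sigma(A_l,i)$ and $\sigma(B)<\sigma(A_l,i)$) in Case~1 and under~\eqref{eqSigma(B)<=Sigma(A1)<=Sigma(A1,i)<Sigma(A2,i)<=Sigma(Ap,i)}, and from the second claim of Proposition~\ref{Propositionpartitionin2set1-bis} under~\eqref{eqSigma(B)<=Sigma(A1)<Sigma(A1,i)=Sigma(A2,i)=Sigma(Ap,i)}, which moreover forces $|J_{A_1}|=1$ there. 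After renumbering the blocks of $\mathcal{P}_{\min}(B)$ and, if necessary, the components $A_2,\dots,A_p$, these facts deliver~\eqref{itemA1jSubseteqBjForAlljWith1<=j<=k}, \eqref{itemAl=1AndAlSubseteqBk+l-1ForAlllWith2<=l<=p}, $J_{A_1}\cap K_A=\emptyset$ and $J_{A_1}\cup K_A\subseteq J_B$.

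The step I expect to be the main obstacle is the remaining one: showing that the blocks of $\mathcal{P}_{\min}(B)$ carrying distinct $A_{1,j}$, $j\in J_{A_1}$, are themselves distinct — which is needed to make the renumbering $A_{1,j}\subseteq B_j$ consistent and does not follow from Propositions~\ref{Propositionpartitionin2set1} and~\ref{Propositionpartitionin2set1-bis}. It is vacuous under~\eqref{eqSigma(B)<=Sigma(A1)<Sigma(A1,i)=Sigma(A2,i)=Sigma(Ap,i)} since $|J_{A_1}|=1$ there, and when $\sigma(A_1)=\sigma(B)$ and $G_B$ is cycle-free or carries an edge of weight $<\sigma(B)$ it follows from Lemma~\ref{lemPminvB-vB>=vA-vA2}, which then gives $\mathcal{P}_{\min}(A_1)=\mathcal{P}_{\min}(B)_{|A_1}$ outright. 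In the remaining configurations ($\sigma(B)<\sigma(A_1)$, or $\sigma(A_1)=\sigma(B)$ with $G_B$ containing a cycle and no low-weight edge attached) I would argue by contradiction in the style of the proofs of the two propositions: if two blocks $A_{1,j},A_{1,j'}$ of $\mathcal{P}_{\min}(A_1)$ both joined to $i$ lay in one block of $\mathcal{P}_{\min}(B)$, I would join them by a shortest path $\gamma$ of $G_B$ all of whose edge-weights exceed $\sigma(B)$, close it with a path of $G_{A_1}$ through a $\sigma(A_1)$-weight edge together with the two edges from $i$, and use the Star condition to control the edge-weights at $i$ (at most one edge below their common maximum), the Cycle condition on the cycle through $i$, and the Reinforced Cycle, Reinforced Pan and Reinforced Adjacent Cycles conditions on the resulting cycle and pair of adjacent cycles, to force an edge of $G$ joining $A_{1,j}$ to $A_{1,j'}$ inside $A_1$, contradicting that they are distinct blocks of $\mathcal{P}_{\min}(A_1)$. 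Once this is settled, the compatible renumbering of $B_1,\dots,B_q$ is purely notational and the proof is complete.
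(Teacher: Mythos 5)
Your reduction of the component statements to Propositions~\ref{Propositionpartitionin2set1} and~\ref{Propositionpartitionin2set1-bis} matches the paper's route: the singleton claims, the fact that each $A_l$ ($l\in K_A$) and each $A_{1,j}$ ($j\in J_{A_1}$) lands in a block indexed by $J_B$, the pairwise separation of the $A_l$'s and of the $A_l$'s from the $A_{1,j}$'s, and the count $|J_{A_1}|=|K_A|=1$ under~(\ref{eqSigma(B)<=Sigma(A1)<Sigma(A1,i)=Sigma(A2,i)=Sigma(Ap,i)}) are all obtained exactly as you describe. You have also correctly isolated the one step that the two propositions do not cover, namely that distinct blocks $A_{1,j}$, $j\in J_{A_1}$, of $\mathcal{P}_{\min}(A_1)$ must land in \emph{distinct} blocks of $\mathcal{P}_{\min}(B)$.

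However, that step is precisely where your proposal has a genuine gap: in the residual configurations ($\sigma(B)<\sigma(A_1)$, or $\sigma(A_1)=\sigma(B)$ with $G_B$ neither cycle-free nor attached to a lower-weight edge, so that Lemma~\ref{lemPminvB-vB>=vA-vA2} does not apply) you only sketch a contradiction argument ``in the style of the proofs of the two propositions'' and do not carry it out; as written this is a plan, not a proof, and it is not clear that the Star/Cycle/Reinforced conditions alone would close it without substantial new case analysis. The paper avoids this entirely by a structural observation you missed: since $\sigma(A_1,i)\geq\sigma(A_1)$, the set $A'=\bigcup_{j\in J_{A_1}}A_{1,j}$ is exactly the trace on $A_1$ of the block of $\mathcal{P}_{\min}(A_1\cup\{i\})$ containing $i$, and $\mathcal{P}_{\min}(A_1)_{|A'}=\{A_{1,j}\}_{j\in J_{A_1}}$. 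Because the Star, Path, Cycle, Pan and Adjacent cycles conditions are assumed, Theorem~\ref{thPathBranchCyclePanAdjacentCond} gives inheritance of $\mathcal{F}$-convexity (and Corollary~\ref{corIGNEiafttsihfv} superadditivity) for $\mathcal{P}_{\min}$, so condition~3) of Theorem~\ref{thLGNuPNFNFSNuSSNNuSFABFABF} applied to $A_1\subseteq B$ and $A'$ yields $\mathcal{P}_{\min}(A_1)_{|A'}=\mathcal{P}_{\min}(B)_{|A'}$. This single identity delivers both the containments $A_{1,j}\subseteq B_j$ and their distinctness in one stroke, with no weight-by-weight or cycle-by-cycle analysis. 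You should replace your sketched contradiction argument by this appeal to the already-established characterization of inheritance of $\mathcal{F}$-convexity.
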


\begin{proof}
Let us first assume $|A_1| = 1$.
Then,
$J_{A_1} = \emptyset$ and (\ref{itemA1jSubseteqBjForAlljWith1<=j<=k}) is satisfied.
Claims~\ref{itemPropsIf|A1|=1AndIfSigma(B)<M,Then|A2|=1a}
and~\ref{itemPropA1A2AndA3AreAllInDistinctBlocksOfPmin(B)} of Proposition~\ref{Propositionpartitionin2set1} applied to
$A = A_1 \cup \bigcup_{k \in K_A} A_k$ and $B$
imply~(\ref{itemAl=1AndAlSubseteqBk+l-1ForAlllWith2<=l<=p}).
Let us now assume $|A_1| \geq 2$.
As $\sigma(A_1,i) \geq \sigma(A_1)$,
we have $\sigma(A_1 \cup \lbrace i \rbrace) = \sigma(A_1)$.
Then $A' = \bigcup_{j \in J_{A_1}} A_{1,j}$
is a block of $\mathcal{P}_{\min}(A_1 \cup \lbrace i \rbrace)_{|A_1}$
and $\mathcal{P}_{\min}(A_1)_{|A'} = \bigcup_{j \in J_{A_1}} \lbrace A_{1,j} \rbrace$.
By Theorem~\ref{thPathBranchCyclePanAdjacentCond},
there is inheritance of $\mathcal{F}$-convexity for super\-additive games
with the correspondence~$\mathcal{P}_{\min}$.
In particular, 
there is inheritance of $\mathcal{F}$-convexity for unanimity games.
By~Corollary~\ref{corIGNEiafttsihfv},
there is inheritance of superadditivity
with the correspondence~$\mathcal{P}_{\min}$.
Then,
Theorem~\ref{thLGNuPNFNFSNuSSNNuSFABFABF} implies
$\mathcal{P}_{\min}(A_1)_{|A'} = \mathcal{P}_{\min}(B)_{|A'}$
and therefore (\ref{itemA1jSubseteqBjForAlljWith1<=j<=k}) is satisfied.
If (\ref{eqSigma(B)<=Sigma(A1)<=Sigma(A1,i)<Sigma(A2,i)<=Sigma(Ap,i)}) is satisfied,
then Claims~\ref{itemPropsIf|A1|=1AndIfSigma(B)<M,Then|A2|=1a},
\ref{itemPropA1AndA2AreInDistinctBlocksOfPmin(B)},
and~\ref{itemPropA1A2AndA3AreAllInDistinctBlocksOfPmin(B)} of Proposition~\ref{Propositionpartitionin2set1}
imply~(\ref{itemAl=1AndAlSubseteqBk+l-1ForAlllWith2<=l<=p}).
If~(\ref{eqSigma(B)<=Sigma(A1)<Sigma(A1,i)=Sigma(A2,i)=Sigma(Ap,i)}) is satisfied,
then Claims~\ref{itemPropsIf|A1|=1AndIfSigma(B)<M,Then|A2|=1a-bis},
\ref{itemPropA1AndA2AreInDistinctBlocksOfPmin(B)-bis},
and~\ref{itemPropA1A2AndA3AreAllInDistinctBlocksOfPmin(B)-bis} of Proposition~\ref{Propositionpartitionin2set1-bis}
imply~(\ref{itemAl=1AndAlSubseteqBk+l-1ForAlllWith2<=l<=p})
and $|J_{A_1}|=|K_A|=1$.
\end{proof}

%%  Lemma
\begin{lemma}
\label{lemmaPathStarCyclePanSum_FInP_Min(B)v(ACapF)=Sum_k=1^mv(A_k)=v(A)-1}
Let us assume that the Path, Star, and Cycle conditions are satisfied.
Let us consider $i \in N$
and $A \subseteq B \subseteq N \setminus \lbrace i \rbrace$
with $A \cup \lbrace i \rbrace$ and $B$ in $\mathcal{F}$.
% Let us assume $A \notin \mathcal{F}$
Let $A_1, A_2, \ldots, A_p$ with $p \geq 1$ be the connected components of $A$.
Let us assume $\sigma(A_k, i) \leq \sigma(B)$ for all $k$, $1 \leq k \leq p$,
and $ \sigma(A_k)=\sigma(B)$
for all $k$ with $|A_k| \geq 2$.
Then,
each block of $\mathcal{P}_{\min}(B)$ meets at most one subset $A_k$ of $A$.
\end{lemma}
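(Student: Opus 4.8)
The plan is to argue by contradiction. Suppose some block $B_j$ of $\mathcal{P}_{\min}(B)$ meets two distinct components of $A$; after relabelling, say $A_1$ and $A_2$. (If $E(B)=\emptyset$ then $B$ is a single vertex and there is nothing to prove, so I may assume $\Sigma(B)\neq\emptyset$ and $p\geq 2$.) Since $A\cup\{i\}\in\mathcal{F}$, the vertex $i$ is adjacent to each $A_k$, so I can fix edges $e_1=\{i,b_1\}$, $e_2=\{i,b_2\}$ with $b_1\in A_1$, $b_2\in A_2$ and $w(e_c)=\sigma(A_c,i)\leq\sigma(B)$ for $c=1,2$. Among all paths of $G_{B_j}$ joining a vertex of $A_k\cap B_j$ to a vertex of $A_l\cap B_j$ for some $k\neq l$, I would pick a shortest one $\gamma$; by minimality its interior vertices all lie in $B\setminus A$, it has at least two edges (a single edge would lie in $E(A)$ and join two components of $G_A$), and all its edges have weight $>\sigma(B)$ since $\gamma$ avoids $\Sigma(B)$. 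Say $\gamma$ joins $a_1\in A_1\cap B_j$ to $a_2\in A_2\cap B_j$. Joining $b_1$ to $a_1$ by a path $\rho_1$ inside the connected set $A_1$ and $b_2$ to $a_2$ by a path $\rho_2$ inside $A_2$, the concatenation $\Pi:=\rho_1\cup\gamma\cup\rho_2$ is a simple path from $b_1$ to $b_2$ with $V(\Pi)\subseteq B$ (indeed $\rho_1\cap\gamma=\{a_1\}$, $\rho_2\cap\gamma=\{a_2\}$ and $\rho_1\cap\rho_2=\emptyset$, because $\rho_1\subseteq A_1$, $\rho_2\subseteq A_2$ and $\gamma$ meets $A$ only at $a_1,a_2$), and $C:=\{e_1\}\cup\Pi\cup\{e_2\}$ is a simple cycle with $V(C)=V(\Pi)\cup\{i\}$ and at least four edges.

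The next step is to feed $C$ into the Cycle Condition. Setting $\hat M:=\max_{e\in\hat E(C)}w(e)$, one has $\hat M>\sigma(B)$ because $\gamma$ has an edge of weight $>\sigma(B)$. Now $e_1$ and $e_2$ are two distinct edges of $C$ of weight $\leq\sigma(B)<\hat M$, and the Cycle Condition allows at most two edges of $C$ to have weight below $\hat M$; hence these are precisely $e_1,e_2$ and every edge of $\Pi$ has weight exactly $\hat M$. Moreover the common endpoint of $e_1$ and $e_2$ is $i$, so the Cycle Condition also forces every chord of $C$ not incident to $i$---in particular every edge of $G$ joining two vertices of $V(\Pi)$ which is not an edge of $\Pi$---to have weight $\hat M$.

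Now I would pick any $\tilde e=\{s,t\}\in\Sigma(B)$, so $w(\tilde e)=\sigma(B)<\hat M$. Then $\tilde e$ is not an edge of $\Pi$, and since both its endpoints lie in $B$ it is not a chord of $C$ incident to $i$; by the previous step it is therefore no chord of $C$ at all, so at most one of $s,t$ lies in $V(\Pi)$. Since $G_B$ is connected and $V(\Pi)$ spans a connected subgraph of $G_B$, I would choose a shortest path $Q$ of $G_B$ from $\{s,t\}$ to $V(\Pi)$; relabelling $s,t$ if needed, $Q$ runs from $s$ to a vertex $x^{*}\in V(\Pi)$, meets $V(\Pi)$ only at $x^{*}$, does not pass through $t$, and $t\notin V(\Pi)$ ($Q$ may be the trivial path $x^{*}=s$). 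As $b_1\neq b_2$, at least one of the two $\Pi$-subpaths $\Pi_{b_1,x^{*}}$, $\Pi_{b_2,x^{*}}$ is non-empty; say $\Pi_{b_c,x^{*}}$ with $c\in\{1,2\}$. I would then consider the simple path $R:=\{e_c\}\cup\Pi_{b_c,x^{*}}\cup Q\cup\{\tilde e\}$ from $i$ to $t$: its first edge $e_c$ has weight $\leq\sigma(B)$, its last edge $\tilde e$ has weight $\sigma(B)$, and it contains at least one interior edge (a $\Pi$-edge), of weight $\hat M$. Applying the Path Condition to $R$ with a heaviest interior edge $g$ as middle term yields $w(g)\leq\max(w(e_c),w(\tilde e))=\sigma(B)$, contradicting $w(g)\geq\hat M>\sigma(B)$. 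This contradiction would complete the proof.

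The argument is, modulo the bookkeeping, entirely driven by two moves: (i) the Cycle Condition, applied to the cycle $C$ obtained by closing up a shortest $B_j$-path between $A_1$ and $A_2$ through $i$, forces that $B_j$-path (and its short extensions inside $A_1,A_2$) to be a "constant" path of weight $\hat M$, and pins every minimum-weight edge of $B$ off the cycle $C$; and (ii) the Path Condition, applied to a path from $i$ through that constant path to a minimum-weight edge of $B$---a path whose weight sequence is $\leq\sigma(B)$ at both ends but $>\sigma(B)$ in the middle---produces the contradiction. I expect the only fiddly point to be arranging the minimality of $\gamma$ so that its interior avoids every component of $A$, which is what makes $\Pi$ and hence $C$ genuinely simple; no further structural hypothesis (Pan, Adjacent Cycles, or even Star) seems to be needed for this lemma.
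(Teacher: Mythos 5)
Your argument is correct in substance but takes a genuinely different route from the paper's. The paper first uses the Path and Star conditions to show that, for each component $A_k$, the endpoint $j\in A_k$ of a minimum-weight edge of $E(A_k,i)$ has \emph{all} of its incident edges in $E(B)$ of weight $\sigma(B)$, so that $\lbrace j\rbrace$ is a singleton block of $\mathcal{P}_{\min}(B)$; it then closes a cycle through $i$, the two attachment vertices, short paths inside the two components, and a heavy path inside the common block, and observes that this cycle carries \emph{four} non-maximum weight edges, which violates the Cycle condition outright. You instead build a cycle with exactly \emph{two} light edges meeting at $i$, use the Cycle condition positively to force the rest of the cycle and every chord not incident to $i$ up to $\hat M$, and only then reach the contradiction via the Path condition along a path from $i$ through the constant part of the cycle out to an edge of $\Sigma(B)$. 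Your version dispenses with the Star condition entirely, which is a nice economy; the paper's is shorter once the singleton-block observation is in place.

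One repair is needed. You take $\gamma$ to be a shortest path in $G_{B_j}$ and assert that it "avoids $\Sigma(B)$", but $E(B_j)$ may contain edges of $\Sigma(B)$ (a minimum-weight edge of $B$ can have both endpoints in the same block, e.g.\ in a triangle with weights $1,2,2$), so a shortest path of $G_{B_j}$ need not avoid them — and you do need at least one edge of the cycle of weight $>\sigma(B)$ to get $\hat M>\sigma(B)$, and in fact all of $\gamma$ heavy for the argument as written. The fix is to choose $\gamma$ as a shortest path in $(B_j, E(B_j)\setminus\Sigma(B))$, which is connected because $B_j$ is by definition a connected component of $(B,E(B)\setminus\Sigma(B))$; the minimality arguments (interior vertices avoid $A$, at least two edges) go through unchanged, and every edge of $\gamma$ then has weight $>\sigma(B)$ by construction. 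With this adjustment the proof is complete.
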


%%  PROOF
\begin{proof}
If $p = 1$,
then the result is trivially satisfied.
Let us assume $p \geq 2$.

Let us consider a connected component $A_k$ with $1 \leq k \leq p$.
Let $e_1 = \lbrace i, j \rbrace$
be an edge in $\Sigma(A_k,i)$.
Let $e_m$ be an edge in $\Sigma(B)$
and let $\gamma$ be a shortest path in $G_{B}$
linking $e_{m}$ to $j$
as represented in Figure~\ref{figProofLemmaImplicationPathStarConditions}.
%%  Figure
\begin{figure}[!h]
\centering
\begin{pspicture}(0,-.3)(0,1.2)
\tiny
\begin{psmatrix}[mnode=circle,colsep=0.4,rowsep=0]
{$i$} & & {}\\
& {$j$}  &   {} &  {} & {}
\psset{arrows=-, shortput=nab,labelsep={0.1}}
\tiny
\ncline{2,2}{1,3}^{$e'_1$}
\ncline{1,1}{2,2}_{$e_1$}
\ncline{2,2}{2,3}
\ncline{2,3}{2,4}
\ncline{2,4}{2,5}_{$e_{m}$}
\normalsize
\uput[0](.3,.2){\textcolor{blue}{$B$}}
\pspolygon[framearc=1,linestyle=dashed,linecolor=blue,linearc=.4]
(-2.5,-.4)(-2.5,.8)(.3,.8)(.3,-.4)
\end{psmatrix}
\end{pspicture}
\caption{$w_1 = \sigma(A_k, i) \leq \sigma(B) = w_{m}$.}
\label{figProofLemmaImplicationPathStarConditions}
\end{figure}
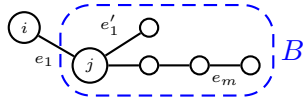
We select $e_m$ such that $\gamma$ is as short as possible.
Then,
$\gamma$ necessarily reduces to $j$,
otherwise the Path condition applied to $\lbrace e_1 \rbrace \cup \gamma \cup \lbrace e_m \rbrace$
implies $w(e) \leq \max(w_1, w_m) = \max(\sigma(A_k,i),\sigma(B)) = \sigma(B)$
and therefore $w(e) = \sigma(B)$
for any edge $e$ in $\gamma$,
contradicting the choice of~$e_m$.
If there exists an edge $e'_1 \not= e_m$ in $E(B)$
incident to $j$,
then the Star condition applied to $\lbrace e_{1}, e'_1, e_m\rbrace$
implies $w(e'_1) \leq w_m = \sigma(B)$.
We get that
any edge in $E(B)$ incident to $j$ has weight $\sigma(B)$.
In particular,
$\lbrace j \rbrace$ corresponds to a block of $\mathcal{P}_{\min}(B)$.
Thus,
if $|A_k| = 1$,
then $A_k = \lbrace j \rbrace$ is a block of $\mathcal{P}_{\min}(B)$.

Let us now consider two connected components $A_j$ and $A_k$
with $1 \leq j < k \leq p$, 
and let us assume that there is 
a block $F \in \mathcal{P}_{\min}(B)$
such that $F \cap A_j \not= \emptyset$ and $F \cap A_k \not= \emptyset$.
Let $e_1 = \lbrace i, j_1 \rbrace$
(resp. $e_2 = \lbrace i, j_2 \rbrace$)
be an edge in $\Sigma(A_j,i)$
(resp. $\Sigma(A_k,i)$).
By the previous reasoning,
we necessarily have $|A_j| \geq 2$ and $|A_k| \geq 2$.
Moreover,
any edge in $E(B)$
incident to $j_1$
(resp. $j_2$)
has weight $\sigma(B)$
and $\lbrace j_1 \rbrace$
(resp. $\lbrace j_2 \rbrace$)
is a block of $\mathcal{P}_{\min}(B)$
distinct from $F$.
Let $k_1$
(resp. $k_2$)
be a vertex of $A_j \cap F$
(resp. $A_k \cap F$)
and let $\gamma_1$
(resp. $\gamma_2$)
be a shortest path 
linking $j_1$ to $k_1$ in $G_{A_j}$
(resp. $j_2$ to $k_2$ in $G_{A_k}$).
We select $k_1$ and $k_2$
such that $\gamma_1$ and $\gamma_2$ are as short as possible.
As $F \in \mathcal{P}_{\min}(B)$
and as $\lbrace j_1 \rbrace$ and $\lbrace j_2 \rbrace$
are also blocks of $\mathcal{P}_{\min}(B)$,
$\gamma_1$
(resp. $\gamma_2$)
contains at least one edge.
Let $\tilde{e}_1$
(resp. $\tilde{e}_2$)
be the edge of $\gamma_1$
(resp. $\gamma_2$)
incident to $j_1$
(resp. $j_2$).
As $F \in \mathcal{P}_{\min}(B)$,
there exists a path $\tilde{\gamma}$ in $G_F$
from $k_1$ to $k_2$
with $w(e) > \sigma(B)$ for each edge $e$ in $\tilde{\gamma}$.
We select $\tilde{\gamma}$ as short as possible.
We obtain a simple cycle
$C_m = \lbrace e_1 \rbrace \cup \gamma_1 \cup \tilde \gamma \cup \gamma_2 \cup \lbrace e_2 \rbrace$
as represented in Figure~\ref{figcycleu1<u2andu1<u3=um-1=M-2-2}.
%%  Figure
\begin{figure}[!h]
\centering
\begin{pspicture}(0,-.3)(0,2.2)
\tiny
\begin{psmatrix}[mnode=circle,colsep=0.6,rowsep=-.05]
	& {$j_1$}	& {$k_1$}  	&  {}\\
	&		&	&	&  {} \\
{$i$}\\
	& 		&	&  	&{}\\
	& {$j_2$} 	& {}	&  {$k_2$}
\psset{arrows=-, shortput=nab,labelsep={0.02}}
\tiny
%%  Cycle
\ncline{3,1}{1,2}^{$e_{1}$}
\ncline{3,1}{5,2}_{$e_{2}$}
\ncline{1,2}{1,3}^{$\tilde{e}_1$}
\ncline{1,3}{1,4}
\ncline{5,2}{5,3}_{$\tilde{e}_2$}
\ncline{5,3}{5,4}
\ncline{1,4}{2,5}
\ncline{5,4}{4,5}
\ncline{2,5}{4,5}
\normalsize
\uput[0](-2.2,.9){\textcolor{blue}{$A_{j}$}}
\pspolygon[framearc=1,linestyle=dashed,linecolor=blue,linearc=.4]
(-2.9,1.2)(-2.9,2)(-.8,2)(-.8,1.2)
\uput[0](-.8,.8){\textcolor{blue}{$A_{k}$}}
\pspolygon[framearc=1,linestyle=dashed,linecolor=blue,linearc=.4]
(-2.9,-.3)(-2.9,.5)(.3,.5)(.3,-.3)
\uput[0](1.7,.9){\textcolor{blue}{$F$}}
\pspolygon[framearc=1,linestyle=dashed,linecolor=blue,linearc=.4]
(-1.7,-.4)(-1.7,.8)(-1.7,2.1)(.4,2.1)(1.6,1.4)(1.6,.2)(.4,-.4)
\end{psmatrix}
\end{pspicture}
\caption{$C_m = \lbrace e_1 \rbrace \cup \gamma_j \cup \tilde \gamma \cup \gamma_k \cup \lbrace e_2 \rbrace$.}
\label{figcycleu1<u2andu1<u3=um-1=M-2-2}
\end{figure}
Then,
$e_1$, $e_2$, $\tilde{e}_1$, and $\tilde{e}_2$
are four edges with weights strictly smaller than the edges of $\tilde{\gamma}$,
contradicting the Cycle condition.
Therefore,
each block $F \in \mathcal{P}_{\min}(B)$
meets at most one subset $A_k$ of $A$.
\end{proof}

%%  Lemma
\begin{lemma}
\label{lemmaPathStarCyclePanSum_FInP_Min(B)v(ACapF)=Sum_k=1^mv(A_k)=v(A)}
Let us assume that the Path, Star, Cycle, Pan, Reinforced pan,
and Reinforced cycle conditions are satisfied.
Let us consider $i \in N$
and $A \subseteq B \subseteq N \setminus \lbrace i \rbrace$
with $A \cup \lbrace i \rbrace$ and $B$ in $\mathcal{F}$.
Let $A_1, A_2, \ldots, A_p$ with $p \geq 2$ be the connected components of $A$.
Let us assume that one of the following conditions is satisfied
\begin{equation}
\label{eqSigma(Ak,i)=M<Sigma(B),Forallk,1<=k<=p}
\sigma(A_k,i) = M < \sigma(B),\, \forall k,\, 1 \leq k \leq p,
\end{equation}
or
\begin{equation}
\label{eqSigma(A1,i)<Sigma(B)AndSigma(A1,i)<Sigma(Ak,i)=M,Forallk,2<=k<=p}
\sigma(A_1,i) < \sigma(B) \textrm{ and } \sigma(A_1,i) < \sigma(A_k,i) = M, \, \forall k,\,  2 \leq k \leq p.
\end{equation}
Then,
for every convex game $(N,v)$, we have
\begin{equation}
\label{maxlbracksigma(A1,i)2}
\sum_{F \in \mathcal {P}_{\min}(B)_{|A}}\; v(F)=
\sum_{k=1}^{p}\; \overline{v}(A_k).
\end{equation}
\end{lemma}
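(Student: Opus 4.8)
The plan is to reduce~(\ref{maxlbracksigma(A1,i)2}) to the purely combinatorial statement that, under the stated hypotheses, $\mathcal{P}_{\min}(B)_{|A}=\bigcup_{k=1}^{p}\mathcal{P}_{\min}(A_k)$, that is, the restriction of $\mathcal{P}_{\min}(B)$ to $A$ is the partition of $A$ whose blocks are exactly the blocks of all the $\mathcal{P}_{\min}(A_k)$. Once this is established, summing $v$ over the blocks gives $\sum_{F\in\mathcal{P}_{\min}(B)_{|A}}v(F)=\sum_{k=1}^{p}\sum_{F\in\mathcal{P}_{\min}(A_k)}v(F)=\sum_{k=1}^{p}\overline{v}(A_k)$, which is the desired identity. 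I would split this statement into two claims: $(i)$ every block of $\mathcal{P}_{\min}(B)$ meets at most one component $A_k$, so that $\mathcal{P}_{\min}(B)_{|A}=\bigcup_{k=1}^{p}\mathcal{P}_{\min}(B)_{|A_k}$; and $(ii)$ $\mathcal{P}_{\min}(B)_{|A_k}=\mathcal{P}_{\min}(A_k)$ for every $k$.

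First I would use the rigidity of the minimum weights. Since $A\cup\{i\}$ is connected, $E(A_k,i)\neq\emptyset$ for every $k$, so Lemma~\ref{lemminsAi>=sA>=sBorsA=sB>sAi} applies to each $A_k$ with $|A_k|\geq 2$. Under~(\ref{eqSigma(Ak,i)=M<Sigma(B),Forallk,1<=k<=p}) we have $\sigma(A_k,i)=M<\sigma(B)$, so the first alternative of that lemma is impossible and $\sigma(A_k)=\sigma(B)$ for every such $k$, while an edge of $\Sigma(A_k,i)$ has weight $M<\sigma(B)$ and is connected to~$B$. Under~(\ref{eqSigma(A1,i)<Sigma(B)AndSigma(A1,i)<Sigma(Ak,i)=M,Forallk,2<=k<=p}) the same argument applied to $A_1$ gives $\sigma(A_1)=\sigma(B)$ and an edge $e_1\in\Sigma(A_1,i)$ of weight $\sigma(A_1,i)<\sigma(B)$ connected to~$B$, and if moreover $M\leq\sigma(B)$ it also gives $\sigma(A_k)=\sigma(B)$ for every $k\geq 2$ with $|A_k|\geq 2$. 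In each of these configurations the hypotheses of Lemma~\ref{lemPminvB-vB>=vA-vA2} are met for the relevant $A_k$ (the Pan condition; $\sigma(A_k)=\sigma(B)$; an edge of weight $<\sigma(B)$ connected to~$B$), which gives $\mathcal{P}_{\min}(A_k)=\mathcal{P}_{\min}(B)_{|A_k}$; for singleton components $(ii)$ is trivial. Hence $(ii)$ holds except, possibly, for the components $A_k$ with $k\geq 2$ in the regime~(\ref{eqSigma(A1,i)<Sigma(B)AndSigma(A1,i)<Sigma(Ak,i)=M,Forallk,2<=k<=p}) with $M>\sigma(B)$, which I treat next.

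For $(i)$, and for those outstanding instances of $(ii)$, I would distinguish two cases. If~(\ref{eqSigma(Ak,i)=M<Sigma(B),Forallk,1<=k<=p}) holds, or if~(\ref{eqSigma(A1,i)<Sigma(B)AndSigma(A1,i)<Sigma(Ak,i)=M,Forallk,2<=k<=p}) holds with $M\leq\sigma(B)$, then by the previous paragraph $\sigma(A_k,i)\leq\sigma(B)$ for all $k$ and $\sigma(A_k)=\sigma(B)$ for all $k$ with $|A_k|\geq 2$, so Lemma~\ref{lemmaPathStarCyclePanSum_FInP_Min(B)v(ACapF)=Sum_k=1^mv(A_k)=v(A)-1} applies and gives $(i)$ directly. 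If instead~(\ref{eqSigma(A1,i)<Sigma(B)AndSigma(A1,i)<Sigma(Ak,i)=M,Forallk,2<=k<=p}) holds with $M>\sigma(B)$, Lemma~\ref{lemmaPathStarCyclePanSum_FInP_Min(B)v(ACapF)=Sum_k=1^mv(A_k)=v(A)-1} is not available because the outlying components satisfy $\sigma(A_k,i)=M>\sigma(B)$; here I would apply Proposition~\ref{Propositionpartitionin2set1} to the pair $A_1,A_k$ for each $k\in\{2,\ldots,p\}$ (legitimate since $\sigma(A_1,i)<\sigma(A_k,i)$ and $\sigma(B)<\sigma(A_k,i)$): Claim~\ref{itemPropsIf|A1|=1AndIfSigma(B)<M,Then|A2|=1a} gives $|A_k|=1$, which settles $(ii)$ for these components; Claim~\ref{itemPropA1AndA2AreInDistinctBlocksOfPmin(B)} gives that every block of $\mathcal{P}_{\min}(A_1)=\mathcal{P}_{\min}(B)_{|A_1}$ lies in a block of $\mathcal{P}_{\min}(B)$ disjoint from $A_k$; and Claim~\ref{itemPropA1A2AndA3AreAllInDistinctBlocksOfPmin(B)}, applied to each triple $A_1,A_j,A_k$ with distinct $j,k$ in $\{2,\ldots,p\}$, gives that the singletons $A_k$ lie in pairwise distinct blocks of $\mathcal{P}_{\min}(B)$. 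These three facts together give $(i)$.

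Combining $(i)$ and $(ii)$ yields $\mathcal{P}_{\min}(B)_{|A}=\bigcup_{k=1}^{p}\mathcal{P}_{\min}(B)_{|A_k}=\bigcup_{k=1}^{p}\mathcal{P}_{\min}(A_k)$, and summing $v$ over the blocks gives~(\ref{maxlbracksigma(A1,i)2}). The step I expect to be the main obstacle is the last case, (\ref{eqSigma(A1,i)<Sigma(B)AndSigma(A1,i)<Sigma(Ak,i)=M,Forallk,2<=k<=p}) with $M>\sigma(B)$: one must exclude that a single block of $\mathcal{P}_{\min}(B)$ absorbs two of the outlying one-vertex components, or a vertex of $A_1$ together with one of them, and this is exactly the delicate structural information packaged in Proposition~\ref{Propositionpartitionin2set1}.
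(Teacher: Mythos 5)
Your proposal is correct and follows essentially the same route as the paper's proof: the same case split (condition~(\ref{eqSigma(Ak,i)=M<Sigma(B),Forallk,1<=k<=p}), or condition~(\ref{eqSigma(A1,i)<Sigma(B)AndSigma(A1,i)<Sigma(Ak,i)=M,Forallk,2<=k<=p}) with $M\leq\sigma(B)$ versus $\sigma(B)<M$) and the same ingredients (Lemmas~\ref{lemminsAi>=sA>=sBorsA=sB>sAi}, \ref{lemPminvB-vB>=vA-vA2}, \ref{lemmaPathStarCyclePanSum_FInP_Min(B)v(ACapF)=Sum_k=1^mv(A_k)=v(A)-1}, and Proposition~\ref{Propositionpartitionin2set1}), merely reorganized around the identity $\mathcal{P}_{\min}(B)_{|A}=\bigcup_{k=1}^{p}\mathcal{P}_{\min}(A_k)$. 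One remark: exactly as in the paper's own argument, your appeal to Claim~\ref{itemPropA1A2AndA3AreAllInDistinctBlocksOfPmin(B)} when $p\geq 3$ tacitly invokes the Reinforced adjacent cycles condition, which is not among the lemma's stated hypotheses (though it does hold wherever the lemma is applied).
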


%%  PROOF
\begin{proof}
Let us first assume (\ref{eqSigma(Ak,i)=M<Sigma(B),Forallk,1<=k<=p}) satisfied.
Then,
Lemma~\ref{lemminsAi>=sA>=sBorsA=sB>sAi} implies
\begin{equation}
\label{eqsigma(A1,i)<Sigma(Ak)=Sigma(B)}
\sigma(A_k,i) < \sigma(A_k) = \sigma(B),\, \forall k,\, 1 \leq k \leq p,\, \textrm{ s.t. } |A_k| \geq 2.
\end{equation}
By~(\ref{eqSigma(Ak,i)=M<Sigma(B),Forallk,1<=k<=p}),
(\ref{eqsigma(A1,i)<Sigma(Ak)=Sigma(B)}),
and Lemma~\ref{lemmaPathStarCyclePanSum_FInP_Min(B)v(ACapF)=Sum_k=1^mv(A_k)=v(A)-1},
we get
\begin{equation}
\label{eqPmin(B)|A=Pmin(B)|A1,Pmin(B)|A2,Pmin(B)|Ap}
\mathcal{P}_{\min}(B)_{|A}=
\lbrace \mathcal{P}_{\min}(B)_{|A_1}, \mathcal{P}_{\min}(B)_{|A_2}, \ldots, \mathcal{P}_{\min}(B)_{|A_p}\rbrace.
\end{equation}
Let us consider a subset $A_k$ with $1 \leq k \leq p$.
As $\sigma(A_1,i) < \sigma(B)$,
there exists an edge $e_0$ in $E(A_1,i)$ connected to $B$
with weight $w(e_0) < \sigma(B)$.
If $|A_k| \geq 2$,
then Lemma~\ref{lemPminvB-vB>=vA-vA2} applied to $A_k \subseteq B$
implies
$\mathcal{P}_{\min}(B)_{|A_k} = \mathcal{P}_{\min}(A_k)$.
If $|A_k| = 1$,
then we obviously have $\mathcal{P}_{\min}(B)_{|A_k} = \mathcal{P}_{\min}(A_k)$.
Therefore,
(\ref{eqPmin(B)|A=Pmin(B)|A1,Pmin(B)|A2,Pmin(B)|Ap}) implies (\ref{maxlbracksigma(A1,i)2}).

Let us now assume (\ref{eqSigma(A1,i)<Sigma(B)AndSigma(A1,i)<Sigma(Ak,i)=M,Forallk,2<=k<=p}) satisfied.
As $\sigma(A_1,i) < \sigma(B)$,
the previous reasoning applied to $A_1 \subseteq B$ gives
\begin{equation}
\label{eqSumFInPMin(B)|A1v(F)=Overlinev(A1)-1}
\sum_{F \in \mathcal{P}_{\min}(B)_{|A_1}} v(F) = \overline{v}(A_1).
\end{equation}

Let us assume $\sigma(B) < M$.
By~(\ref{eqSigma(A1,i)<Sigma(B)AndSigma(A1,i)<Sigma(Ak,i)=M,Forallk,2<=k<=p}),
we get
\begin{equation}
\label{eqSigma(A_1,i)<Sigma(B1)<Sigma(Ak,i),Forallk,2<=k<=p-1}
\sigma(A_1,i) < \sigma(B) < M = \sigma(A_k,i),\, \forall k,\, 2 \leq k \leq p.
\end{equation}
Let $K_A$ be the set of indices $k \in \lbrace 2, \ldots, p \rbrace$.
By~(\ref{eqSigma(A_1,i)<Sigma(B1)<Sigma(Ak,i),Forallk,2<=k<=p-1}),
Claims~\ref{itemPropsIf|A1|=1AndIfSigma(B)<M,Then|A2|=1a},
\ref{itemPropA1AndA2AreInDistinctBlocksOfPmin(B)},
and \ref{itemPropA1A2AndA3AreAllInDistinctBlocksOfPmin(B)} of Proposition~\ref{Propositionpartitionin2set1} 
applied to $A_1 \cup \bigcup_{k \in K_A} A_k$ and $B$ imply
\begin{equation}
\label{eq|Ak|=1ForAllkInKA-1}
|A_k| = 1 \textrm{ for all } k \in K_A,
\end{equation}
and each subset $A_k$ is included in a distinct block of $\mathcal{P}_{\min}(B)$.
Moreover,
the blocks of $\mathcal{P}_{\min}(A_1)$ belong to blocks of $\mathcal{P}_{\min}(B)$
different from the ones containing the subsets $A_k$ with $k \in K_A$.
In particular,
this implies that $A_1$ has always an empty intersection with 
the block of $\mathcal{P}_{\min}(B)$ containing a subset $A_k$ with $k \in K_A$.
We get
\begin{equation}
\label{eqProofSumFInPMin(Bl)TildeA1v(F)=SumFInPMin(B1)A1v(F)+Sumk=2pv(Ak)-1}
\sum_{F \in \mathcal{P}_{\min}(B)_{|A}} v(F)=
\sum_{F \in \mathcal{P}_{\min}(B)_{|A_1}} v(F) +
\sum_{k \in K_A} v(A_k).
\end{equation}
By~(\ref{eq|Ak|=1ForAllkInKA-1}),
we have $v(A_k) = \overline{v}(A_k)$ for all $k \in K_A$.
Then,
(\ref{eqSumFInPMin(B)|A1v(F)=Overlinev(A1)-1})
and~(\ref{eqProofSumFInPMin(Bl)TildeA1v(F)=SumFInPMin(B1)A1v(F)+Sumk=2pv(Ak)-1}) give~(\ref{maxlbracksigma(A1,i)2}).

Let us now assume $M \leq \sigma(B)$.
By~(\ref{eqSigma(A1,i)<Sigma(B)AndSigma(A1,i)<Sigma(Ak,i)=M,Forallk,2<=k<=p}),
we get
\begin{equation}
\label{eqSigma(A1,i)<Sigma(Ak,i)<=Sigma(B1),Forallk,2<=k<=p-1}
\sigma(A_1,i) < \sigma(A_k,i) \leq M \leq \sigma(B),\, \forall k,\, 2 \leq k \leq p.
\end{equation}
(\ref{eqSigma(A1,i)<Sigma(Ak,i)<=Sigma(B1),Forallk,2<=k<=p-1})
and Lemma~\ref{lemminsAi>=sA>=sBorsA=sB>sAi} imply
\begin{equation}
\label{eqSigma(Ak,i)<=Sigma(Ak)=Sigma(B1),ForallAkSubseteqB1With|Ak|>=2-1}
\sigma(A_k,i) \leq \sigma(A_k) = \sigma(B),\, \forall k, \, 1 \leq k \leq p, \textrm{ s.t. } |A_k| \geq 2.
\end{equation}
Then,
(\ref{eqSigma(A1,i)<Sigma(Ak,i)<=Sigma(B1),Forallk,2<=k<=p-1}),
(\ref{eqSigma(Ak,i)<=Sigma(Ak)=Sigma(B1),ForallAkSubseteqB1With|Ak|>=2-1}),
and Lemma~\ref{lemmaPathStarCyclePanSum_FInP_Min(B)v(ACapF)=Sum_k=1^mv(A_k)=v(A)-1}
applied to $\bigcup_{k=1}^{p} A_k$ and $B$
imply (\ref{eqPmin(B)|A=Pmin(B)|A1,Pmin(B)|A2,Pmin(B)|Ap})
and we can conclude as in the first case.
\end{proof}

For any $i \in N$ and for any subset $A \subseteq N \setminus \lbrace i \rbrace$,
we define
$M(A,i)=\max_{e \in E(A,i)}\; w(e)$.
\begin{lemma}
\label{Ifsigma(B)<igma(B1,thenforall4}
Let us assume that the Star and Path conditions are satisfied.
Let us consider $i \in N$
and $A \subseteq N \setminus \lbrace i \rbrace$
with $A \notin \mathcal{F}$ but $A \cup \lbrace i \rbrace$ in $\mathcal{F}$.
Let $A_1, A_2,\ldots, A_p$ be the connected components of $A$.
We assume
\begin{equation}
\label{eqLemmaSigma(B,i)=Sigma(B1,i)<=Sigma(B2,i)=Sigma(Bq,i)=M(B,i)}
\sigma(A,i) = \sigma(A_1,i) \leq \sigma(A_2,i) = \ldots = \sigma(A_p,i) = M(A,i).
\end{equation}
\begin{enumerate}
\item
\label{itemLemmaIfSigma(B,i)<M(B,i)ThenM(B,i)<=Sigma(B2)<=Sigma(B3)}
If $\sigma(A,i) < M(A,i)$,
then we have
\begin{equation}
\label{eqLemmaM<=sigma(B2)<=sigma(B3)<=sigma(Bq)}
M(A,i) \leq \sigma(A_k),\, \forall k,\, 2 \leq k \leq p, \textrm{ s.t. } |A_k| \geq 2.
\end{equation}
\item
\label{itemLemmaIfSigma(B,i)=M(B,i)And|B1|>=2,Thenmax(Sigma(B1),M(B,i))<=Sigma(B2)}
If $\sigma(A,i) = M(A,i)$ and if $|A_k| \geq 2$ for at least one index $k$ with $1 \leq k \leq p$,
then we have
\begin{equation}
\label{eqLemmaM<=sigma(B2)<=sigma(B3)<=sigma(Bq)-2}
\max(\sigma(A_1),M(A,i)) \leq \sigma(A_k),\, \forall k,\, 2 \leq k \leq p, \textrm{ s.t. } |A_k| \geq 2,
\end{equation}
after renumbering if necessary.
\end{enumerate}
\end{lemma}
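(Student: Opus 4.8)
The plan is to establish both parts by elementary path constructions invoking the Path condition. I would begin with a preliminary remark: since $A\cup\{i\}\in\mathcal F$ but $A\notin\mathcal F$, each component $A_k$ must be joined to $i$, so $E(A_k,i)\neq\emptyset$ for every $k$; moreover, hypothesis~(\ref{eqLemmaSigma(B,i)=Sigma(B1,i)<=Sigma(B2,i)=Sigma(Bq,i)=M(B,i)}) gives $\sigma(A_k,i)=M(A,i)=\max_{e\in E(A,i)}w(e)$ for $k\geq2$, and since every edge of $E(A_k,i)$ has weight at most $M(A,i)$, in fact every edge of $E(A_k,i)$ has weight exactly $M(A,i)$ whenever $k\geq2$.

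For part~(\ref{itemLemmaIfSigma(B,i)<M(B,i)ThenM(B,i)<=Sigma(B2)<=Sigma(B3)}), assume $\sigma(A,i)<M(A,i)$, fix $k\geq2$ with $|A_k|\geq2$, and let $\tilde e=\{s,t\}$ be an arbitrary edge of $E(A_k)$. I would pick $e_1=\{i,j_1\}\in\Sigma(A_1,i)$, so $w(e_1)=\sigma(A_1,i)=\sigma(A,i)$, and any $e_k=\{i,j_k\}\in E(A_k,i)$, so $w(e_k)=M(A,i)$ by the remark above. Concatenating $e_1$, $e_k$, a shortest path in $G_{A_k}$ from $j_k$ to an endpoint of $\tilde e$, and $\tilde e$ itself yields a simple path of $G$ with at least three edges, whose first edge is $e_1$, second edge is $e_k$, and last edge is $\tilde e$. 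The Path condition then forces $M(A,i)=w(e_k)\leq\max\bigl(w(e_1),w(\tilde e)\bigr)=\max\bigl(\sigma(A,i),w(\tilde e)\bigr)$, and since $\sigma(A,i)<M(A,i)$ this gives $w(\tilde e)\geq M(A,i)$. Letting $\tilde e$ range over $E(A_k)$ yields $\sigma(A_k)\geq M(A,i)$, i.e.~(\ref{eqLemmaM<=sigma(B2)<=sigma(B3)<=sigma(Bq)}).

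For part~(\ref{itemLemmaIfSigma(B,i)=M(B,i)And|B1|>=2,Thenmax(Sigma(B1),M(B,i))<=Sigma(B2)}), set $W:=M(A,i)$ and note that $\sigma(A,i)=W$ together with (\ref{eqLemmaSigma(B,i)=Sigma(B1,i)<=Sigma(B2,i)=Sigma(Bq,i)=M(B,i)}) forces $\sigma(A_k,i)=W$ for \emph{every} $k$, so all edges of $E(A,i)$ have weight $W$ and we are free to renumber the components. The key step is to show that at most one component of size $\geq2$ can contain an edge of weight $<W$: if both $A_j$ and $A_l$ ($j\neq l$, $|A_j|,|A_l|\geq2$) contained such edges $\tilde e_j$ and $\tilde e_l$, I would form a simple path by concatenating $\tilde e_j$, a shortest path in $G_{A_j}$ from an endpoint of $\tilde e_j$ to a vertex $x_j\in A_j$ adjacent to $i$, the edge $e_j=\{i,x_j\}$, the edge $e_l=\{i,x_l\}$ with $x_l\in A_l$ adjacent to $i$, a shortest path in $G_{A_l}$ from $x_l$ to an endpoint of $\tilde e_l$, and $\tilde e_l$; in this path $\tilde e_j$ and $\tilde e_l$ are the first and last edges while $e_j,e_l$ (both of weight $W$) occur consecutively at two interior positions, so the Path condition gives $W\leq\max\bigl(w(\tilde e_j),w(\tilde e_l)\bigr)<W$, a contradiction. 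Consequently I can renumber so that $A_1$ is the unique size-$\geq2$ component with $\sigma(A_1)<W$ if one exists, and otherwise choose for $A_1$ a size-$\geq2$ component minimizing $\sigma(\cdot)$; in either case, for every $k\geq2$ with $|A_k|\geq2$ one has $\sigma(A_k)\geq W$ and $\sigma(A_k)\geq\sigma(A_1)$, hence $\max\bigl(\sigma(A_1),W\bigr)\leq\sigma(A_k)$, which is~(\ref{eqLemmaM<=sigma(B2)<=sigma(B3)<=sigma(Bq)-2}).

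The only real difficulty is bookkeeping: I must make sure each concatenated walk is an honest simple path of $G$ (so that the Path condition, stated for paths, applies) and that the two distinguished edges occupy the positions the Path condition needs, namely positions $1$ and $2$ in part~(\ref{itemLemmaIfSigma(B,i)<M(B,i)ThenM(B,i)<=Sigma(B2)<=Sigma(B3)}), and two adjacent interior positions in part~(\ref{itemLemmaIfSigma(B,i)=M(B,i)And|B1|>=2,Thenmax(Sigma(B1),M(B,i))<=Sigma(B2)}), among the $m\geq3$ edges. Choosing the interior portions to be \emph{shortest} paths inside $G_{A_k}$ (resp.\ $G_{A_j}$ and $G_{A_l}$) prevents repeated vertices and guarantees length at least $3$; the positions of the distinguished edges are then immediate from the construction.
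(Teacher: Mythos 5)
Your proof is correct and follows essentially the same route as the paper's: both parts rest on applying the Path condition to a simple path through $i$ that concatenates a minimum-weight edge of $E(A,i)$ (or of one component) with edges reaching into another component $A_k$. The only cosmetic differences are that you phrase part~2 as a contradiction (at most one component can contain an edge of weight below $M(A,i)$) where the paper derives $M(A,i)\leq\max(\sigma(A_1),\sigma(A_k))$ directly after renumbering, and that you bypass the paper's appeal to the Star condition in part~1, since the hypothesis $\sigma(A_k,i)=M(A,i)$ already forces every edge of $E(A_k,i)$, $k\geq 2$, to have weight exactly $M(A,i)$.
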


\begin{proof}
Let us first assume $\sigma(A,i) < M(A,i)$.
By the Star condition,
there is a unique edge $e_0$ in $E(A,i)$
with $w(e_0) = \sigma(A,i)$
and $w(e) = M(A,i)$
for all $e \in E(A,i) \setminus \lbrace e_0 \rbrace$.
By~(\ref{eqLemmaSigma(B,i)=Sigma(B1,i)<=Sigma(B2,i)=Sigma(Bq,i)=M(B,i)}),
we have $e_0 \in E(A_1,i)$.
Let us consider a subset $A_k$ with $2 \leq k \leq p$ and $|A_k| \geq 2$.
Let $\tilde e = \lbrace i, j_k \rbrace$ be an edge in $E(A_k,i)$,
and let $e_k$ be an edge in $\Sigma(A_k)$.
Let $\gamma_k$ be a path in $G_{A_k}$ linking $e_k$ to $j_k$
as represented in Figure~\ref{figcycleu1<u2andu1<u3=um-1=M-2-2-3}
($\gamma_k$ may be reduced to $j_k$).
%%  Figure
\begin{figure}[!h]
\centering
\begin{pspicture}(0,-.3)(0,1.8)
\tiny
\begin{psmatrix}[mnode=circle,colsep=0.55,rowsep=0.1]
	& {$j_1$} \\
{$i$}\\
	& {$j_k$} & {} & {}
\psset{arrows=-, shortput=nab,labelsep={0.05}}
\tiny
%%  Cycle
\ncline{2,1}{1,2}_{$e_{0}$}^{$\sigma(A,i)$}
% \ncline[linestyle=dotted]{1,2}{1,3}
% \ncline{1,3}{1,4}^{$e_1$}_{$\sigma(A_1)$}
\ncline{2,1}{3,2}^{$\tilde{e}$}_{$M(A,i)$}
\ncline[linestyle=dotted]{3,2}{3,3}
\ncline{3,3}{3,4}^{$e_k$}_{$\sigma(A_k)$}
\normalsize
\uput[0](.4,1.3){\textcolor{blue}{$A_{1}$}}
\pspolygon[framearc=1,linestyle=dashed,linecolor=blue,linearc=.4]
(-2.2,.9)(-2.2,1.7)(.3,1.7)(.3,.9)
\uput[0](.4,0){\textcolor{blue}{$A_{k}$}}
\pspolygon[framearc=1,linestyle=dashed,linecolor=blue,linearc=.4]
(-2.2,-.4)(-2.2,.4)(.3,.4)(.3,-.4)
\end{psmatrix}
\end{pspicture}
\caption{Path induced by $e_0 \cup \tilde{e} \cup \gamma_k \cup e_k$.}
\label{figcycleu1<u2andu1<u3=um-1=M-2-2-3}
\end{figure}
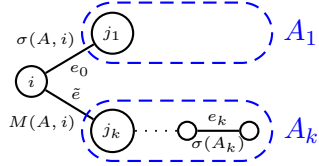
The Path condition applied to $e_0 \cup \tilde{e} \cup \gamma_k \cup e_k$ implies
$w(\tilde{e}) \leq \max(w(e_0),w(e_k))$.
As $w(e_0) = \sigma(A,i) < M(A,i) = w(\tilde{e})$
and as $w(e_k) = \sigma(A_k)$,
we get 
$M(A,i) \leq \sigma(A_k)$.
This implies (\ref{eqLemmaM<=sigma(B2)<=sigma(B3)<=sigma(Bq)}).

Let us now assume $\sigma(A,i) = M(A,i)$.
By~(\ref{eqLemmaSigma(B,i)=Sigma(B1,i)<=Sigma(B2,i)=Sigma(Bq,i)=M(B,i)}),
we get $\sigma(A_k,i) = M(A,i)$ for all~$k$, $1 \leq k \leq p$.
We can assume $\sigma(A_1) \leq \sigma(A_k)$ for all $k$ with $|A_k| \geq 2$,
after renumbering if necessary.
Let us consider a subset $A_k$ with $2 \leq k \leq p$ and $|A_k| \geq 2$.
Let $e_1$ be an edge in $\Sigma(A_1)$
and let $e_k$ be an edge in $\Sigma(A_k)$.
There is a path linking $e_1$ to $e_k$ and passing through $i$.
As the edges incident to $i$ have weight $M(A,i)$,
the Path condition implies $M(A,i) \leq \max(\sigma(A_1), \sigma(A_k))$
and (\ref{eqLemmaM<=sigma(B2)<=sigma(B3)<=sigma(Bq)-2}) is satisfied.
\end{proof}

\begin{proof}[Proof of Theorem~\ref{thADDLG=NEbacfgaluEbawfvhat}]
Let us consider a convex game $(N,v)$,
$i \in N$ and subsets $A \subseteq B \subseteq N \setminus \lbrace i \rbrace$.
We have to prove that the following inequality is satisfied:
\begin{equation}
\label{maxlbracksigma(A1,i)10hat}
\Delta_i \hat{v}(B) \geq
\Delta_i \hat{v}(A).
\end{equation}

We first show that we can w.l.o.g. make several restrictions on the sets $A$ and $B$.
By Corollary~\ref{corIGNEiafttsihfv} and Theorem~\ref{thPathBranchCyclePanAdjacentCond},
$(N, \overline{v})$ and $(N, \hat{v})$ are superadditive and $\mathcal{F}$-convex.
Let $A_1, A_2, \ldots, A_p$
(resp. $B_{1}, B_{2}, \ldots, B_{q}$)
with $p \geq 1$
(resp. $q \geq 1$)
be the connected components of 
$A$
(resp. $B$).
Let $J_A$
(resp. $\overline{J_A}$)
be the set of indices $j \in \lbrace 1, \ldots, p \rbrace$
such that $A_j$ is linked 
(resp. non-linked)
to~$i$ by an edge.
If $J_A = \emptyset$,
then $\Delta_{i} \hat{v}(A) = \hat{v}(\lbrace i \rbrace)$
and (\ref{maxlbracksigma(A1,i)10hat}) is satisfied by the superadditivity of $(N, \hat{v})$.
If $J_A \not= \emptyset$,
then setting $A' = \bigcup_{j \in J_A} A_{j}$,
we get
\begin{equation}
\label{maxlbracksigma(A1,i)4}
\Delta_i \hat{v}(A)=
\overline{v}(A' \cup \lbrace i \rbrace) - \sum_{j \in J_A} \overline{v}(A_{j})=
\hat{v}(A' \cup \lbrace i \rbrace) - \hat{v}(A')=
\Delta_i \hat{v}(A').
\end{equation}
Therefore we can substitute $A$ for $A'$
and assume $A \cup \lbrace i \rbrace$ connected.
Similarly,
we can assume $B \cup \lbrace i \rbrace$ connected.
Moreover,
we can assume $p \geq 2$,
otherwise the superadditivity and $\mathcal{F}$-convexity
of $(N, \hat{v})$
implies (\ref{maxlbracksigma(A1,i)10hat}).
We can also w.l.o.g. assume
\begin{equation}
\label{eqACapB_jNeqEmptyset,Forallj,1<=j<=q}
A \cap B_j \neq \emptyset,\;
\forall j,\, 1 \leq j \leq q.
\end{equation}
Indeed,
let us assume
$A \cap B_j \neq \emptyset$
for all $j$,
$1 \leq j \leq m$
and $A \cap B_j = \emptyset$
for all~$j$,
$m+1 \leq j \leq q$
for some index $m$ with $1 \leq m \leq q$.
Setting $B'= \bigcup_{j=1}^{m} B_j$
and $B''= \bigcup_{j=m+1}^{q} B_{j}$,
we have
\begin{equation}
\label{maxlbracksigma(A1,i)7}
\Delta_i \hat{v}(B) =
\overline{v}(B' \cup B'' \cup \lbrace i \rbrace) - \sum_{j=1}^{q} \overline{v}(B_{j}).
\end{equation}
By the superadditivity of $(N,\overline{v})$,
we have
$\overline{v}(B' \cup B'' \cup \lbrace i \rbrace) \geq
\overline{v}(B' \cup \lbrace i \rbrace) + \sum_{j=m+1}^{q} \overline{v}(B_{j})$.
Then (\ref{maxlbracksigma(A1,i)7}) implies
\begin{equation}
\label{maxlbracksigma(A1,i)9}
\Delta_i \hat{v}(B)
\geq \overline{v}(B' \cup \lbrace i \rbrace)- \sum_{j=1}^{m} \overline{v}(B_{j})=
\hat{v}(B' \cup \lbrace i \rbrace) - \hat{v}(B') = \Delta_i \hat{v}(B').
\end{equation}
By (\ref{maxlbracksigma(A1,i)9})
and as $A \subseteq B'$,
(\ref{maxlbracksigma(A1,i)10hat})
is satisfied if the following inequality is satisfied
\begin{equation}
\label{maxlbracksigma(A1,i)10}
\Delta_i \hat{v}(B')
\geq
 \Delta_i \hat{v}(A).
\end{equation}
Therefore we can replace $B$ by $B'$.

Let us first assume $|B \setminus A| = 1$.
Then,
we have $B = A \cup \lbrace j \rbrace$ with $j \in N \setminus (A \cup \lbrace i \rbrace)$
and (\ref{maxlbracksigma(A1,i)10hat}) is equivalent to 
\begin{equation}
\label{eqHatv(AUij)-Hatv(AUj)>=Hatv(AUi)-Hatv(A)}
\hat{v}(A \cup \lbrace i, j \rbrace) - \hat{v}(A \cup \lbrace j \rbrace)
\geq
\hat{v}(A \cup \lbrace i \rbrace) - \hat{v}(A).
\end{equation}
We can assume $j$ linked to $A$ by at least one edge,
otherwise (\ref{eqHatv(AUij)-Hatv(AUj)>=Hatv(AUi)-Hatv(A)}) is satisfied by superadditivity of $(N, \hat{v})$.
Let $A_1, A_2, \ldots, A_{p'}$ with $1 \leq p' \leq p$ be
the components of $A$ linked to $j$ by an edge.
Then,
the connected components of $B$ are $B_1, B_2, \ldots, B_{p-p'+1}$
with
$B_1 = \bigcup_{k=1}^{p'} A_k \cup \lbrace j \rbrace$,
and 
$B_l = A_{p'+l-1}$
for all $l$, $2 \leq l \leq p-p'+1$.
(\ref{eqHatv(AUij)-Hatv(AUj)>=Hatv(AUi)-Hatv(A)}) is equivalent to
\begin{equation}
\label{eqOverlinev(BUi)-Overlinev(B1)>=Overlinev(AUi)-Sumk=1p'Overlinev(Ak)}
\overline{v}(B \cup \lbrace i \rbrace) - \overline{v}(B_1)
\geq
\overline{v}(A \cup \lbrace i \rbrace) - \sum_{k=1}^{p'} \overline{v}(A_k).
\end{equation}
As $\bigcup_{k=1}^{p'} A_k \cup \lbrace i \rbrace$,
$A \cup \lbrace i \rbrace$,
$B_1 \cup \lbrace i \rbrace$,
and $B \cup \lbrace i \rbrace$ are connected,
the $\mathcal{F}$-convexity of $(N,\overline{v})$ implies
\begin{equation}
\label{v(BUi)-v(B1Ui)>=v(AUi)-v}
\overline{v}(B \cup \lbrace i \rbrace) + \overline{v}\left(\bigcup_{k=1}^{p'} A_k \cup \lbrace i \rbrace\right)
\geq
\overline{v}(A \cup \lbrace i \rbrace) + \overline{v}(B_1 \cup \lbrace i \rbrace).
\end{equation}
By~(\ref{v(BUi)-v(B1Ui)>=v(AUi)-v}),
(\ref{eqOverlinev(BUi)-Overlinev(B1)>=Overlinev(AUi)-Sumk=1p'Overlinev(Ak)}) is satisfied if
the following inequality is satisfied:
\begin{equation}
\label{eqOverlinev(B1Ui)-Overlinev(B1)>=Overlinev(Uk=1p'AkUi)-Sumk=1p'Overlinev(Ak)}
\overline{v}(B_1 \cup \lbrace i \rbrace) - \overline{v}(B_1)
\geq
\overline{v}\left(\bigcup_{k=1}^{p'} A_k \cup \lbrace i \rbrace\right) - \sum_{k=1}^{p'} \overline{v}(A_k).
\end{equation}
(\ref{eqOverlinev(B1Ui)-Overlinev(B1)>=Overlinev(Uk=1p'AkUi)-Sumk=1p'Overlinev(Ak)}) is equivalent to
\begin{equation}
\label{eqHatv(B1Ui)-Hatv(B1)>=Hatv(Uk=1p'AkUi)-Hatv(Uk=1p'Ak)}
\hat{v}(B_1 \cup \lbrace i \rbrace) - \hat{v}(B_1)
\geq
\hat{v}\left(\bigcup_{k=1}^{p'} A_k \cup \lbrace i \rbrace\right) - \hat{v} \left(\bigcup_{k=1}^{p'} A_k \right).
\end{equation}
Thus,
it is sufficient to prove that (\ref{eqHatv(B1Ui)-Hatv(B1)>=Hatv(Uk=1p'AkUi)-Hatv(Uk=1p'Ak)}) is satisfied
and we can assume each component of $A$ linked to $j$
and replace $p'$ by~$p$ and $B_1$ by $B$.

If $\sigma(B,i)< M(B,i)$,
then the Star condition implies the existence of a unique edge $e_0 \in E(B,i)$
with $w(e_0) = \sigma(B,i)$
and $w(e)=M(B,i)$ for all $e\in E(B,i) \setminus \lbrace e_0 \rbrace$.
Otherwise,
we have $w(e)=\sigma(B,i)=M(B,i)$ for all $e \in E(B,i)$.
As $p \geq 2$,
$E(A,i)$ contains at least two edges.
Hence,
we have $M(A,i) = M(B,i)$ in both cases.
Setting $M = M(A,i) = M(B,i)$,
we can always assume
\begin{equation}
\label{maxlbracksigma(A1,i)5}
\begin{gathered}
\sigma(B,i) \leq \sigma(A,i) = \sigma(A_1,i) \leq \sigma(A_2,i) = \ldots = \sigma(A_p,i) = M,
\end{gathered}
\end{equation}
after renumbering if necessary.
If $\sigma(A,i) < M$,
then we have $\sigma(A,i) = \sigma(B,i)$ and  $e_0 \in E(A_1,i)$.

We consider several cases.

\vspace{\baselineskip}
\noindent
\textbf{
Case 1
}
We assume $\sigma(A,i) < M$.
Then,
we also have $\sigma(B,i) < M$
and (\ref{maxlbracksigma(A1,i)5}) implies
\begin{equation}
\label{eqProofCase1-2Sigma(B1,i)<Sigma(Bk,i)=M(B,i)Forallk2<=k<=q}
\sigma(A,i) = \sigma(A_1,i) < \sigma(A_k,i) = M,\; \forall k,\; 2 \leq k \leq p.
\end{equation}
Moreover,
$e_0$ links $i$ to $A_1$ and
we necessarily have
\begin{equation}
\label{eqProofCase1Sigma(B,i)=Sigma(B1,i)=Sigma(A1,i)=Sigma(A,i)}
\sigma(B,i) = \sigma(A_1,i) = \sigma(A,i).
\end{equation}
As $\sigma(A,i) < M$,
Claim~\ref{itemLemmaIfSigma(B,i)<M(B,i)ThenM(B,i)<=Sigma(B2)<=Sigma(B3)} of Lemma~\ref{Ifsigma(B)<igma(B1,thenforall4} implies
\begin{equation}
\label{eqProofCase1-2Sigma(B)=Sigma(B1)<Sigma(B2)<=Sigma(B3)<=Sigma(Bq)}
M \leq  \sigma(A_k),\, \forall k,\, 2 \leq k \leq p, \textrm{ s.t. } |A_k| \geq 2.
\end{equation}

\noindent
\textbf{Subcase 1.1}
We assume
$\sigma(B) \leq \sigma(B,i)$.
Then,
we have
\begin{equation}
\label{eqProofSubcase1.1Sigma(BUi)=Sigma(B1)IfSigma(B)<Sigma(B,i),Sigma(B1)UeOOtherwise}
\Sigma(B \cup \lbrace i \rbrace) =
\left\lbrace 
\begin{array}{cl}
\Sigma(B) & \textrm{if } \sigma(B) < \sigma(B,i),\\
\Sigma(B) \cup \lbrace e_0 \rbrace & \textrm{if } \sigma(B) = \sigma(B,i).
\end{array}
\right.
\end{equation}
By~(\ref{eqProofCase1-2Sigma(B1,i)<Sigma(Bk,i)=M(B,i)Forallk2<=k<=q}) and (\ref{eqProofCase1Sigma(B,i)=Sigma(B1,i)=Sigma(A1,i)=Sigma(A,i)}),
we obtain 
\begin{equation}
\label{eqSigma(B1)<=Sigma(A1,i)<Sigma(Ak,i)=M,Forallk,2<=k<=p}
\sigma(B) \leq \sigma(A_1,i)< \sigma(A_k,i) = M,\; \forall k,\; 2 \leq k \leq p.
\end{equation}
As $E(A_1,i) \not= \emptyset$,
(\ref{eqSigma(B1)<=Sigma(A1,i)<Sigma(Ak,i)=M,Forallk,2<=k<=p})
and Lemma~\ref{lemminsAi>=sA>=sBorsA=sB>sAi} applied to $A_1 \subseteq B$ imply
\begin{equation}
\label{eqSigma(B1)<=Sigma(A1)<=Sigma(A1,i)<Sigma(Ak,i)=M,Forallk,2<=k<=p}
\sigma(B) \leq \sigma(A_1) \leq \sigma(A_1,i)< \sigma(A_k,i) = M,\, \forall k,\, 2 \leq k \leq p,\, \textrm{if}\, |A_1| \geq 2.
\end{equation}
Then,
(\ref{eqProofCase1-2Sigma(B)=Sigma(B1)<Sigma(B2)<=Sigma(B3)<=Sigma(Bq)}),
(\ref{eqSigma(B1)<=Sigma(A1,i)<Sigma(Ak,i)=M,Forallk,2<=k<=p}),
and (\ref{eqSigma(B1)<=Sigma(A1)<=Sigma(A1,i)<Sigma(Ak,i)=M,Forallk,2<=k<=p})
imply
\begin{equation}
\label{eqProofSubcase1.1Sigma(AUi)=Sigma(A1)IfSigma(A)<Sigma(A,i),Sigma(A1)UeOOtherwise}
\Sigma(A \cup \lbrace i \rbrace) =
\left\lbrace 
\begin{array}{cl}
\Sigma(A_1) & \textrm{if } |A_1| \geq 2 \textrm{ and } \sigma(A_1) < \sigma(A_1,i),\\
\Sigma(A_1) \cup \lbrace e_0 \rbrace & \textrm{if } |A_1| \geq 2 \textrm{ and }  \sigma(A_1) = \sigma(A_1,i),\\
\lbrace e_0 \rbrace & \textrm{if } |A_1| = 1.
\end{array}
\right.
\end{equation}
Let us assume $|A_1| \geq 2$.
Let $A_{1,1}, A_{1,2}, \ldots, A_{1,k_1}$
(resp. $B_{1}, B_{2}, \ldots, B_{l_1}$)
be the blocks of $\mathcal{P}_{\min}(A_1)$
(resp. $\mathcal{P}_{\min}(B)$)
connected to $i$ by an edge in $E(A_1, i)$
(resp. $E(B,i)$).
Let $J_{A_1}$
(resp. $J_{B}$)
be the set of indices $j \in \lbrace 1, \ldots, k_1 \rbrace$
(resp. $j \in \lbrace 1, \ldots, l_1 \rbrace$)
such that $E(A_{1,j},i)$
(resp.  $E(B_{j},i)$)
contains at least one edge
with weight strictly greater than $\sigma(A_1)$
(resp. $\sigma(B)$).
$J_{A_1}$
(resp. $J_B$)
may be empty.
Let $K_A$ be the set of indices $k \in \lbrace 2, \ldots, p \rbrace$.
Setting $\tilde{A} = \bigcup_{j \in J_{A_1}} A_{1,j} \cup \bigcup_{k \in K_A} A_k$
and $\tilde B = \bigcup_{j \in J_B} B_{j}$,
we have
\begin{equation}
\label{eqProofSubCase1.1Hatv(AUi)-Hatv(A)=v(TildeAUi)-Sum_i=1^l1v(A_1,i)-Sum_l=2^pOverlinev(Al)}
\Delta_i \hat{v}(A) =
v(\tilde A \cup \lbrace i \rbrace)
- \sum_{j \in J_{A_1}} v(A_{1,j})
- \sum_{k \in K_A} \overline{v}(A_k).
\end{equation}
and
\begin{equation}
\label{eqProofSubCase1.1Hatv(BUi)-Hatv(B)=v(TildeBUi)-Sum_i=1^l1v(B_1,i)-Sum_l=2^qOverlinev(Bl)}
\Delta_i \hat{v}(B) =
v(\tilde B \cup \lbrace i \rbrace) - \sum_{j \in J_B} v(B_{j}).
\end{equation}
Let us now assume $|A_1| = 1$.
Then,
we have $v(A_1) = \overline{v}(A_1) =  0$ and 
(\ref{eqProofSubCase1.1Hatv(AUi)-Hatv(A)=v(TildeAUi)-Sum_i=1^l1v(A_1,i)-Sum_l=2^pOverlinev(Al)}) is also satisfied
setting $J_{A_1} = \emptyset$.
By~(\ref{eqProofSubCase1.1Hatv(AUi)-Hatv(A)=v(TildeAUi)-Sum_i=1^l1v(A_1,i)-Sum_l=2^pOverlinev(Al)})
and (\ref{eqProofSubCase1.1Hatv(BUi)-Hatv(B)=v(TildeBUi)-Sum_i=1^l1v(B_1,i)-Sum_l=2^qOverlinev(Bl)}),
(\ref{maxlbracksigma(A1,i)10hat}) is equivalent to
\begin{multline}
\label{eqProofCase1-2v(BUi)-Sumi=1l1v(B1i)-Suml=2qv(Bl)>=v(AUi)-SumjInJv(A1j)-Suml=2qSumAkSubseteqBlv(Ak)}
v(\tilde B \cup \lbrace i \rbrace)
- \sum_{j \in J_B} v(B_{j})
\geq
v(\tilde A \cup \lbrace i \rbrace)
- \sum_{j \in J_{A_1}} v(A_{1,j})
- \sum_{k \in K_A} \overline{v}(A_k).
\end{multline}
By~(\ref{eqSigma(B1)<=Sigma(A1,i)<Sigma(Ak,i)=M,Forallk,2<=k<=p})
and (\ref{eqSigma(B1)<=Sigma(A1)<=Sigma(A1,i)<Sigma(Ak,i)=M,Forallk,2<=k<=p}),
Lemma~\ref{Lemmapartitionin2set2} applied to
$A_1 \cup \bigcup_{k \in K_A} A_k$ and $B$ implies
\begin{equation}
\label{eqA1,jSubseteqB1j,ForalljInJA}
A_{1,j} \subseteq B_{j},\, \forall j \in J_{A_1},
\end{equation}
and
\begin{equation}
\label{eq|Ak|=1AndAkSubsetqB1k,ForammkInKA}
|A_j| = 1 \textrm{ and } A_j \subseteq B_{j},\, \forall j \in K_A,
\end{equation}
with $J_{A_1} \cup K_A \subseteq J_B$ and $J_{A_1} \cap K_A = \emptyset$,
after renumbering if necessary.
(\ref{eqA1,jSubseteqB1j,ForalljInJA}) and (\ref{eq|Ak|=1AndAkSubsetqB1k,ForammkInKA}) imply
$\tilde{A} \subseteq \tilde{B}$ and
\begin{equation}
\label{eqACapB1j=A1jForalljInJA,AjForalljInKA,EmptysetForalljInOverlineKA}
\tilde A \cap B_{j}=
\left \lbrace
\begin{array}{cl}
A_{1,j} & \forall j \in J_{A_1},\\
A_j & \forall j \in K_A,\\
\emptyset & \forall j \in J_B \setminus (J_{A_1} \cup K_A).
\end{array}
\right.
\end{equation}
As $\tilde A \subseteq \tilde B$, the convexity of $(N,v)$ implies
\begin{equation}
\label{eqProofCase1-2v(TildeBUi)-v(TildeB)>=v(TildeAUi)-v(TildeA)}
v(\tilde B \cup \lbrace i \rbrace) - v(\tilde B)
\geq
v(\tilde A \cup\lbrace i \rbrace) - v (\tilde A).
\end{equation}
Applying Lemma~\ref{lemmavB+i=1pvBiA}
with $\mathcal{F}= 2^{N} \setminus \lbrace \emptyset \rbrace$
to the subsets $\tilde{A}$, $\tilde{B}$,
and to the partition 
$
\lbrace \lbrace B_{j} \mid j \in J_B \rbrace \rbrace
$
of $\tilde B$,
we obtain
\begin{equation}
\label{eqProofCase1-2v(B)-Sumj=1l1v(B_1j)-Suml=2qSum_FInPmin(Bl)v(F)>=v(A)-Sumj=1l1v(ACapB_1j)-Suml=2qSumFInPmin(Bl)v(ACapF)}
v(\tilde B)
- \sum_{j \in J_B} v(B_{j})
\geq
v(\tilde A)
- \sum_{j \in J_B} v(\tilde A \cap B_{j}).
\end{equation}
By~(\ref{eq|Ak|=1AndAkSubsetqB1k,ForammkInKA}),
we have $|A_j| = 1$ and therefore $v(A_j) = \overline{v}(A_j)$ for all $j$ in $K_A$.
Then,
(\ref{eqACapB1j=A1jForalljInJA,AjForalljInKA,EmptysetForalljInOverlineKA}) implies
\begin{equation}
\label{eqProofCase1-2Sum_j=1^l1v(TildeACapB_1,j)=Sum_jInJv(TildeA_1,j)}
\sum_{j \in J_B} v(\tilde A \cap B_{j})=
\sum_{j \in J_{A_1}} v(A_{1,j}) + \sum_{k \in K_A} \overline{v}(A_k).
\end{equation}
(\ref{eqProofCase1-2v(TildeBUi)-v(TildeB)>=v(TildeAUi)-v(TildeA)}),
(\ref{eqProofCase1-2v(B)-Sumj=1l1v(B_1j)-Suml=2qSum_FInPmin(Bl)v(F)>=v(A)-Sumj=1l1v(ACapB_1j)-Suml=2qSumFInPmin(Bl)v(ACapF)}),
and (\ref{eqProofCase1-2Sum_j=1^l1v(TildeACapB_1,j)=Sum_jInJv(TildeA_1,j)})
imply (\ref{eqProofCase1-2v(BUi)-Sumi=1l1v(B1i)-Suml=2qv(Bl)>=v(AUi)-SumjInJv(A1j)-Suml=2qSumAkSubseteqBlv(Ak)}).

\vspace{\baselineskip}
\noindent
\textbf{Subcase 1.2}
We now assume
$\sigma(B,i)<\sigma(B)$.
(\ref{eqProofCase1Sigma(B,i)=Sigma(B1,i)=Sigma(A1,i)=Sigma(A,i)}) implies
\begin{equation}
\label{eqSigma(B,i)=Sigma(B1,i)=Sigma(A1,i)=Sigma(A,i)<Sigma(B)<=Sigma(A)}
\sigma(B,i) = \sigma(A_1,i) = \sigma(A,i) < \sigma(B).
\end{equation}
We get $\Sigma(B \cup \lbrace i \rbrace) = \Sigma(A \cup \lbrace i \rbrace) = \lbrace e_0 \rbrace$.
As $p \geq 2$,
$E(B,i) \setminus \lbrace e_0 \rbrace$ contains at least one edge.
We obtain
\begin{equation}
\label{eqProofCase1v(BCupi)-v(B)=v(BCupi)-Sum_l=1^qv(B_l)}
\Delta_i \hat{v}(B) =
v(B \cup \lbrace i \rbrace) - \overline{v}(B).
\end{equation}
Setting $\tilde A = \bigcup_{j=2}^{p} A_j$,
we have
\begin{equation}
\label{maxlbracksigma(A1,ihatB)10hat65}
\Delta_i \hat{v}(A) =
\left \lbrace
\begin{array}{ll}
v(A \cup \lbrace i \rbrace) - \sum\limits_{k=1}^{p} \overline{v}(A_{k}) & \textrm{if}\ E(A_1,i) \setminus \lbrace e_0 \rbrace \not= \emptyset,\\
v(A_1) + v(\tilde A \cup \lbrace i \rbrace) - \sum\limits_{k=1}^{p} \overline{v}(A_{k}) & \textrm{otherwise}.
\end{array}
\right.
\end{equation}
The superadditivity of $(N,v)$ implies
\begin{equation}
\label{eqv(AUi)>=v(A1)+v(TildeACupi)}
v(A \cup \lbrace i \rbrace) \geq v(A_1) + v(\tilde{A} \cup \lbrace i \rbrace).
\end{equation}
By~(\ref{eqProofCase1v(BCupi)-v(B)=v(BCupi)-Sum_l=1^qv(B_l)}), (\ref{maxlbracksigma(A1,ihatB)10hat65}),
and~(\ref{eqv(AUi)>=v(A1)+v(TildeACupi)}),
(\ref{maxlbracksigma(A1,i)10hat}) is satisfied if the following inequality is satisfied:
\begin{equation}
\label{eqProofCase4v(BCupi)-Sum_l=1^qv(B_l)>=v(ACupi)-Sum_k=1^pv(A_k)}
v(B \cup \lbrace i \rbrace) - \overline{v}(B) \geq
v(A \cup \lbrace i \rbrace) - \sum_{k=1}^{p} \overline{v}(A_{k}).
\end{equation}
By~(\ref{eqProofCase1-2Sigma(B1,i)<Sigma(Bk,i)=M(B,i)Forallk2<=k<=q})
and~(\ref{eqSigma(B,i)=Sigma(B1,i)=Sigma(A1,i)=Sigma(A,i)<Sigma(B)<=Sigma(A)}),
Lemma~\ref{lemmaPathStarCyclePanSum_FInP_Min(B)v(ACapF)=Sum_k=1^mv(A_k)=v(A)} implies
\begin{equation}
\label{eqSumFInPMin(B1)|TildeA1v(F)=SumAkSubseteqB1Overlinev(Ak)}
\sum_{F \in \mathcal{P}_{\min}(B)_{|A}} v(F)=
\sum_{k=1}^{p} \overline{v}(A_k).
\end{equation}
The convexity of $(N,v)$ implies
\begin{equation}
\label{eqProofCase4v(BCupi)-v(B)>=v(ACupi)-v(A)}
v(B \cup \lbrace i \rbrace) - v(B) \geq v(A \cup \lbrace i \rbrace) - v(A). 
\end{equation}
Applying Lemma~\ref{lemmavB+i=1pvBiA}
with $\mathcal{F}= 2^{N} \setminus \lbrace \emptyset \rbrace$
to $A$, $B$,
and to the partition 
$
\lbrace
\mathcal P_{\min}(B)
\rbrace
$
of $B$,
we obtain
\begin{equation}
\label{eqProofCase4v(B)-Sum_l=1^qv(B_l)>=v(A)-Sum_l=1^qSum_FInP_Min(B_l)v(ACapF)}
v(B) - \overline{v}(B)
\geq
v(A) - \sum_{F \in \mathcal P_{\min}(B)_{|A}} v(F).
\end{equation}
By~(\ref{eqSumFInPMin(B1)|TildeA1v(F)=SumAkSubseteqB1Overlinev(Ak)}),
(\ref{eqProofCase4v(B)-Sum_l=1^qv(B_l)>=v(A)-Sum_l=1^qSum_FInP_Min(B_l)v(ACapF)})
is equivalent to
\begin{equation}
\label{eqProofCase4v(B)-Sum_l=1^qv(B_l)>=v(A)-Sum_l=1^qSum_A_kSubseteqB_lv(A_k)}
v(B) - \overline{v}(B)
\geq
v(A) - \sum_{k=1}^{p} \overline{v}(A_{k}).
\end{equation}
(\ref{eqProofCase4v(BCupi)-v(B)>=v(ACupi)-v(A)})
and (\ref{eqProofCase4v(B)-Sum_l=1^qv(B_l)>=v(A)-Sum_l=1^qSum_A_kSubseteqB_lv(A_k)})
imply (\ref{eqProofCase4v(BCupi)-Sum_l=1^qv(B_l)>=v(ACupi)-Sum_k=1^pv(A_k)}).

\vspace{\baselineskip}
\noindent
\textbf{
Case 2
}
We assume  $\sigma(A,i) = M$.
If $|A_k|=1$ for all $k$ with $1 \leq k \leq p$,
then we have $\Sigma(A \cup \lbrace i \rbrace) = E(A,i)$.
We get $\Delta_i \hat{v}(A) = \hat{v}(\lbrace i \rbrace)$
and (\ref{maxlbracksigma(A1,i)10hat}) is satisfied by the superadditivity of $(N,\hat{v})$.
We henceforth assume that there is at least one index $k$
such that $|A_k| \geq 2$.

\vspace{\baselineskip}
\noindent
\textbf{
Subcase 2.1
}
We assume $\sigma(A,i) = M \leq \sigma(A)$.
By~(\ref{maxlbracksigma(A1,i)5}),
we get
\begin{equation}
\label{eqSigma(Ak,i)=M,Forallk,1<=k<=p}
\sigma(A_k,i) = M,\; \forall k,\; 1 \leq k \leq p,
\end{equation}
and
\begin{equation}
\label{eqSigma(Ak,i)=M<=Sigma(Ak),Forallk,1<=k<=p,s.t.|Ak|>=2}
\sigma(A_k,i) = M \leq \sigma(A_k),\; \forall k,\; 1 \leq k \leq p, \, \textrm{ s.t. } |A_k| \geq 2.
\end{equation}
(\ref{eqSigma(Ak,i)=M,Forallk,1<=k<=p}) and (\ref{eqSigma(Ak,i)=M<=Sigma(Ak),Forallk,1<=k<=p,s.t.|Ak|>=2}) imply
$\Sigma(A \cup \lbrace i \rbrace)=
E(A,i) \cup
\bigcup_{k \, |\, \sigma(A_k) = M}
\Sigma(A_k)$.
This implies
\begin{equation}
\label{eqDeltaiHatv(A)=Sumk:Sigma(Ak)>Mv(Ak)-Overlinev(Ak)}
\Delta_i \hat{v}(A) =
\sum_{\lbrace k \,:\, \sigma(A_k)>M \rbrace} v(A_k) - \overline{v}(A_k).
\end{equation}
We can assume that there is at least one index $k$ such that $\sigma(A_k) > M$,
otherwise we have $\Delta_i \hat{v}(A) = 0$ and 
(\ref{maxlbracksigma(A1,i)10hat}) is trivially satisfied by superadditivity of $(N,\hat{v})$.
By~Lemma~\ref{lemminsAi>=sA>=sBorsA=sB>sAi} and (\ref{eqSigma(Ak,i)=M<=Sigma(Ak),Forallk,1<=k<=p,s.t.|Ak|>=2}),
we get
\begin{equation}
\label{eqSigma(Ak,i)=M<Sigma(Ak)=Sigma(Bl),ForallAkSubseteqBlWith|Ak|>=2}
\sigma(A_k,i) = M < \sigma(A_k) = \sigma(B),\; \forall k,\; 1 \leq k \leq p, \, \textrm{ s.t. }  \sigma(A_k) > M.
\end{equation}
By~(\ref{eqSigma(Ak,i)=M<Sigma(Ak)=Sigma(Bl),ForallAkSubseteqBlWith|Ak|>=2}),
we can henceforth assume $\sigma(B) > M$.
If $\sigma(B,i)<M$
(resp. $\sigma(B,i)=M$),
then we have $\Sigma(B \cup \lbrace i \rbrace) = \lbrace e_0 \rbrace$
(resp. $\Sigma(B \cup \lbrace i \rbrace) = E(B,i)$).
As $\sigma(A,i)=M$,
we necessarily have $e_0 = \lbrace i, j \rbrace$.
We get
\begin{equation}
\label{eqDeltaiHatv(B)=v(BUi)-Overlinev(B)IfSigma(B,i)<Mv(B)-Overlinev(B)IfSigma(B,i)=M}
\Delta_i \hat{v}(B) =
\left \lbrace
\begin{array}{ll}
v(B \cup \lbrace i \rbrace) - \overline{v}(B) & \textrm{ if } \sigma(B,i)<M,\\
v(B) - \overline{v}(B) & \textrm{ if } \sigma(B,i)=M.
\end{array}
\right.
\end{equation}
The superadditivity of $(N,v)$ implies
\begin{equation}
\label{eqProofCase1-2Hatv(BUi)-Hatv(B)>=Sum_l:Sigma(Bl)>Mv(Bl)-Overlinev(Bl)}
v(B \cup \lbrace i \rbrace) - \overline{v}(B)
\geq
v(B) - \overline{v}(B).
\end{equation}
By~(\ref{eqDeltaiHatv(A)=Sumk:Sigma(Ak)>Mv(Ak)-Overlinev(Ak)}),
(\ref{eqDeltaiHatv(B)=v(BUi)-Overlinev(B)IfSigma(B,i)<Mv(B)-Overlinev(B)IfSigma(B,i)=M}),
and~(\ref{eqProofCase1-2Hatv(BUi)-Hatv(B)>=Sum_l:Sigma(Bl)>Mv(Bl)-Overlinev(Bl)}),
(\ref{maxlbracksigma(A1,i)10hat}) is satisfied if the following inequality is satisfied:
\begin{equation}
\label{eqProofCase1-2Sum_l:Sigma(Bl)>Mv(Bl)-Overlinev(Bl)>=Sum_l:Sigma(Bl)>MSum_AkSubseteqBlv(Ak)-Overlinev(Ak)}
v(B) - \overline{v}(B)
\geq
\sum_{\lbrace k \,:\, \sigma(A_k)>M \rbrace} v(A_k) - \overline{v}(A_k).
\end{equation}
Applying Lemma~\ref{lemmavB+i=1pvBiA}
with $\mathcal{F}= 2^{N} \setminus \lbrace \emptyset \rbrace$
to the subsets $\tilde A=\bigcup_{\lbrace k \,:\, \sigma(A_k)>M \rbrace} A_k$ and $B$,
and to the partition $\mathcal P_{\min}(B)$ of $B$,
we obtain
\begin{equation}
\label{eqProofCase1-2v(Bl)-Overlinev(Bl)>=v(TildeA1)-Sum_FInP_min(Bl)v(TildeAlCapF)}
v(B) - \overline{v}(B)
\geq
v(\tilde A) - \sum_{F\in\mathcal P_{\min}(B)_{|\tilde A}} v(F).
\end{equation}
By superadditivity of $(N,v)$,
(\ref{eqProofCase1-2v(Bl)-Overlinev(Bl)>=v(TildeA1)-Sum_FInP_min(Bl)v(TildeAlCapF)}) implies
\begin{equation}
\label{eqProofCase1-2v(Bl)-Overlinev(Bl)>=Sum_AkSubseteqBlv(Ak)-Sum_FInP-min(Bl)v(TildeAlCapF}
v(B) - \overline{v}(B)
\geq
\sum_{\lbrace k \,:\, \sigma(A_k)>M \rbrace} v(A_k) - \sum_{F\in\mathcal P_{\min}(B)_{|\tilde A}} v(F).
\end{equation}
As $\sigma(B) > M$,
there exists an edge $e \in E$ connected to $B$
with weight $w(e) < \sigma(B)$.
Then,
as (\ref{eqSigma(Ak,i)=M<Sigma(Ak)=Sigma(Bl),ForallAkSubseteqBlWith|Ak|>=2}) is satisfied,
Lemma~\ref{lemmaPathStarCyclePanSum_FInP_Min(B)v(ACapF)=Sum_k=1^mv(A_k)=v(A)}
applied to $\tilde{A}$ and $B$
implies
\begin{equation}
\label{eqSumFInPmin(Bl)|TildeAlv(F)=SumAkSubseteqBlOverlinev(Ak)-bis}
\sum_{F \in \mathcal{P}_{\min}(B)_{|\tilde{A}}} v(F)=
\sum_{\lbrace k \,:\, \sigma(A_k)>M \rbrace} \overline{v}(A_k).
\end{equation}
(\ref{eqProofCase1-2v(Bl)-Overlinev(Bl)>=Sum_AkSubseteqBlv(Ak)-Sum_FInP-min(Bl)v(TildeAlCapF})
and (\ref{eqSumFInPmin(Bl)|TildeAlv(F)=SumAkSubseteqBlOverlinev(Ak)-bis})
imply (\ref{eqProofCase1-2Sum_l:Sigma(Bl)>Mv(Bl)-Overlinev(Bl)>=Sum_l:Sigma(Bl)>MSum_AkSubseteqBlv(Ak)-Overlinev(Ak)})
and therefore (\ref{maxlbracksigma(A1,i)10hat}) is satisfied.

\vspace{\baselineskip}
\noindent
\textbf{
Subcase 2.2
}
We assume $\sigma(A) < \sigma(A,i) = M$.\\
Let us assume $\sigma(B,i) < M$.
Then,
$e_0 = \lbrace i, j \rbrace$ is the only edge in $E(B,i)$
with $w(e_0) = \sigma(B,i)$
and $w(e)=M$ for all $e\in E(B,i) \setminus \lbrace e_0 \rbrace$.
We can assume w.l.o.g. $\sigma(A_1) = \sigma(A)$.
Let us consider an edge $e_1 = \lbrace i, j_1 \rbrace$ with $j_1 \in A_1$ and an edge $\tilde{e}_1$ in $\Sigma(A_1)$. 
Let $\gamma$ be a shortest path in $G_{A_1}$ linking $j_1$ to an end-vertex of $\tilde{e}_1$
($\gamma$ may be reduced to $j_1$).
Then the Path condition applied to $\lbrace e_0 \rbrace \cup \lbrace e_1 \rbrace \cup \gamma \cup \lbrace \tilde{e}_1 \rbrace$
implies $M = w_1 \leq \max (w(e_0),w(\tilde{e}_1))$,
a contradiction.
Thus,
we necessarily have $\sigma(B,i) = \sigma(A,i) = M$
and (\ref{maxlbracksigma(A1,i)5}) implies
\begin{equation}
\label{eqCase3sigma(B,i)=sigma(B_1,i)=sigma(B_2,i)=sigma(B_q,i)=MIdemForA}
\sigma(B,i) = \sigma(A,i) = \sigma(A_1,i) = \cdots = \sigma(A_p,i) = M.
\end{equation}
Moreover,
as $\sigma(B) \leq \sigma(A) < M$,
Claim~\ref{itemLemmaIfSigma(B,i)=M(B,i)And|B1|>=2,Thenmax(Sigma(B1),M(B,i))<=Sigma(B2)}
of Lemma~\ref{Ifsigma(B)<igma(B1,thenforall4} implies
\begin{equation}
\label{eqProofCase3sigma(B)=sigma(B1)<M=sigmaB2<=sigma(Bq)IdemForA}
\sigma(A) = \sigma(A_1) < M \leq \sigma(A_k),\, \forall k,\, 2 \leq k \leq p, \textrm{ s.t. } |A_k| \geq 2,
\end{equation}
with $|A_1| \geq 2$,
after renumbering if necessary.
As $\sigma(A_1) < M$,
(\ref{eqCase3sigma(B,i)=sigma(B_1,i)=sigma(B_2,i)=sigma(B_q,i)=MIdemForA}) implies
\begin{equation}
\label{eqSigma(B)<=Sigma(A1)<M=Sigma(Ak,i),Forallk,1<=k<=p}
\sigma(B) \leq \sigma(A_1) < M = \sigma(A_k,i),\, \forall k,\, 1 \leq k \leq p.
\end{equation}
(\ref{eqProofCase3sigma(B)=sigma(B1)<M=sigmaB2<=sigma(Bq)IdemForA})
and (\ref{eqSigma(B)<=Sigma(A1)<M=Sigma(Ak,i),Forallk,1<=k<=p}) imply 
$\Sigma(A\cup \lbrace i \rbrace) = \Sigma(A_1)$
and $\Sigma(B\cup \lbrace i \rbrace) = \Sigma(B)$.
Let $A_{1,1}, A_{1,2}, \ldots, A_{1,k_1}$
(resp. $B_{1}, B_{2}, \ldots, B_{l_1}$) be the blocks of
$\mathcal{P}_{\min}(A_1)$
(resp. $\mathcal{P}_{\min}(B)$)
linked to $i$ by an edge in $E(A_1, i)$
(resp. $E(B, i)$).
Let us set $J_{A_1} = \lbrace 1, \ldots, k_1 \rbrace$ and $J_B= \lbrace 1, \ldots, l_1\rbrace$.
Let $K_A$ be the set of indices $k \in \lbrace 2, \ldots, p \rbrace$.
Setting
$\tilde A = \bigcup_{j \in J_{A_1}} A_{1,j} \cup \bigcup_{k \in K_A} A_k$
and
$\tilde B = \bigcup_{j \in J_B} B_{j}$,
we get
\begin{equation}
\label{EqProofCase7.2v(ACupi)-v(A)=v(TildeACupi)-Sum_i=1^k_1v(A_1,i)-Sum_k=2^pv(A_k)}
\Delta_i \hat{v}(A) =
v(\tilde A \cup\lbrace i \rbrace)
- \sum_{j \in J_{A_1}} v(A_{1,j})
- \sum_{k \in K_A} \overline{v}(A_{k}),
\end{equation}
and
\begin{equation}
\label{eqProofCase7.2Hatv(BCupi)-Hatv(B)=v(TildeBCupi)-Sum_i=1^l_1v(B_1,i)-Sum_l=2^qv(B_l)}
\Delta_i \hat{v}(B) =
v(\tilde B \cup \lbrace i \rbrace)
- \sum_{j \in J_B} v(B_{j}).
\end{equation}
By~(\ref{eqSigma(B)<=Sigma(A1)<M=Sigma(Ak,i),Forallk,1<=k<=p}),
Lemma~\ref{Lemmapartitionin2set2} applied to $A_1 \cup \bigcup_{k \in K_A} A_k$ and $B$
implies $|J_{A_1}| = |K_A| = 1$, and
\begin{equation}
\label{eqA1,1SubseteqB11}
A_{1,1} \subseteq B_{1},
\end{equation}
and
\begin{equation}
\label{eq|A2|=1AndA2SubseteqB12,IfKANot=Emptyset}
|A_2| = 1 \textrm{ and } A_2 \subseteq B_{2},
\end{equation}
after renumbering if necessary.
(\ref{eqA1,1SubseteqB11}) and (\ref{eq|A2|=1AndA2SubseteqB12,IfKANot=Emptyset}) correspond
exactly to (\ref{eqA1,jSubseteqB1j,ForalljInJA}) and (\ref{eq|Ak|=1AndAkSubsetqB1k,ForammkInKA})
and we can conclude as in Subcase~1.1.\\

We can finally prove that (\ref{maxlbracksigma(A1,i)10hat}) is satisfied by induction on $|B \setminus A|$.
By the previous reasoning,
it is true if $|B \setminus A| = 1$.
If $|B \setminus A| \geq 2$,
then we can select a vertex $j \in B \setminus A$
and set $B' = B \setminus \lbrace j \rbrace$.
Applying (\ref{maxlbracksigma(A1,i)10hat})
to $B' \subseteq B \subseteq N \setminus \lbrace i \rbrace$,
we get
\begin{equation}
\label{eqDeltaiHatv(B'Uj)>=DeltaiHatv(B')}
\Delta_i \hat{v}(B) \geq \Delta_i \hat{v}(B').
\end{equation}
By induction,
we also have
\begin{equation}
\label{eqDeltaiHatv(B')>=DeltaiHatv(A)}
\Delta_i \hat{v}(B') \geq \Delta_i \hat{v}(A).
\end{equation}
(\ref{eqDeltaiHatv(B'Uj)>=DeltaiHatv(B')}) and (\ref{eqDeltaiHatv(B')>=DeltaiHatv(A)})
imply (\ref{maxlbracksigma(A1,i)10hat}).
\end{proof}

%%  CONCLUSION:
\section{Conclusion}
\label{Conclusion}
Our main result gives a characterization of the weighted graphs
satisfying inheritance of convexity for the correspondence $\tilde{\mathcal{P}}_{\min}$.
This characterization implies strong restrictions on the structure of the underlying graph
and the relative positions of edge-weights.
But it also highlights
that
the class of graphs satisfying inheritance of convexity with $\tilde{\mathcal{P}}_{\min}$
is much larger
than the one satisfying inheritance of convexity with $\mathcal{P}_{\min}$.
Indeed,
the characterization of inheritance of convexity for the correspondence $\mathcal{P}_{\min}$
obtained by~\cite{Skoda2019b}
implies a restriction of the edge-weights to at most three different values
and cycle-completeness of large subgraphs is required.
In contrast,
inheritance of convexity for the correspondence $\tilde{\mathcal{P}}_{\min}$
allows an arbitrary number of edge-weights.
Moreover,
the reinforced conditions on cycles and pans
imply the existence of some specific chords in some cycles
but cycle-completeness does not come into play.
\cite{Skoda2019b} established that inheritance of convexity for $\mathcal{P}_{\min}$
can be verified in polynomial time.
It would be interesting to investigate the complexity of the problem
of deciding whether inheritance of convexity is satisfied for the correspondence $\tilde{\mathcal{P}}_{\min}$.
It is already known that inheritance of $\mathcal{F}$-convexity
can be checked in polynomial time (\cite{Skoda2016-3}).
Therefore,
the first five conditions of our characterization can also be checked in polynomial time.
It can be easily seen
that the Reinforced Adjacent Cycles condition can be checked in polynomial time
as we only have to consider cycles of size $4$ or $5$.
It remains to study the complexity of checking the Reinforced Cycle and the Reinforced Pan conditions.
It seems more difficult as there is no obvious limitation on the number or size of paths and cycles
involved in these last conditions.

%%  Acknowledgment
% \section*{Acknowledgments}
%% The author is grateful to Michel Grabisch
%% for valuable comments and suggestions.
% The support of the National Research Agency, 
% through MINT (Models of Influence and Network Theory) project,
% Reference: ANR-09-BLAN-0321-01, is gratefully acknowledged.
% The author wishes to thank anonymous referees whose comments permitted to improve the presentation of the paper.

\bibliographystyle{apalike}
\bibliography{biblio}

\begin{thebibliography}{}

\bibitem[Algaba, 1998]{Algaba98}
Algaba, E. (1998).
\newblock {\em Extensi\'on de juegos definidos en sistemas de conjuntos}.
\newblock PhD thesis, Univ. of Seville.

\bibitem[Algaba et~al., 2000]{AlgabaBilbaoBormLopez2000}
Algaba, E., Bilbao, J., Borm, P., and Lopez, J. (2000).
\newblock The position value for union stable systems.
\newblock {\em Mathematical Methods of Operations Research}, 52:221--236.

\bibitem[Algaba et~al., 2001]{AlgabaBilbaoLopez2001}
Algaba, E., Bilbao, J., and Lopez, J. (2001).
\newblock A unified approach to restricted games.
\newblock {\em Theory and Decision}, 50(4):333--345.

\bibitem[Bilbao, 2000]{Bilbao2000}
Bilbao, J.~M. (2000).
\newblock {\em Cooperative games on combinatorial structures}.
\newblock Kluwer Academic Publishers, Boston.

\bibitem[Bilbao, 2003]{Bilbao2003}
Bilbao, J.~M. (2003).
\newblock Cooperative games under augmenting systems.
\newblock {\em SIAM Journal on Discrete Mathematics}, 17(1):122--133.

\bibitem[Faigle, 1989]{Faigle89}
Faigle, U. (1989).
\newblock Cores of games with restricted cooperation.
\newblock {\em ZOR - Methods and Models of Operations Research},
  33(6):405--422.

\bibitem[Faigle et~al., 2010]{FaigleGrabisch2010}
Faigle, U., Grabisch, M., and Heyne, M. (2010).
\newblock Monge extensions of cooperation and communication structures.
\newblock {\em European Journal of Operational Research}, 206(1):104--110.

\bibitem[Grabisch, 2013]{Grabisch2013}
Grabisch, M. (2013).
\newblock The core of games on ordered structures and graphs.
\newblock {\em Annals of Operations Research}, 204(1):33--64.

\bibitem[Grabisch and Skoda, 2012]{GrabischSkoda2012}
Grabisch, M. and Skoda, A. (2012).
\newblock Games induced by the partitioning of a graph.
\newblock {\em Annals of Operations Research}, 201(1):229--249.

\bibitem[Myerson, 1977]{Myerson77}
Myerson, R. (1977).
\newblock Graphs and cooperation in games.
\newblock {\em Mathematics of Operations Research}, 2(3):225--229.

\bibitem[Owen, 1986]{Owen86}
Owen, G. (1986).
\newblock Values of graph-restricted games.
\newblock {\em SIAM J. Algebraic Discrete Methods}, 7(2):210--220.

\bibitem[Skoda, 2016]{Skoda2016-3}
Skoda, A. (2016).
\newblock Complexity of inheritance of $\mathcal{F}$-convexity for restricted
  games induced by minimum partitions.
\newblock {\em RAIRO - Operations Research}.

\bibitem[Skoda, 2017a]{Skoda201702}
Skoda, A. (2017a).
\newblock Inheritance of convexity for partition restricted games.
\newblock {\em Discrete Optimization}, 25:6 -- 27.

\bibitem[Skoda, 2017b]{Skoda2019b}
Skoda, A. (2017b).
\newblock Inheritance of convexity for the $\mathcal{P}_{\min}$-restricted
  game.
\newblock {\em CoRR}, abs/1708.02675.

\bibitem[Skoda, 2019]{Skoda2017}
Skoda, A. (2019).
\newblock Convexity of graph-restricted games induced by minimum partitions.
\newblock {\em RAIRO - Operations Research}, 53(3):841--866.

\bibitem[Slikker, 2000]{Slikker2000}
Slikker, M. (2000).
\newblock Inheritance of properties in communication situations.
\newblock {\em Int. J. Game Theory}, 29(2):241–268.

\bibitem[van~den Nouweland and Borm, 1991]{NouwelandandBorm1991}
van~den Nouweland, A. and Borm, P. (1991).
\newblock On the convexity of communication games.
\newblock {\em International Journal of Game Theory}, 19(4):421--30.

\end{thebibliography}

\end{document}